\newtheorem{Proposition}{Proposition}
  \newtheorem{Remark}{Remark}
  \newtheorem{Corollary}[Proposition]{Corollary}
  \newtheorem{Lemma}[Proposition]{Lemma}
  \newtheorem{Theorem}{Theorem}
 \newtheorem{Definition}[Proposition]{Definition}
\newcommand{\be}{\begin{equation}}
\newcommand{\ee}{\end{equation}}
\def\blackslug{\hbox{\hskip 1pt \vrule width 4pt height 8pt depth 1.5pt
\hskip 1pt}}
\def\qed{\quad\blackslug\lower 8.5pt\null\par}
\def\Re{\mathrm{Re}}
\def\Im{\mathrm{Im}}
\author{S. Tanveer$^1$} \address{$^1$ Mathematics Department\\The Ohio State University\\Columbus, OH 43210 } 
\title{Analytical approximation for 2-D nonlinear periodic deep Water Waves}
\begin{document}
$ $ \vskip -0.2cm
\begin{abstract}
  A recently developed method \cite{Costinetal}, \cite{Dubrovin}, \cite{Blasius} 
  has been extended to an nonlocal equaton arising in steady water
  wave propagation in two dimensions. We obtain  
  analytic approximation of steady
  water wave solution in two dimensions 
  with rigorous error bounds for a set of parameter values
  that correspond to heights slightly
  smaller than the critical. The wave shapes are shown to
  be analytic. 
  The method presented is quite general and does not
  assume smallness of wave height or steepness and can be 
  readily extended to other interfacial problems involving
  Laplace's equation. 
  \end{abstract}

\vskip -2cm
\maketitle

\today

\section{Introduction}

Recently \cite{Costinetal}, \cite{Dubrovin}, \cite{Blasius}, 
a method has been developed
for study of nonlinear differential equations 
where strong nonlinearity can be
reduced to weakly nonlinear analysis even when
the problem has no natural perturbation parameter.
The idea is quite natural: 
consider an equation in the form $\mathcal{N} [u] =0$, where
$\mathcal{N}$ is some nonlinear operator in some suitable function space. 
A crucial part of this process is to determine
a quasi-solution 
$u_0$ so that $R=\mathcal{N} [u_0]$ is small in an appropriate norm 
and $u_0$ comes close to satisfying appropriate initial and/or boundary conditions.
Then, proving that there exists solution 
$u$ satisfying $\mathcal{N} [u]=0$ is equivalent to showing
that $E=u-u_0$ satisfies appropriately small initial/boundary conditions and
$\mathcal{L} [E] = -R - \mathcal{N}_1 [E]$, 
where the linear operator
$\mathcal{L}$ is the Fre'chet derivative $\mathcal{N}_{u}$ at
$u=u_0$ and $\mathcal{N}_1 [E]= \mathcal{N} [ u_0 + E ] - 
\mathcal{N} [u_0]- \mathcal{L} [ E]$ 
contains only nonlinear terms. 
When $\mathcal{L}$ is suitably invertible subject to initial/boundary conditions
and the nonlinearity $\mathcal{N}_1 $ sufficiently regular, then
standard contraction mapping 
provides a rigorous proof of existence of solution 
to the weakly nonlinear problem for $E$. 
Thus existence of solution to original problem $\mathcal{N} [u]=0$ is
shown, while 
at the same time a rigorous error bound on $u-u_0$ is obtained. An added benefit to this method relative
to abstract nonconstructive methods for proving solutions is that one obtains a concrete
expression for
the approximate solution $u_0$.  
The only non-standard part of this program is to come up with good candidates for quasi-solution
$u_0$. In previous studies 
\cite{Costinetal}, \cite{Dubrovin}, \cite{Blasius}, this has involved
application of classical orthogonal polynomial approximations in
finite domains coupled with exponential asymptotic approach in its complement when
domains extend to $\infty$.

In the present paper, we show that the quasi-solution method can be extended to a 
nonlinear integral equation arising in propagation of steady two dimensional
deep water waves of finite amplitude for a set of values in a range of wave heights.
We provide accurate efficient
representaton for
water waves and at the same time provide rigorous error bounds 
for these approximations.
The literature for water waves is quite extensive and goes back two centuries involving
some of the best known mathematicians Laplace, Langrange, Cauchy, Poisson, Airy, Stokes and
many others (see, for instance, a recent review \cite{Strauss}).
There are many aspects of the water wave problem; these include steady state analysis, 
linear and nonlinear stability of these states, 
the initial value problem and long time behavior. There is also much interest in
finite depth wave propagation and in particular limiting cases when KdV or
Boussinesq models are valid. There is also interest in waves in the presence of shear and other
variants that arise in modeling wind-water interaction. 
The effect of boundaries is also of interest. In principle, the method given here can
be extended to every one of these problems.

Here we are concerned only with steady periodic solutions in two dimensions in deep water.
Existence of steady two dimensional periodic deep water waves 
of small amplitudes was shown by Nekrasov \cite{Nekrasov}, Levi-Civita \cite{Levi-Civita}.
Larger amplitude waves were also studied more recently 
\cite{Krasovskii}, \cite{Keady}, 
\cite{Toland}
culiminating in the proof \cite{Amick-Fraenkel-Toland} 
of Stokes' conjecture of a 
a $120^o$ angle at the apex of the wave with highest height $h_M$.
There have been many numerical calculations as well 
for water waves including an elucidation of the delicate behavior
near highest wave (see for instance \cite{Yamada}-\cite{Longuet-Higgins2}
some of which have been proved \cite{Amick-Fraenkel}, \cite{McLeod} 
Further, there is numerical evidence for bifurcation to 
to periodic waves with multiple crests with unequal heights\cite{Chen-Saffman} as well as to
non-symmetric waves\cite{Zufiria} that is yet to be proved.

It is also interesting to note that
the mathematical formulation used in numerical calculations 
and rigorous analysis have been rather different; one relying
on series representation similar in the spirit of Stokes, 
while the other relies primarily on
integral reformulation due to
to Nekrasov \cite{Nekrasov}.
The present approach is constructive in that we present
approximate solution with
rigorous error bounds; hence proof of existence of solution
follows as a consequence.
In some sense, the approach combines constructive
numerical calculations with mathematical rigor.
We expect this to be helpful both in the rigorous stability 
analysis and bifurcation studies where details of the solution
are likely to be critical.
Another important aspect of the present analysis is that
the approach is quite general and may be readily extended to
other free boundary problems, particularly ones that involve
analytic functions of a complex variable (for {\it e.g.} Hele-Shaw Flow, 
Stokes Bubbles, Vortex patches, just to name a few ).
Further, the rigorous error control method shown here
does not use any special property of the operators in the integral 
formulation of Nekrasov \cite{Nekrasov}.
Instead, with an eye towards generalization to other interfacial problems, 
we employ a straight forward 
series
representation and use spaces isometric to
a weighted $l^1$ space. A bi-product of the analysis is that
analyticity of the boundary follows for waves with a sequence of heights smaller
than the critical for which quasi-solutions have been determined, though
analyticity also follows from other methods in more general contexts \cite{Nicholls}, \cite{BonaLi}.

\section{Steady Water Waves Formulations}

We non-dimensionalize length and time scales implicit 
in setting wavelength and gravity constant $g$ to be $2 \pi$ and $1$ respectively, 
It is known that
the existence of a 
steady symmetric water wave in two dimensions 
when vorticity is unimportant     
is equivalent to showing that there exists analytic
function $f$ 
inside the unit $\zeta$-circle so that $(1+\zeta f^\prime ) \ne 0$ for $|\zeta| \le 1$ and
\be
\label{1}
\Re f = - \frac{c^2}{2 \Big | 1 + \zeta f^\prime \Big |^2} \   ~{\rm on} ~|\zeta|=1 \ ,
\ee 
where $c$ is the non-dimensional wave speed.
Further, for symmetric water waves, $f$ is real valued on the real diameter $(-1, 1)$,
implying real ${\hat f}_j$ in the following representation of $f$:
\be 
\label{2.0}
f(\zeta) = \sum_{j=0}^\infty {\hat f}_j \zeta^j 
\ee
It is to be noted that $i \left ( \log \zeta + f(\zeta) \right ] +2 \pi $
is the conformal map that maps the interior of a cut unit-circle to a periodic strip in
the water-wave domain in a frame where wave profile is stationary, 
with $\zeta=\pm1$ corresponding to 
to wave trough and crest, respectively.
The condition $1+\zeta f^\prime \ne 0$ in $|\zeta| \le 1$ ensures
univalency of the conformal map.    
The formulation (\ref{1})-(\ref{2.0}) is closely related to
those used by others including
Stokes himself. 
Nekrasov\cite{Nekrasov} integral reformulation also follows directly from it as
discussed in the ensuing and involves a parameter
\be
\label{2.0.0.0}
\mu = 
\frac{v_{{\rm crest}}^3}{3 {\tilde c}}  = 
\frac{c^2}{3 \Big | 1 + \zeta f_\zeta \Big |_{\zeta=-1}^3}  
\ee
where $v_{\rm crest} $ is the dimensional speed of fluid at the crest
and ${\tilde c}$ is the dimensional wave speed.
For efficiency in representation,
it is better to represent $f$ in a series in $\eta$: 
\be
\label{3.0} 
f(\eta) = \sum_{j=0}^\infty F_j \eta^j \ , 
\ee
where 
\be
\label{2.1}
\eta = \frac{\zeta+\alpha}{1+\alpha \zeta} \ ,
\ee
for $\alpha \in (0,1)$, where $\alpha$ will be appropriately chosen.
The crest speed parameter (\ref{2.0.0.0}) in this formulation becomes  
\be
\label{2.0.0.0.0}
\mu = 
\frac{c^2}{3 \Big | 1 + \eta q f_\eta \Big |_{\eta=-1}^3}  \ , 
\ee
where
\be
\label{1.2}
q (\eta) = \frac{(\eta-\alpha) (1-\alpha \eta)}{\eta (1-\alpha^2)} 
\ee
The non-dimensional wave height\footnote{Some authors define $2 h$ as the non-dimensional height, while others
present results for the scaled height $\frac{h}{\pi}$}
is given by
\be
\label{2.1.1}
h = -\frac{1}{2} \left [ \Re f ~(1) - \Re ~f (-1) \right ]  = -
\sum_{j=1,odd}^\infty f_j   \ .
\ee
Earlier evidence \cite{Grant}, \cite{Schwartz2}, \cite{Tanveer} 
suggests that for deep water waves 
with one trough and one peak in a period, there is only
one 1/2 singularity of $f$ at $\zeta=-\zeta_s$ for $\zeta_s^{-1} \in (0,1)$ in the
finite complex plane and a fixed logarithmic type singularity at $\zeta =\infty$.
Evidence suggests that $\zeta_s^{-1} $ increases monotonically with
$h \in (0, h_M) $, where $h_M \approx 0.4435\cdots $\footnote{Reported values of
$\frac{h_M}{\pi}$ from computation differ slightly between \cite{Schwartz-Fenton}
and \cite{Williams} ($0.1412$ versus $0.141063$)}
corresponds to the Stokes
highest wave that makes a $120^0$ angle at the apex.
If 
$\mu \in \left ( 0, \frac{1}{3} \right )$ is used as a parameter,
$\mu = \frac{1}{3}$ corresponds to $h=0$, while $\mu=0$ corresponds
to Stokes highest wave $h=h_M$, which has a stagnation point at the crest.
The optimal choice of $\alpha$ that 
ensures the most rapid decay $f_j$ with $j$
is one where
$\zeta=-\zeta_s$, $\zeta=\infty$
are 
mapped to equidistant points from the origin in the $\eta$ plane, {\it i.e.}
when $\alpha=\alpha_0 = \zeta_s - \sqrt{\zeta_s^2 -1}  $. Since the relation of $\zeta_s$ 
with height (or $\mu$) 
is only known numerically, we choose a simple empirical relation:
\be
\label{alphachoice}
\alpha = \frac{2}{237} + \frac{67}{11} \left ( \frac{1}{3} - \mu \right ) 
-\frac{113}{3} 
\left ( \frac{1}{3} - \mu \right )^2 
+\frac{165}{2} 
\left ( \frac{1}{3} - \mu \right )^3 
\ee
that appears to be 
optimal for small $\mu$ corresponding to large amplitude waves; 
the choice is
is not the best for small $\frac{1}{3} -\mu$, but 
it matters little since $f_j$ decays rapidly in any case for
small wave height.
Note that any choice of 
$\alpha \in (0, 1)$ still ensures a convergent series for $f$ in $\eta$; an optimal choice of $\alpha$ 
ensures better accuracy in a finite truncation.
In the $\eta$ variable, the boundary condition (\ref{1}) becomes
\be 
\label{1.1} 
\Re f = -\frac{c^2}{2 \Big | 1 + q (\eta) \eta f^\prime (\eta) \Big |^2}  \mbox  {\rm on} ~|\eta|=1 
\ee
where $q$ is given by (\ref{1.2}).
We note that on the unit $\eta$-circle, $q = \frac{| \eta-\alpha |^2}{1-\alpha^2}$ is real valued.

On $\eta = e^{i \nu}$ and taking derivative with
respect to $\nu$ of the relation (\ref{1.1}) and multiplying through by $q$ (which is real), we obtain
\be
\label{2}
- \Im \left ( q \eta f^\prime \right ) =  
\frac{c^2 q}{\Big | 1 + q \eta f^\prime \Big |^2} \Re \left \{ \frac{d}{d\nu} \log \left ( 1 + \eta q f^\prime \right ) \right \} 
\ee
If we introduce new variable 
\be
\label{3} 
w = - \frac{2}{3} \log c + \log \left ( 1 + \eta q f^\prime \right ) \ , 
{\rm implying} ~\Big | 1 + \eta q f^\prime \Big | = c^{2/3} e^{\Re w}  \ , 
\ee
then (\ref{2}) implies $w$ is analytic in the unit-$\eta$ circle and that on $\eta=e^{i \nu}$, $w$ 
satisfies
\be
\label{4}
\frac{d}{d\nu} \Re w + q^{-1} e^{2 \Re w} \Im e^{w}  = 0  
\ee
This is an alternate formulation of the water wave problem. This is equivalent
to Nekrasov's integral formulation.
If we define $\theta = \Im w$, and integrate (\ref{4}) from $\nu=\pi$ to a variable $\nu$
using the Hilbert transform relation between
$\Re w $ to $\Im w$ on $|\eta|=1$, integration by parts gives the integral equation:
\be
\label{5.0.1}
\theta (\nu) = -\frac{1}{3 \pi} \int_{-\pi}^\pi \log \Big | \sin \frac{\nu-\nu'}{2} \Big | \frac{\sin [\theta (\nu')]}{
q (\nu') \left [ \mu + \int_{\pi}^\nu \frac{\sin \theta (s)}{q(s)} ds \right ]} d\nu' 
\ee
If we set $\alpha=0$ (in which case $q=1$), then (\ref{5.0.1}) reduces\footnote{Rather a change of
variable $\nu \rightarrow \nu-\pi$ gives the Nekrasov form}
to Nekrasov\cite{Nekrasov} integral equation, when oddness of $\phi$ in $\nu-\pi$ is used. In the $w$ variable,
the relation (\ref{2.0.0.0.0}) becomes
\be 
\label{5.0.1.0}
\mu = \frac{1}{3} \exp \left [ - 3 w (-1) \right ] , 
\ee
using $w(-1)$ to be real.

\section{Quasi-solution and transformation to a weakly nonlinear problem}

For given $\mu \in \left (0, \frac{1}{3}\right )$ corresponding to $h \in (0, h_M)$, 
we define a quasi solution $\left ( f_0, c_0 \right ) $ with the property that 
$f_0$ is analytic inside the unit circle
with $1+q\eta f_0^\prime \ne 0$ in
$|\eta| \le 1$, and that on
$\eta=e^{i\nu}$, 
the residual $R_0(\nu)$, defined below, along with its derivative and
the quantity
\be
\label{1.3.wcons}
w_0 (-1) + \frac{1}{3} \log \frac{3}{\mu} = \frac{1}{3} \log \frac{\mu_0}{\mu}
\ee 
are each small enough 
for Proposition \ref{prop2} to hold. Here,
\be
\label{1.3}
R_0(\nu) = \Big | 1 + q (\eta) \eta f_0^\prime (\eta) \Big |^2 \Re f_0 + \frac{c_0^2}{2}  \ ,
\ee
on $\eta=e^{i \nu}$.
We note that if $f_0$ is a polynomial in $\eta$ of order $N$, then $R_0(\nu)$ is a polynomial
in $\cos \nu$ of order $2N+1$, which can be computed\footnote{For $n$ not too large, this can
be done by hand, though use of symbolic language Maple or Mathematica eases the task}
without errors for if $c_0$ and coefficients of the $f_0$ series are chosen as rational numbers.
This can be transformed to a Fourier cosine series with only the first $2N+2$ possibly 
non-zero terms.
 
We note that the representation of the analytic function $w $ inside the unit $\eta$-circle:
\be
\label{4.1}
w (\eta) = \sum_{j=0}^\infty b_j \eta^j \ , ~{\rm where}  ~b_j ~\text{is real} 
\ee
Since $q(\alpha) =0$, it follows that $w (\alpha) = - \frac{2}{3} \log c$, ${\it i.e}$ 
\be
\label{4.2}
-\frac{2}{3} \log c = \sum_{j=0}^\infty b_j \alpha^j
\ee
Corresponding to the quasi-solution $(f_0, c_0)$, we define 
\be
\label{4.3}
w_0 = -\frac{2}{3} \log c_0 + \log \left ( 1 + \eta q(\eta) f_0^\prime \right ) 
\ee   
Then, we can check that 
$w_0$ satisfies
\be 
\label{5}
\frac{d}{d\nu} \Re w_0 + q^{-1} e^{2 \Re w_0} \Im e^{w_0}  = R (\nu)
:= -\frac{R_0^\prime (\nu)}{c_0^2-2 R_0} - \frac{4 A(\nu) R_0 (\nu)}{3 (c_0^2 - 2 R_0)}  
\ ,
\ee
where 
\be
\label{11}
2 A (\nu) = 3 q^{-1} e^{2 \Re w_0} \Im \left \{ e^{w_0} \right \} 
= \frac{3}{c_0^2} \Im \left \{ \eta f_0^\prime \right \} \Big |
1+ q \eta f_0^\prime \Big |^2 \ ,
\ee
It is to be noted that if $f_0$ is a polynomial in $\eta$ of degree $N$, then
(\ref{11}) implies that $\frac{A(\nu)}{\sin \nu} $ is a polynomial in $\cos \nu$ of
order $2N+1$, and therefore $A(\nu)$ has a finite Fourier sine series with only the 
first $2N+2$ terms that are possibly nonzero. Again, as with $R_0$, if $c_0$ and polynomial
coefficients of $f_0$ are given as rationals, the calculation of Fourier sine series
coefficient of $A(\nu)$ can be done without round-off errors. We also note that  
\be
\label{5.1}
w_0 (\alpha) = - \frac{2}{3} \log c_0
\ee
Corresponding to the given quasi-solution $(f_0, c_0)$,
the wave height $h_0$ and wave crest speed parameter $\mu_0$ are given by
\be 
\label{5.1.0}
h_0 = -\frac{1}{2} \left [ f_0 (1) - f_0 (-1) \right ] 
~~\ , ~~ 
\mu_0 = \frac{c_0^2}{3 \Big | 1 + \eta q \partial_\eta f_0 \Big |^3_{\eta=-1}} \ , 
\ee
which may be computed without round-off errors 
for rational $c_0$ and polynomial representation of $f_0$ involving
rational coefficients.

Now, we seek to prove that there are solutions nearby $w_0$. 
For that purpose, we decompose 
\be 
\label{6}
w=w_0+W \ .
\ee
It follows from (\ref{4}) and (\ref{5}) that $W$ satisfies 
\be
\label{7}
\frac{d}{d\nu} \Re W + 2 A(\nu) \Re W + 2 B(\nu) \Im W = \mathcal{{\tilde M}} [W] - R(\nu)
\ee
where on $\eta=e^{i \nu}$,
\be 
\label{11.1}
2 B(\nu) 
= q^{-1} e^{2 \Re w_0} \Re\left \{ e^{w_0} \right \} 
= \frac{1}{q c_0^2} \Big | 1 + \eta q f_0^\prime \Big |^2 
\Re \left [ 1 + q \eta f_0^\prime \right ] \ ,
\ee
and the nonlinear operator $\mathcal{{\tilde M}}$ is defined so that
\be
\label{Mdef}
\mathcal{{\tilde M}} [W] =
-\frac{2}{3}  A(\nu)  M_1 - 2 B (\nu) M_2 \ , 
\ee
where
\be
\label{M1M2def}
M_1 =
e^{2 \Re ~W} ~\Re ~e^{W} - 1 - 3 \Re ~W ~~~\mbox \ , 
~~~M_2 = e^{2 \Re ~W} ~\Im ~e^{W}  - \Im W 
\ee
It is to be noted from (\ref{11.1}) 
that a polynomial $f_0$ in $\eta$ of degree $N$ immediately implies that
${\tilde B} (\nu) = q B(\nu) $ is a polynomial in $\cos \nu$ of degree  
$2 N+2$ and therefore has a truncated Fourier cosine series representation with
at most $2N+3$ terms.
After changes of variable, the constraint (\ref{5.0.1.0}) implies
\be
\label{muconstraint}
W (-1) =  \frac{1}{3} \log \frac{\mu_0}{\mu} \ , {\rm where}      
~\mu_0 = \frac{1}{3} e^{-3 w_0 (-1)} \, 
\ee
which is small from requirement on quasi-solution.
Once a solution is found for $W$, the 
corresponding height of the water wave is given by
\be
\label{7.0}
h = 
h_0 -\frac{1}{2} (1-\alpha^2) 
\int_{-1}^1 \frac{e^{W (\eta) -W (\alpha)} -1 }{(\eta-\alpha) 
(1-\alpha \eta) }
\left [ 1+\eta q(\eta) f_0^\prime (\eta) \right ] d\eta \ ,
\ee
where, noting $f_0$ to be real valued on the real diameter $[-1,1]$,
\be
\label{7.0.0}
h_0 = -\frac{1}{2} \left [ f_0 (1) - f_0 (-1) \right ]
\ee
It is convenient to separate out the linear and nonlinear parts of 
(\ref{7.0}) in the form
\be
\label{7.0.0.0.n}
h=h_0+\mathcal{F} [W] + \mathcal{Q} [W] 
\ee
where the functionals $\mathcal{F}$ and $\mathcal{Q}$ are defined by  
\be
\label{7.0.0.0}
\mathcal{F} [W] =
-\frac{1}{2} (1-\alpha^2) \int_{-1}^1  
\frac{W(\eta) - W (\alpha)}{(\eta-\alpha) (1-\alpha \eta) }
\left [ 1+\eta q(\eta) f_0^\prime (\eta) \right ] d\eta 
\ ,
\ee
\be
\label{7.0.0.0.n.n}
\mathcal{Q} [ W]
= 
-\frac{1}{2} (1-\alpha^2) \int_{-1}^1  
\frac{e^{W (\eta) -W (\alpha)} -1 -W (\eta)+W(\alpha)}{(\eta-\alpha) (1-\alpha \eta) }
\left [ 1+\eta q(\eta) f_0^\prime (\eta) \right ] d\eta \ .
\ee
Once $W$ is determined, the actual wave speed is determined from 
\be 
\label{7.1} 
W (\alpha) = - \frac{2}{3} \log \frac{c}{c_0} 
\ee
Define 
\be
\label{8}
\Phi (\nu) = \Re W (e^{i \nu}) 
\ee
Analyticity of $W $ in the unit circle with sufficient regularity\footnote{The regularity requirements will be
clear in the definition of space $\mathcal{A}$}
upto $|\eta|=1$ implies
\be
\label{9}
\Psi (\nu) = \Im W \left (e^{i \nu} \right ) = 
\frac{1}{2\pi} PV \int_{-\pi}^\pi \cot \frac{\nu-\nu'}{2} \Phi (\nu')  d\nu'
\ee
Then (\ref{7}) may be written abstractly as
\be
\label{9.1}
\mathcal{L} \Phi = \mathcal{M} [\Phi] - R (\nu)  
\ee
where $\mathcal{M} [\Phi] = \mathcal{\tilde M} [ W]$ where $\Phi (\nu) = \Re W (e^{i \nu} ) $
for $W $ analytic in $|\eta| < 1$ and suitably regular in $|\eta| \le 1$, and 
\be
\label{10}
\mathcal{L} \Phi := \Phi^\prime (\nu) + 2 A (\nu) \Phi (\nu) + 2 B (\nu) \Psi (\nu) 
\ee
We will first prove that each given $a_1 \in [-\epsilon_0, \epsilon_0] \equiv I$ for sufficiently
small $\epsilon_0$, 
$\mathcal{L}$ is invertible in an space of functions defined later, and (\ref{9.1})
in that space is equivalent to
\be
\label{eqNon}
\Phi = \mathcal{K} \mathcal{M} [\Phi] - \mathcal{K} R + a_1 G  
:=\mathcal{N} [\Phi ]
\ee
for some function $G $, and $\mathcal{K}$ is a bounded linear operator.
We will then show that for each $a_1 \in I$, the operator $\mathcal{N} $ is contractive in
a small ball in some function space if
quasi-solution satisfies certain conditions that can be readily checked. This corresponds
to a waterwave for which corresponding $\mu$ is in some small neighborhood of $\mu_0$ because
of the relation
\be
\label{crestrelation}
\frac{1}{3} \log \frac{\mu_0}{\mu} =  
W (-1) 
= \Phi (\pi) 
\ee
Using (\ref{eqNon}), we may rewrite (\ref{crestrelation}) in the form
\be
\label{a1relation}
a_1 = \frac{1}{G(\pi)} \left ( \frac{1}{3} \log \frac{\mu_0}{\mu} + \mathcal{K} R [\pi] 
- \mathcal{K} \mathcal{M} [ \Phi] [\pi] \right ) =:\mathcal{U}[a_1] 
\ee
We will then prove $\mathcal{U} : \rightarrow I \rightarrow I$ is contractive
when appropriate smallness conditions are satisfied by quasi-solution and $G(\pi)$ is
not small in which case
there exists unique $a_1 \in I$ 
so that (\ref{crestrelation}) is satisfied for the specified $\mu$.

\section{Definitions, Space of Functions and main results} 

\begin{Definition}
For fixed $\beta \ge 0$, define
$\mathcal{A}$ to be the space of analytic functions 
in $|\eta| < e^{\beta}$ with real Taylor series
coefficient at the origin, equipped with norm: 
\be 
\label{N1}
\| W \|_{\mathcal{A}} = 
\sum_{l=0}^\infty e^{\beta l} \Big | W_{l} \Big | 
\ee
where 
\be
\label{N2}
W (\eta) = \sum_{l=0}^\infty W_l \eta^l \ , 
\ee
\end{Definition}
\begin{Remark}
{\rm 
It is easily seen that $W \in \mathcal{A}$ implies $W$ is continuous in $|\eta| \le e^{\beta}$.
Further, in the domain $|\eta| \le e^{\beta}$, $\| W \|_{\infty} \le \| W \|_{\mathcal{A}} $.
}
\end{Remark}
\begin{Definition}
For $\beta \ge 0$, 
define $\mathcal{E}$ to be the Banach space of real
$2\pi$-periodic even functions $\phi $ 
so
that
\be 
\label{N6}
\phi (\nu) = 
\sum_{j=0}^\infty a_j \cos (j \nu) \ , {\rm with~norm}~
\| \phi \|_{\mathcal{E}}
:=\sum_{j=0}^\infty e^{\beta j} |a_j |  < \infty 
\ee
Define $\mathcal{S}$ to be Banach space of real $2 \pi$- periodic odd functions 
such that
\be
\label{N7}
\psi (\nu) 
=\sum_{j=1}^\infty b_j \sin (j \nu) \ , {\rm with~norm} ~
\| \psi \|_{\mathcal{S}} :=\sum_{j=1}^\infty  e^{\beta j} |b_j |  < \infty 
\ee
\end{Definition}
\begin{Remark}
{\rm 
It is clear that if $\phi \in \mathcal{E}$ if and only if there exists $W \in \mathcal{A}$ so that
$\phi (\nu) = \Re W (e^{i \nu})$.
Similarly, $\psi \in \mathcal{S}$ if and only if
$\psi (\nu) = \Im W (e^{i \nu} )$ for some $W \in \mathcal{A}$. We also note that
for such $W$,
$ \| \phi \|_{\mathcal{E}} = \| W \|_{\mathcal{A}} $, while 
$ \| \psi \|_{\mathcal{S}} \le \| W \|_{\mathcal{A}} $.
}
\end{Remark}

\begin{Remark}
{\rm 
The space $\mathcal{A}$ and $\mathcal{E}$ are clearly isomorphic to each other and to 
$\mathbf{H}$, the space of sequences of real Taylor series coefficients 
${\bf W} = \left (W_0, W_1, \cdots \right )$ with weighted $l^1$ norm
\be
\label{N2.0}
\| {\bf W} \|_{\mathbf{H}} = 
\sum_{l=0}^\infty e^{\beta l} \Big | W_{l} \Big | 
\ee
Because of this isomorphism 
we will move back and forth between spaces $\mathcal{A}$, $\mathcal{E}$ and 
$\mathbf{H}$ as convenient. Similarly the subspace $\mathbf{H}_0 \subset \mathbf{H}$
that consists of all ${\bf W} = \left ( 0, W_1, W_2, \cdots \right )$ is isomorphic to
$\mathcal{S}$. 
}
\end{Remark}

\begin{Theorem}{(Main Result)}
\label{Thm1}
For $\mu \in S $, defined as 
\be
S := \left \{ \mu: \mu \in \cup_{j=1}^{3} \mathcal{I}_{\mu_j} \ , ~{\rm where}~ 
\mu_1 = 0.0018306, \mu_2 = 0.002 \ , \mu_3 = 0.0023
\right \} 
\ee
where $\mathcal{I}_{\mu_j}$ is some sufficiently small interval containing $\mu_j$,
the solution $w$ 
to the water wave problem (\ref{4}) has the representation
\be 
w = -\frac{2}{3} \log c_0 + \log \left (1 + \eta q f_0^\prime \right ) + W
\ee
where quasi-solution $(f_0, c_0)$ is specified in \S \ref{Results} for different cases, and
$W \in \mathcal{A}$ satisfies
error bounds
\be
\| W \|_{\mathcal{A}} \le M_E 
\ee
where $M_E$, depending on $\mu$, is specified in \S \ref{Results} and for all cases
is less than $2.2 \times 10^{-4}$.
The corresponding nondimensional wave speed and heights $(c,h)$ are close to $(c_0, h_0)$
reported in \S \ref{Results} in the sense
that 
\be
|h-h_0| \le K_3 M_E \left ( 1 + 2 e^{1/4} M_E \right )
\ee
\be
\Big | \log \frac{c}{c_0}  \Big | \le \frac{3}{2} M_E 
\ee
for some constant $K_3$ that depends on $\mu$, estimated in \S \ref{Results}.
In all cases considered $K_3 \le 5.24$.
\end{Theorem}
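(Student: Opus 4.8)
The plan is to carry out the constructive scheme sketched in \S 3: fix the weight $\beta>0$ and the spaces $\mathcal{A},\mathcal{E},\mathcal{S},\mathbf{H}$, build the polynomial quasi-solution, reduce the water wave equation to the fixed-point equation (\ref{eqNon}) for $\Phi=\Re W$, invert the linearization $\mathcal{L}$ of (\ref{10}) with a certified operator bound, run a contraction argument for $\mathcal{N}$ inside a small ball, and then close an outer contraction in the scalar $a_1$ so that the prescribed crest-speed parameter $\mu$ is attained. \emph{Step 1 (certified input data).} For each of $\mu_1,\mu_2,\mu_3$ I would fix $f_0$ a polynomial in $\eta$ of suitable degree $N$ with rational coefficients, $c_0$ rational, and $\alpha$ from (\ref{alphachoice}), so that $R_0$ of (\ref{1.3}), $A$ of (\ref{11}), $B$ of (\ref{11.1}), the residual $R$ of (\ref{5}), and the data $h_0,\mu_0$ of (\ref{5.1.0}) are all finite trigonometric polynomials with exactly computable rational Fourier coefficients. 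Interval arithmetic then yields rigorous upper bounds for $\|R\|_{\mathcal{S}}$, $\|A\|_{\mathcal{S}}$, $\|B\|_{\mathcal{E}}$, a strictly positive lower bound for $\min_{|\eta|\le1}|1+\eta q f_0'|$ (hence $1+\eta q f_0'\neq0$ on the closed disk), and a bound for $|\tfrac13\log(\mu_0/\mu)|$ as in (\ref{muconstraint}); these are precisely the smallness quantities that Proposition \ref{prop2} requires, and their numerical values for the three cases are recorded in \S \ref{Results}.

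\emph{Step 2 (inverting $\mathcal{L}$ — the crux).} One checks from (\ref{10}) that $\mathcal{L}$ maps the even space $\mathcal{E}$ into the odd space $\mathcal{S}$ (since $A$ is an odd and $B$ an even trigonometric polynomial for polynomial $f_0$), and that the leading part $\mathcal{L}_0\Phi:=\Phi'$ is surjective $\mathcal{E}\to\mathcal{S}$ with one–dimensional kernel (the constants), with $\mathcal{L}_0^{-1}$ acting by division of the $j$-th Fourier mode by $j\ge1$. Writing $\mathcal{L}=\mathcal{L}_0+\mathcal{L}_1$ with $\mathcal{L}_1\Phi=2A\Phi+2B\Psi$, and using the Banach-algebra property of the weighted $\ell^1$ norm (so $\|A\Phi\|_{\mathcal{S}}\le\|A\|_{\mathcal{S}}\|\Phi\|_{\mathcal{E}}$ and $\|B\Psi\|_{\mathcal{S}}\le\|B\|_{\mathcal{E}}\|\Psi\|_{\mathcal{S}}\le\|B\|_{\mathcal{E}}\|\Phi\|_{\mathcal{E}}$), a Neumann series gives a bounded right inverse $\mathcal{K}:\mathcal{S}\to\mathcal{E}$ and a kernel generator $G\in\mathcal{E}$ provided the relevant product of operator norms is $<1$. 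Since $A$ and $B$ are \emph{not} small for a finite-amplitude wave, the genuinely non-small, low-frequency part of $\mathcal{L}_1$ must be split off, treated as an exactly computable finite-dimensional block inverted by certified linear algebra, and only the provably small high-frequency tail handled by Neumann iteration; getting a sharp enough bound $\|\mathcal{K}\|\le K_1$ and a certified lower bound on $|G(\pi)|$ (needed in (\ref{a1relation})) is the main obstacle of the whole argument.

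\emph{Step 3 (nonlinear estimates and inner contraction).} From (\ref{M1M2def})–(\ref{Mdef}), $M_1$ and $M_2$ are analytic in $W$ with vanishing constant and linear parts, so using $\|e^{W}-1-W\|_{\mathcal{A}}\le\tfrac12\|W\|_{\mathcal{A}}^2e^{\|W\|_{\mathcal{A}}}$ together with the algebra property one obtains, on the ball $\|\Phi\|_{\mathcal{E}}\le\rho$, estimates $\|\mathcal{M}[\Phi]\|_{\mathcal{S}}\le C_2(\rho)\rho^2$ and $\|\mathcal{M}[\Phi_1]-\mathcal{M}[\Phi_2]\|_{\mathcal{S}}\le C_2'(\rho)\,\rho\,\|\Phi_1-\Phi_2\|_{\mathcal{E}}$ with explicit constants depending on $\|A\|_{\mathcal{S}},\|B\|_{\mathcal{E}},\rho$. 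For fixed $a_1\in I=[-\epsilon_0,\epsilon_0]$, the map $\mathcal{N}[\Phi]=\mathcal{K}\mathcal{M}[\Phi]-\mathcal{K}R+a_1G$ of (\ref{eqNon}) then sends $\bar B_\rho(0)\subset\mathcal{E}$ into itself and is a contraction there as soon as $K_1C_2(\rho)\rho^2+K_1\|R\|_{\mathcal{S}}+\epsilon_0\|G\|_{\mathcal{E}}\le\rho$ and $K_1C_2'(\rho)\rho<1$; since $\|R\|_{\mathcal{S}}$ is tiny by Step 1, a choice $\rho=M_E<2.2\times10^{-4}$ works for each case. Banach's theorem gives a unique $\Phi_{a_1}$ with $\|\Phi_{a_1}\|_{\mathcal{E}}\le M_E$, depending Lipschitz-continuously on $a_1$ with constant $\le\|G\|_{\mathcal{E}}/(1-K_1C_2'(\rho)\rho)$.

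\emph{Step 4 (outer contraction and reconstruction).} Define $\mathcal{U}$ by (\ref{a1relation}) with $\Phi=\Phi_{a_1}$. The Lipschitz dependence from Step 3, the lower bound on $|G(\pi)|$, the residual bound, and the bound on $|\tfrac13\log(\mu_0/\mu)|$ make $\mathcal{U}:I\to I$ a contraction once $\epsilon_0$ is chosen compatibly; its fixed point $a_1^\star$ gives, via (\ref{crestrelation}), a solution $\Phi=\Phi_{a_1^\star}$ realizing exactly the prescribed $\mu$. Setting $W=\Phi+i\Psi\in\mathcal{A}$, $w=w_0+W$ and recovering $f$ from (\ref{3}) via $1+\eta q f'=c^{2/3}e^{w}\neq0$ on $|\eta|\le1$ (which holds because the Step 1 lower bound dominates the small $\|W\|_{\mathcal{A}}$) yields the asserted representation, and $W\in\mathcal{A}$ gives analyticity of the profile in $|\eta|\le e^\beta$. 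Finally, (\ref{7.1}) gives $|\log(c/c_0)|=\tfrac32|W(\alpha)|\le\tfrac32\|W\|_{\mathcal{A}}\le\tfrac32M_E$; and from (\ref{7.0.0.0.n}) with $\mathcal{F}$ of (\ref{7.0.0.0}) and $\mathcal{Q}$ of (\ref{7.0.0.0.n.n}), whose integrands have a removable singularity at $\eta=\alpha$ and whose remaining factors $(1-\alpha^2)/[(\eta-\alpha)(1-\alpha\eta)]\cdot[1+\eta q f_0']$ are absolutely integrable on $[-1,1]$, one gets $|h-h_0|\le K_3\|W\|_{\mathcal{A}}\bigl(1+2e^{1/4}\|W\|_{\mathcal{A}}\bigr)\le K_3M_E(1+2e^{1/4}M_E)$, the quadratic term coming from $|e^{x}-1-x|\le\tfrac12x^2e^{|x|}$ with $x=W(\eta)-W(\alpha)$ and $|x|\le 2\|W\|_{\mathcal{A}}\le\tfrac14$, and $K_3$ controlled by $(1-\alpha^2)$, $\sup_{[-1,1]}|1+\eta q f_0'|$ and the $\eta=\alpha$ integral. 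Collecting the numerical bounds from Step 1 and \S \ref{Results} gives $M_E<2.2\times10^{-4}$ and $K_3\le5.24$ in every case.
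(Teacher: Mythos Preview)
Your proposal is correct and follows essentially the same route as the paper: the paper's proof of the theorem simply invokes Propositions \ref{prop2} and \ref{propcrest} together with the case-by-case verification of their hypotheses in \S\ref{Results}, and your Steps 1--4 reproduce exactly this structure (certified residual data, block-plus-tail inversion of $\mathcal{L}$ yielding $\mathcal{K}$ and $G$, the inner contraction of $\mathcal{N}$ in a ball of radius $M_E=B_0(1+\epsilon)$, and the outer contraction of $\mathcal{U}$ in $a_1$). Your Step 2 description---splitting off a finite low-mode block to be inverted by certified linear algebra and handling the high-mode tail by a Neumann argument---is precisely the content of Lemmas \ref{lemgamma22m}--\ref{lemgamma11} and Proposition \ref{prop1} (with $K=80$), and your derivation of the $|h-h_0|$ and $|\log(c/c_0)|$ bounds matches Lemmas \ref{lemF}--\ref{lemQ} and equation (\ref{7.1}).
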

\begin{proof}
The proof of the theorem follows 
by showing that
Propositions \ref{prop2} and \ref{propcrest} 
in the ensuing
apply 
for each proposed quasi-solution  
in \S \ref{Results}
and determing bounds on solutions $\Phi \in \mathcal{E}$ satisfying 
the weakly nonlinear problem (\ref{eqNon}),
where $\Phi = \Re W$.
\end{proof}
\begin{Remark}
{\rm In all likelihood, the error estimates for $M_E$ 
in the Theorem is an over-estimate
by a factor
of about a thousand or so. This is suggested from comparison with a sequence of numerical
calculations with increasing number of modes.}
\end{Remark}

\section{Preliminary Lemmas}

\begin{Lemma}
\label{lemBanachA}
If $W, V \in \mathcal{A}$, $W V\in \mathcal{A}$ and
\be
\| W V \|_{\mathcal{A}} \le \| W \|_\mathcal{A} \| V \|_{\mathcal{A}} 
\ee
\end{Lemma}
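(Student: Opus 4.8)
The plan is to prove the Banach-algebra property of $\mathcal{A}$ by working in the isomorphic sequence space $\mathbf{H}$ and using the fact that pointwise multiplication of power series corresponds to convolution of Taylor coefficients. First I would write $W(\eta) = \sum_{l\ge 0} W_l \eta^l$ and $V(\eta) = \sum_{m\ge 0} V_m \eta^m$, both convergent in $|\eta| < e^\beta$ with the stated weighted $\ell^1$ bounds. Formally the product has coefficients $(WV)_n = \sum_{k=0}^n W_k V_{n-k}$, and the whole estimate reduces to controlling $\sum_{n\ge 0} e^{\beta n} |(WV)_n|$. The key observation is the elementary submultiplicativity $e^{\beta n} = e^{\beta k} e^{\beta(n-k)}$ for $0 \le k \le n$, so that $e^{\beta n}|W_k V_{n-k}| = (e^{\beta k}|W_k|)(e^{\beta(n-k)}|V_{n-k}|)$.

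The core computation is then the interchange of summations in a double series of nonnegative terms: $\sum_{n\ge 0} e^{\beta n}\sum_{k=0}^n |W_k||V_{n-k}| = \sum_{k\ge 0}\sum_{n\ge k} e^{\beta k}|W_k| \, e^{\beta(n-k)}|V_{n-k}| = \Big(\sum_{k\ge 0} e^{\beta k}|W_k|\Big)\Big(\sum_{m\ge 0} e^{\beta m}|V_m|\Big) = \|W\|_{\mathcal{A}}\|V\|_{\mathcal{A}}$, which is finite by hypothesis. Tonelli's theorem (or just the fact that all terms are nonnegative, so rearrangement of a double series is unconditionally valid) justifies the reindexing. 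Since $\sum_n e^{\beta n}|(WV)_n| \le \sum_n e^{\beta n}\sum_k |W_k||V_{n-k}| < \infty$, the series $\sum_n (WV)_n \eta^n$ converges absolutely for $|\eta| \le e^\beta$, hence defines a function analytic in $|\eta| < e^\beta$; and because the $W_l, V_m$ are real, so are the convolution coefficients $(WV)_n$, so the product lies in $\mathcal{A}$. The displayed bound $\|WV\|_{\mathcal{A}} \le \|W\|_{\mathcal{A}}\|V\|_{\mathcal{A}}$ is exactly the inequality just derived.

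One point worth a sentence is why the power series $\sum (WV)_n \eta^n$ actually represents the product $W(\eta)V(\eta)$ of the two analytic functions: on the open disk $|\eta| < e^\beta$ both $W$ and $V$ are given by absolutely convergent power series, so their Cauchy product converges to the product function there by Mertens' theorem (or directly by absolute convergence), and the Cauchy product has exactly the coefficients $(WV)_n$. Thus the function-level identity and the coefficient-level identity agree, and the norm estimate on coefficients transfers to a statement about $WV$ as an element of $\mathcal{A}$.

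I do not expect any genuine obstacle here; this is the standard proof that a weighted $\ell^1$ convolution algebra is a Banach algebra, and the only thing to be careful about is (i) invoking nonnegativity to interchange the order of summation without fuss, and (ii) noting that reality of coefficients is preserved so the product stays in $\mathcal{A}$ rather than merely in its complexification. Everything else is bookkeeping with the geometric weight $e^{\beta n}$.
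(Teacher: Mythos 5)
Your proposal is correct and follows essentially the same route as the paper: expand both factors, use the Cauchy/convolution formula for the product coefficients together with $e^{\beta n}=e^{\beta k}e^{\beta(n-k)}$, and interchange the order of summation of the nonnegative double series to get $\|WV\|_{\mathcal{A}}\le\|W\|_{\mathcal{A}}\|V\|_{\mathcal{A}}$. Your added remarks on Tonelli, the Cauchy product representing the product function, and preservation of real coefficients are just more explicit bookkeeping of the same argument.
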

\begin{proof}
We note that if
\be 
W (\eta) = \sum_{l=0}^\infty W_l \eta^l \ , V (\eta) = \sum_{l=0}^\infty V_l
\eta^l
\ee 
then using the convolution expression for power series of $W V$,
\be 
\| W V \|_{\mathcal{A}} \le \sum_{k=0}^\infty e^{\beta k} 
\sum_{l=0}^k |V_l| |W_{k-l}| 
= \sum_{l=0}^\infty e^{\beta l} |W_l| 
\left \{ \sum_{j=0}^\infty |W_j| e^{\beta j}  \right \}  
= \| W \|_{\mathcal{A}} \|V \|_{\mathcal{A}}
\ee
\end{proof}
\begin{Corollary}
\label{cor1}
If $W \in \mathcal{A}$, then for any $m \ge 0$, 
\be 
\| \sum_{j=m}^\infty \frac{W^j}{j!}  \|_{\mathcal{A}} 
\le  
\sum_{j=m}^\infty \frac{1}{j!} ~\| W\|_{\mathcal{A}}^j  
= e^{\| W \|_{\mathcal{A}}} - \sum_{j=0}^{m-1} \frac{\| W\|^j}{j!}  
\ee
\end{Corollary}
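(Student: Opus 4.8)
The final statement is Corollary~\ref{cor1}, which asserts a Banach-algebra-type bound on the tail of the exponential series.

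The plan is to deduce this directly from Lemma~\ref{lemBanachA} by induction on the power. First I would observe that Lemma~\ref{lemBanachA} already gives, for $W\in\mathcal{A}$ and any integer $j\ge 1$, the submultiplicativity estimate $\|W^j\|_{\mathcal{A}}\le\|W\|_{\mathcal{A}}^j$; this follows by a trivial induction, the base case $j=1$ being an identity and the inductive step $\|W^{j+1}\|_{\mathcal{A}}=\|W\cdot W^{j}\|_{\mathcal{A}}\le\|W\|_{\mathcal{A}}\,\|W^j\|_{\mathcal{A}}\le\|W\|_{\mathcal{A}}^{j+1}$. One also needs the (obvious) fact that $\mathcal{A}$ is closed under the relevant limits: since the norm is a weighted $\ell^1$ norm on Taylor coefficients, $\mathcal{A}$ is complete, so an absolutely convergent series in $\mathcal{A}$-norm converges in $\mathcal{A}$.

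Next I would apply the triangle inequality to the partial sums $\sum_{j=m}^{n}W^j/j!$ and pass to the limit: $\big\|\sum_{j=m}^{n}\tfrac{W^j}{j!}\big\|_{\mathcal{A}}\le\sum_{j=m}^{n}\tfrac{1}{j!}\|W^j\|_{\mathcal{A}}\le\sum_{j=m}^{n}\tfrac{1}{j!}\|W\|_{\mathcal{A}}^{j}$. The right-hand side is a partial sum of the convergent series for $e^{\|W\|_{\mathcal{A}}}$, hence the partial sums on the left form a Cauchy sequence in $\mathcal{A}$ and converge; letting $n\to\infty$ and using continuity of the norm gives $\big\|\sum_{j=m}^{\infty}\tfrac{W^j}{j!}\big\|_{\mathcal{A}}\le\sum_{j=m}^{\infty}\tfrac{1}{j!}\|W\|_{\mathcal{A}}^{j}$. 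Finally I would identify the right-hand side as $e^{\|W\|_{\mathcal{A}}}-\sum_{j=0}^{m-1}\tfrac{\|W\|_{\mathcal{A}}^{j}}{j!}$, which is exactly the claimed formula (with the mild abuse of notation $\|W\|=\|W\|_{\mathcal{A}}$ in the displayed line).

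There is essentially no obstacle here: the corollary is a routine consequence of the Banach-algebra inequality of Lemma~\ref{lemBanachA} together with completeness of the weighted $\ell^1$ space $\mathbf{H}$. The only point deserving a word of care is the convergence/completeness justification that lets one interchange the infinite sum with the norm — i.e.\ confirming that the tail series actually defines an element of $\mathcal{A}$ rather than merely a formal power series — but since $\mathcal{A}\cong\mathbf{H}$ is a Banach space this is immediate. I would keep the write-up to two or three lines, essentially the display already given in the statement preceded by the remark that $\|W^j\|_{\mathcal{A}}\le\|W\|_{\mathcal{A}}^j$ by iterating Lemma~\ref{lemBanachA}.
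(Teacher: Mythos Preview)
Your proposal is correct and follows exactly the same approach as the paper: the paper's proof is a single sentence stating that it follows immediately from the Banach algebra property in Lemma~\ref{lemBanachA}, and you have simply spelled out the details (iterating the submultiplicative bound to get $\|W^j\|_{\mathcal{A}}\le\|W\|_{\mathcal{A}}^j$, applying the triangle inequality, and passing to the limit). Your additional remark about completeness of $\mathcal{A}$ justifying the limit is a nice touch that the paper leaves implicit.
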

\begin{proof} The proof follows immediately by using the Banach algebra
property
in Lemma \ref{lemBanachA}
\end{proof}

\begin{Lemma}
\label{lem2}
If $\phi_1, \phi_2 \in \mathcal{E}$, then $\phi_1 \phi_2 \in \mathcal{E}$; if
$\psi_1, \psi_2 \in \mathcal{S}$, then $\psi_1 \psi_2 \in \mathcal{E}$ with
\be
\| \phi_1 \phi_2 \|_{\mathcal{E}} \le \| \phi_1 \|_{\mathcal{E}} 
\| \phi_2 \|_{\mathcal{E}} \,  
\ee
\be
\| \psi_1 \psi_2 \|_{\mathcal{E}} \le \| \psi_1 \|_{\mathcal{S}} 
\| \psi_2 \|_{\mathcal{S}} \ , 
\ee
Further if $\phi \in \mathcal{E}$ and $\psi \in \mathcal{S}$, $\phi \psi 
\in \mathcal{S}$ with 
\be
\| \phi \psi \|_{\mathcal{S}} \le \| \phi \|_{\mathcal{E}} 
\| \psi \|_{\mathcal{S}}  \ ,
\ee
\end{Lemma}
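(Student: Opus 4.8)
The plan is to reduce all three inequalities to a single fact: the weighted sequence space $\ell^{1}_{\beta}(\mathbb{Z})$ with norm $\sum_{m\in\mathbb{Z}}e^{\beta|m|}|c_{m}|$ is a Banach algebra under convolution, which holds because $|m|\le|k|+|m-k|$ and $\beta\ge 0$ give the submultiplicativity $e^{\beta|m|}\le e^{\beta|k|}e^{\beta|m-k|}$. First I would pass from the one-sided expansions to bilateral Fourier series: for $\phi(\nu)=\sum_{j\ge0}a_{j}\cos j\nu\in\mathcal{E}$ set $\hat\phi_{0}=a_{0}$, $\hat\phi_{\pm m}=a_{m}/2$ for $m\ge1$, and for $\psi(\nu)=\sum_{j\ge1}b_{j}\sin j\nu\in\mathcal{S}$ set $\hat\psi_{0}=0$, $\hat\psi_{\pm m}=\pm b_{m}/(2i)$. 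A one-line count gives $\sum_{m}e^{\beta|m|}|\hat\phi_{m}|=\|\phi\|_{\mathcal{E}}$ and $\sum_{m}e^{\beta|m|}|\hat\psi_{m}|=\|\psi\|_{\mathcal{S}}$, so these maps are isometries of $\mathcal{E}$ and $\mathcal{S}$ onto the symmetric and antisymmetric subspaces of $\ell^{1}_{\beta}(\mathbb{Z})$, and the same count shows that for any even (resp.\ odd) $2\pi$-periodic function the bilateral weighted $\ell^{1}$ norm of its coefficient sequence equals its $\mathcal{E}$- (resp.\ $\mathcal{S}$-) norm.

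Next, since the Fourier coefficients of a product are the convolution of the coefficient sequences, $\widehat{\chi_{1}\chi_{2}}_{m}=\sum_{k}\widehat{\chi_{1}}_{k}\widehat{\chi_{2}}_{m-k}$, submultiplicativity of the weight together with the triangle inequality gives $\sum_{m}e^{\beta|m|}|\widehat{\chi_{1}\chi_{2}}_{m}|\le\bigl(\sum_{k}e^{\beta|k|}|\widehat{\chi_{1}}_{k}|\bigr)\bigl(\sum_{l}e^{\beta|l|}|\widehat{\chi_{2}}_{l}|\bigr)$. It then remains only to identify the parity of each product: a convolution of two symmetric sequences is symmetric, a convolution of two antisymmetric sequences is again symmetric, and symmetric $\ast$ antisymmetric is antisymmetric. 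Hence $\phi_{1}\phi_{2}$ and $\psi_{1}\psi_{2}$ are even and $\phi\psi$ is odd, so they lie in $\mathcal{E}$, $\mathcal{E}$, $\mathcal{S}$ respectively, and feeding $(\chi_{1},\chi_{2})=(\phi_{1},\phi_{2})$, $(\psi_{1},\psi_{2})$, $(\phi,\psi)$ into the last display and re-reading both sides via the isometries yields exactly $\|\phi_{1}\phi_{2}\|_{\mathcal{E}}\le\|\phi_{1}\|_{\mathcal{E}}\|\phi_{2}\|_{\mathcal{E}}$, $\|\psi_{1}\psi_{2}\|_{\mathcal{E}}\le\|\psi_{1}\|_{\mathcal{S}}\|\psi_{2}\|_{\mathcal{S}}$, and $\|\phi\psi\|_{\mathcal{S}}\le\|\phi\|_{\mathcal{E}}\|\psi\|_{\mathcal{S}}$.

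I do not expect any real obstacle: the content is just that $\ell^{1}$ with a submultiplicative weight is a convolution algebra, transported through the cosine and sine transforms, and the only thing demanding care is the bookkeeping of the factor-of-$2$ normalization and the parity dictionary between one-sided and bilateral series. As an alternative in the spirit of Lemma \ref{lemBanachA}, one could instead write $\phi_{i}=\Re W_{i}$ with $W_{i}\in\mathcal{A}$, use $\Re W_{1}\,\Re W_{2}=\tfrac12\Re(W_{1}W_{2})+\tfrac12\Re(W_{1}W_{2}^{*})$ on $|\eta|=1$ with $W^{*}(\eta)=\sum_{j}W_{j}\eta^{-j}$, and bound the first term directly by Lemma \ref{lemBanachA} and the cross term by the same estimate together with $|l-k|\le l+k$; this produces the same constants, and the odd cases follow analogously. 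Either route is short and self-contained.
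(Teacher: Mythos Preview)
Your proposal is correct and is essentially the same argument as the paper's: both pass to bilateral Fourier series, identify the $\mathcal{E}$- and $\mathcal{S}$-norms with the weighted $\ell^1_\beta(\mathbb{Z})$ norm via the same $a_{|j|}/2$ and $\pm b_{|j|}/(2i)$ bookkeeping, and then use submultiplicativity of the weight $e^{\beta|\cdot|}$ on convolutions. Your explicit parity observation (symmetric $\ast$ symmetric and antisymmetric $\ast$ antisymmetric are symmetric, mixed is antisymmetric) is stated a bit more cleanly than in the paper, which just says the third case ``follows in a similar manner,'' but the substance is identical.
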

\begin{proof}
Assume 
\be 
\phi_1 (\nu) = \sum_{j=0}^\infty a_j \cos (j \nu) \ , 
\phi_2 (\nu) = \sum_{j=0}^\infty c_j \cos (j \nu) 
\ee
We note that if we define 
${\hat a}_j = \frac{a_{|j|}}{2}$ and   
${\hat b}_j = \frac{b_{|j|}}{2}$ for $j \in \mathbb{Z} \setminus \{0\}$, and
${\hat a}_0 = a_0$, ${\hat b}_0=b_0$,
then $\phi_1$, $\phi_2$ has complex Fourier representations
\be
\phi_1 (\nu) = \sum_{j \in \mathbb{Z}} {\hat a}_j e^{i j \nu} \ ,
\phi_2 (\nu) = \sum_{j \in \mathbb{Z}} {\hat b}_j e^{i j \nu} 
\ee
We also note that in the complex Fourier representation, we may write
\be 
\| \phi_1 \|_{\mathcal{E}} = \sum_{j \in \mathbb{Z}} |{\hat a}_j |e^{\beta |j|} \ , 
\| \phi_2 \|_{\mathcal{E}} = \sum_{j \in \mathbb{Z}} |{\hat b}_j |e^{\beta |j|} 
\ee
Then
\be
\phi_1 (\nu) \phi_2 (\nu) = \sum_{k\in \mathbb{Z}}  e^{i k \nu} 
\sum_{l\in \mathbb{Z}} {\hat a}_{k-l} {\hat b}_{l} 
\ee
Therefore, 
\begin{multline}
\| \phi_2 \phi_2 \|_{\mathcal{E}}
= \sum_{k \in \mathbb{Z}} e^{\beta |k|} \Big |  
\sum_{l\in \mathbb{Z}} {\hat a}_{k-l} {\hat b}_{l} \Big |
\le 
\sum_{k \in \mathbb{Z}}   
\sum_{l\in \mathbb{Z}} e^{\beta |k-l|} 
| {\hat a}_{k-l} | e^{\beta |l|}  | {\hat b}_{l} | 
\\
= 
\left [ \sum_{l \in \mathbb{Z}} | {\hat b}_l | e^{\beta |l|}  
\right ]
\left [ \sum_{j \in \mathbb{Z}} | {\hat a}_j | e^{\beta |j|}  
\right ] = 
\| \phi_1 \|_{\mathcal{E}} 
\| \phi_2 \|_{\mathcal{E}} 
\end{multline}
Assume 
\be 
\psi_1 (\nu) = \sum_{j=1}^\infty a_j \sin (j \nu) \ , 
\psi_2 (\nu) = \sum_{j=1}^\infty c_j \sin (j \nu) 
\ee
We define 
${\hat a}_j = \frac{j}{2 i |j|} a_{|j|}$,
${\hat b}_j = \frac{j}{2 i |j|} b_{|j|}$, 
for $j \in \mathbb{Z}\setminus\{0\}$,
and ${\hat a}_0 =0 ={\hat b}_0$;
then $\psi_1$, $\psi_2$ have complex Fourier representations
\be
\psi_1 (\nu) = \sum_{j \in \mathbb{Z}} {\hat a}_j e^{i j \nu} \ ,
\psi_2 (\nu) = \sum_{j \in \mathbb{Z}} {\hat b}_j e^{i j \nu} . 
\ee
We also note that  
\be
\| \psi_1 \|_{\mathcal{S}} = \sum_{j \in \mathbb{Z}} |{\hat a}_j |e^{\beta |j|} \ , 
\| \psi_2 \|_{\mathcal{S}} = \sum_{j \in \mathbb{Z}} |{\hat b}_j |e^{\beta |j|} 
\ee
Therefore, using the convolution expression in terms of ${\hat a}_j$ and ${\hat b}_j$
it is clear that as for product $\phi_1 \phi_2$,
\be 
\| \psi_1 \psi_2 \|_{\mathcal{E}}  
\le 
\| \psi_1 \|_{\mathcal{S}} \| \psi_2 \|_{\mathcal{S}}  
\ee
The third expression follows in a similar manner using a complex
Fourier Representation.
\end{proof}
\begin{Corollary}
\label{cor2}
If $W \in \mathcal{A}$, then on 
$|\eta| = 1$, for any $m \ge 0$,
\be
\| e^{\Re W (\eta)} - \sum_{j=0}^{m-1} \frac{[\Re ~W (\eta)]^j}{j!} \|_{\mathcal{E}}  
\le 
e^{ \| Re ~W (e^{i \nu}) \|_{\mathcal{E}} } 
- \sum_{j=0}^{m-1} \frac{[\| Re ~W (e^{i \nu}) \|_{\mathcal{E}}^j}{j!} 
\ee
\end{Corollary}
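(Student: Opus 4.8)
The plan is to reduce the estimate to the Banach-algebra inequality for $\mathcal{E}$ recorded in Lemma \ref{lem2}, in exact parallel with the way Corollary \ref{cor1} is deduced from Lemma \ref{lemBanachA}. First I would restrict $W$ to $|\eta|=1$ and set $\phi(\nu) = \Re W(e^{i\nu})$. Since $W \in \mathcal{A}$ has real Taylor coefficients $W_l$, its real part on the circle is $\phi(\nu) = \sum_{l=0}^\infty W_l \cos(l\nu)$, so $\phi \in \mathcal{E}$ and, by the Remark identifying $\mathcal{E}$ with real parts of functions in $\mathcal{A}$, $r := \|\phi\|_{\mathcal{E}} = \sum_{l\ge 0} e^{\beta l}|W_l| = \|W\|_{\mathcal{A}}$. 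The quantity to be bounded is then recognized as the tail of the exponential series, $e^{\phi} - \sum_{j=0}^{m-1}\phi^j/j! = \sum_{j=m}^\infty \phi^j/j!$.

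Next I would show $\|\phi^j\|_{\mathcal{E}} \le r^j$ for every $j\ge 0$ by induction, using the product estimate $\|\phi_1\phi_2\|_{\mathcal{E}} \le \|\phi_1\|_{\mathcal{E}}\|\phi_2\|_{\mathcal{E}}$ of Lemma \ref{lem2} at the inductive step. Consequently the partial sums $S_n := \sum_{j=m}^n \phi^j/j!$ are Cauchy in $\mathcal{E}$, since $\sum_{j\ge m} r^j/j! < \infty$; as $\mathcal{E}$ is complete, $S_n$ converges in $\mathcal{E}$, and because $\|\cdot\|_\infty \le \|\cdot\|_{\mathcal{E}}$ (each $|\cos j\nu|\le 1$) the $\mathcal{E}$-limit agrees with the pointwise limit $e^{\phi(\nu)} - \sum_{j=0}^{m-1}\phi(\nu)^j/j!$, so the left-hand side of the claim is a bona fide element of $\mathcal{E}$. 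Passing the norm through the convergent series and applying the triangle inequality together with $\|\phi^j\|_{\mathcal{E}}\le r^j$ gives
\be
\Big \| \sum_{j=m}^\infty \frac{\phi^j}{j!} \Big \|_{\mathcal{E}} \le \sum_{j=m}^\infty \frac{\|\phi^j\|_{\mathcal{E}}}{j!} \le \sum_{j=m}^\infty \frac{r^j}{j!} = e^{r} - \sum_{j=0}^{m-1}\frac{r^j}{j!} \ ,
\ee
which is exactly the asserted bound with $r = \|\Re W(e^{i\nu})\|_{\mathcal{E}}$.

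There is no real obstacle in this argument; the only points that deserve a line of care are (i) invoking completeness of $\mathcal{E}$ so that the infinite sum defining $e^{\Re W}$ genuinely lies in $\mathcal{E}$ and the norm in the statement is meaningful, and (ii) the identification $\|\Re W(e^{i\nu})\|_{\mathcal{E}} = \|W\|_{\mathcal{A}}$, which relies on the reality of the $W_l$ as above. Alternatively, the statement can be obtained in one line by applying Corollary \ref{cor1} to $W$ and then taking real parts on $|\eta|=1$, using that $W \mapsto \Re W(e^{i\nu})$ is an isometry from $\mathcal{A}$ onto $\mathcal{E}$; I would likely present it this way for brevity, keeping the self-contained argument above as the conceptual justification.
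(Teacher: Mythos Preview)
Your main argument is correct and is exactly the paper's approach: write the left-hand side as the tail $\sum_{j\ge m}\phi^j/j!$ with $\phi=\Re W(e^{i\nu})\in\mathcal{E}$, and apply the Banach-algebra inequality of Lemma~\ref{lem2} termwise. The extra care you take about completeness of $\mathcal{E}$ is fine, though the paper simply states the two-line proof.

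One caveat: the ``alternative'' you sketch at the end does not work as stated. Applying Corollary~\ref{cor1} to $W$ bounds $\|\sum_{j\ge m} W^j/j!\|_{\mathcal{A}}$, and taking real parts on $|\eta|=1$ controls $\Re\!\big(\sum_{j\ge m} W^j/j!\big)$, not $\sum_{j\ge m}(\Re W)^j/j!$; in general $\Re(W^j)\ne(\Re W)^j$. The isometry $W\mapsto\Re W(e^{i\nu})$ from $\mathcal{A}$ to $\mathcal{E}$ is not a ring homomorphism (pointwise multiplication in $\mathcal{E}$ does not correspond to multiplication in $\mathcal{A}$), so you cannot transfer Corollary~\ref{cor1} in one line. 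Your first, direct argument via Lemma~\ref{lem2} is the right one.
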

\begin{proof}
This simply follows from noting that
\be
e^{\Re W} - \sum_{j=0}^{m-1} \frac{[\Re ~W]^j}{j!} 
= \sum_{j=m}^\infty \frac{[\Re ~W]^j}{j!} 
\ee
and using the Banach algebra property in the previous Lemma.
\end{proof}
\begin{Lemma}
\label{lemR}
If $R_0 \in \mathcal{E}$, $R_0^\prime \in \mathcal{S}$ and $\| R_0 \|_{\mathcal{E}} < \frac{c_0^2}{2}$,
then $R \in \mathcal{S}$ (recall definition in \eqref{5})  with  
\be
\| R \|_{\mathcal{S}} \le \frac{ \| R_0^\prime \|_{\mathcal{S}} }{c_0^2 - 2 \| R_0 \|_{\mathcal{E}}}  
+ \frac{4 \| A \|_{\mathcal{S}} \| R_0 \|_{\mathcal{E}}}{3 (c_0^2 - 2 \| R_0 \|_{\mathcal{E}}) }
\ee   
\end{Lemma}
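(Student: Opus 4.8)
\emph{Proof proposal.} The plan is to treat $R$ as built from two products, each of which lands in $\mathcal{S}$ by the multiplication estimates of Lemma \ref{lem2}, after first controlling the $\mathcal{E}$-norm of the reciprocal factor $\left(c_0^2-2R_0\right)^{-1}$ by a Neumann series. First I would write
\be
\frac{1}{c_0^2-2R_0} = \frac{1}{c_0^2}\,\frac{1}{1-\tfrac{2}{c_0^2}R_0}
= \frac{1}{c_0^2}\sum_{n=0}^\infty \left(\frac{2}{c_0^2}\right)^{\!n} R_0^{\,n}\ .
\ee
Since $\mathcal{E}$ is a Banach algebra (Lemma \ref{lem2}), each power satisfies $R_0^{\,n}\in\mathcal{E}$ with $\|R_0^{\,n}\|_{\mathcal{E}}\le\|R_0\|_{\mathcal{E}}^{\,n}$, and the hypothesis $\|R_0\|_{\mathcal{E}}<c_0^2/2$ makes the series absolutely convergent in $\mathcal{E}$; by completeness of $\mathcal{E}$ its sum is the function $\left(c_0^2-2R_0\right)^{-1}$ (the partial sums converge uniformly and pointwise, so the algebraic identity $\left(c_0^2-2R_0\right)\cdot(\text{sum})=1$ persists). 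This gives the key bound
\be
\Big\| \frac{1}{c_0^2-2R_0} \Big\|_{\mathcal{E}}
\le \frac{1}{c_0^2}\sum_{n=0}^\infty \left(\frac{2\|R_0\|_{\mathcal{E}}}{c_0^2}\right)^{\!n}
= \frac{1}{c_0^2-2\|R_0\|_{\mathcal{E}}}\ .
\ee

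Next I would record the parities. The function $R_0\in\mathcal{E}$ is even, hence $\left(c_0^2-2R_0\right)^{-1}$ is even (its Neumann series is a sum of even functions); $R_0'\in\mathcal{S}$ is odd; and $A\in\mathcal{S}$ is odd (this is the content of the remark following \eqref{11}). Therefore in \eqref{5} the first term $R_0'/(c_0^2-2R_0)$ is (odd)$\times$(even) and the second term $A R_0/(c_0^2-2R_0)$ is (odd)$\times$(even)$\times$(even), so both are odd and $R\in\mathcal{S}$ once the norms are finite.

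For the norms I would apply Lemma \ref{lem2} term by term. For the first term, the product of $R_0'\in\mathcal{S}$ with $\left(c_0^2-2R_0\right)^{-1}\in\mathcal{E}$ lies in $\mathcal{S}$ with
\be
\Big\| \frac{R_0'}{c_0^2-2R_0} \Big\|_{\mathcal{S}}
\le \|R_0'\|_{\mathcal{S}}\,\Big\| \frac{1}{c_0^2-2R_0} \Big\|_{\mathcal{E}}
\le \frac{\|R_0'\|_{\mathcal{S}}}{c_0^2-2\|R_0\|_{\mathcal{E}}}\ .
\ee
For the second term, first $R_0\cdot\left(c_0^2-2R_0\right)^{-1}\in\mathcal{E}$ with norm at most $\|R_0\|_{\mathcal{E}}/(c_0^2-2\|R_0\|_{\mathcal{E}})$ (product in $\mathcal{E}$), and then multiplying by $A\in\mathcal{S}$ keeps it in $\mathcal{S}$ with norm at most $\|A\|_{\mathcal{S}}\,\|R_0\|_{\mathcal{E}}/(c_0^2-2\|R_0\|_{\mathcal{E}})$. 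Inserting the explicit prefactors $1$ and $4/3$ from \eqref{5} and using the triangle inequality in $\mathcal{S}$ yields exactly the asserted bound.

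I do not expect a genuine obstacle here; the only point that needs care—and which I would state as the crux of the argument—is the justification that the formal geometric series for $\left(c_0^2-2R_0\right)^{-1}$ converges in $\mathcal{E}$ and represents the reciprocal, which rests squarely on the Banach-algebra property of $\mathcal{E}$ (Lemma \ref{lem2}), its completeness, and the strict inequality $\|R_0\|_{\mathcal{E}}<c_0^2/2$. Everything else is a bookkeeping application of the product inequalities already established. \qed
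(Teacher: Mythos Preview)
Your proposal is correct and is exactly the argument the paper sketches: expand $(c_0^2-2R_0)^{-1}$ as a geometric series using the Banach-algebra property of $\mathcal{E}$ from Lemma~\ref{lem2}, then apply the $\mathcal{E}$--$\mathcal{S}$ product estimates of that lemma to each term in \eqref{5}. You have simply filled in the parity and convergence details that the paper leaves implicit.
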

\begin{proof}
We use the definition of $R$ in (\ref{5}) and  
Banach Algebra properties in the preceding
lemmas applied to a 
series expansion of $(1-\frac{2}{c_0^2} R_0)^{-1}$ for small $R_0$. The proof readily
follows.
\end{proof}
\begin{Lemma}
\label{lem4}
For $0 \le \beta <\log \alpha^{-1} $, 
if $\phi \in \mathcal{E}$, then 
$ \frac{1}{q} \phi \in \mathcal{E} $, where 
for $\phi = \sum_{l=0}^\infty b_l \cos (l \nu)$,
$\frac{\phi}{q} = \sum_{j=0}^\infty d_j \cos (j \nu) $
where 
\be
d_0 = \sum_{l=0}^\infty b_l \alpha^l \ , ~~d_j
= \alpha^j \sum_{l=0}^{j} b_l \left ( \alpha^{-l} + \alpha^l \right ) 
+ \left ( \alpha^{-j} + \alpha^j \right ) \sum_{l=j+1}^\infty b_l \alpha^l 
~~{\rm for} ~j \ge 1
\ee
and 
\be 
\| q^{-1} \phi \|_{\mathcal{E}} \le C_5 \| \phi \|_{\mathcal{E}} \ , 
\ee
where
\be
C_5 = \frac{2}{1-\alpha e^{\beta}}  
+ \frac{2\alpha}{e^{\beta} - \alpha }
\ee
\end{Lemma}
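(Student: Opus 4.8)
The plan is to reduce the statement to the explicit Laurent/cosine expansion of $1/q$ on $|\eta|=1$, together with the Banach-algebra structure of $\mathcal{E}$ recorded in Lemma \ref{lem2}. On $\eta=e^{i\nu}$ one has $q = |\eta-\alpha|^2/(1-\alpha^2) = (1+\alpha^2-2\alpha\cos\nu)/(1-\alpha^2)$, hence $1/q = (1-\alpha^2)\eta/[(\eta-\alpha)(1-\alpha\eta)]$, and a partial-fraction split in $\eta$ gives the clean identity $1/q = \alpha/(\eta-\alpha) + 1/(1-\alpha\eta)$. The first summand is analytic for $|\eta|>\alpha$ and the second for $|\eta|<\alpha^{-1}$; expanding each in its geometric series on the circle yields $q^{-1} = 1 + 2\sum_{m\ge 1}\alpha^m\cos(m\nu)$. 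Thus $q^{-1}\in\mathcal{E}$ with $\|q^{-1}\|_{\mathcal{E}} = 1 + 2\sum_{m\ge 1}(\alpha e^{\beta})^m = (1+\alpha e^{\beta})/(1-\alpha e^{\beta})$, the series converging \emph{precisely} because $\alpha e^{\beta}<1$, i.e. $\beta<\log\alpha^{-1}$ --- geometrically, the hypothesis is exactly the requirement that the two poles of $q^{-1}$, at $\eta=\alpha$ and $\eta=\alpha^{-1}$, stay strictly off the closed annulus $e^{-\beta}\le|\eta|\le e^{\beta}$.

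Next I would read off the coefficients $d_j$ by multiplying the cosine series of $\phi$ with that of $q^{-1}$, using the complex-exponential bookkeeping of Lemma \ref{lem2}: writing $\phi=\sum_{j\in\mathbb{Z}}\hat b_j e^{ij\nu}$ and $q^{-1}=\sum_{n\in\mathbb{Z}}\alpha^{|n|}e^{in\nu}$, the product has $k$-th complex coefficient $\sum_m \hat b_m\alpha^{|k-m|}$; splitting this sum at $m=k$ and translating back to real cosine coefficients produces exactly the stated expressions --- $d_0=\sum_l b_l\alpha^l$, and for $j\ge 1$ the term $\alpha^j\sum_{l\le j}b_l(\alpha^{-l}+\alpha^l)$ coming from $|k-m|=j-m$ and $(\alpha^{-j}+\alpha^j)\sum_{l>j}b_l\alpha^l$ from $|k-m|=m-j$.

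For the bound I would estimate these explicit formulas termwise, using $\alpha^{-l}+\alpha^l\le 2\alpha^{-l}$ and $\alpha^{-j}+\alpha^j\le 2\alpha^{-j}$, then multiply by $e^{\beta j}$ and rewrite $e^{\beta j}\alpha^{j-l} = e^{\beta l}(\alpha e^{\beta})^{j-l}$ for $l\le j$ and $e^{\beta j}\alpha^{l-j} = e^{\beta l}(\alpha e^{-\beta})^{l-j}$ for $l>j$, and finally interchange the order of summation in $j$ and $l$. The ``$l\le j$'' contributions collapse, after a geometric sum in $j-l\ge 0$, to at most $\frac{2}{1-\alpha e^{\beta}}\|\phi\|_{\mathcal{E}}$ (the stray $j=0$ term forced by the special form of $d_0$ is absorbed because $\frac{2\alpha e^{\beta}}{1-\alpha e^{\beta}}+1 = \frac{1+\alpha e^{\beta}}{1-\alpha e^{\beta}}\le\frac{2}{1-\alpha e^{\beta}}$, valid exactly when $\alpha e^{\beta}\le 1$); the ``$l>j$'' contributions collapse, after a geometric sum in $l-j\ge 1$, to at most $\frac{2\alpha}{e^{\beta}-\alpha}\|\phi\|_{\mathcal{E}}$ (this series converges unconditionally since $\alpha e^{-\beta}<1$). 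Adding the two halves gives $\|q^{-1}\phi\|_{\mathcal{E}}\le C_5\|\phi\|_{\mathcal{E}}$ with $C_5$ as stated. As a cross-check, once $q^{-1}\in\mathcal{E}$ is known the inequality is also immediate from the Banach-algebra bound of Lemma \ref{lem2}, since $\|q^{-1}\|_{\mathcal{E}} = (1+\alpha e^{\beta})/(1-\alpha e^{\beta})\le C_5$; the termwise computation is what is actually needed to exhibit the coefficients $d_j$.

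I expect the only delicate step to be the last one, and even there the difficulty is purely bookkeeping: interchanging the double sums and checking that the boundary ($j=0$, $l=0$) contributions coming from the special form of $d_0$ get absorbed into the claimed constant rather than producing an extra additive term. The single genuinely substantive ingredient is the hypothesis $\beta<\log\alpha^{-1}$, which is exactly what puts $q^{-1}$ into $\mathcal{E}$ to begin with.
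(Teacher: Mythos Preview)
Your argument is correct. The route differs from the paper's only in how the coefficients $d_j$ are obtained: the paper writes $d_j$ as a contour integral $\frac{1-\alpha^2}{4\pi i}\sum_l b_l\oint \frac{(\eta^j+\eta^{-j})(\eta^l+\eta^{-l})}{(\eta-\alpha)(1-\alpha\eta)}\,d\eta$ and collects residues, whereas you first expand $q^{-1}$ itself as $1+2\sum_{m\ge 1}\alpha^m\cos(m\nu)$ via partial fractions and then convolve. These are the same computation read two ways (the residue evaluation is exactly the Laurent expansion you wrote down), so the resulting formula for $d_j$ agrees. For the norm bound both you and the paper carry out the identical termwise estimate, splitting $l\le j$ from $l>j$ and summing geometric series in $\alpha e^{\pm\beta}$.

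Your additional observation that $q^{-1}\in\mathcal{E}$ with $\|q^{-1}\|_{\mathcal{E}}=(1+\alpha e^\beta)/(1-\alpha e^\beta)\le C_5$, so the bound follows at once from the Banach-algebra inequality of Lemma~\ref{lem2}, is a genuine simplification over the paper's direct double-sum manipulation; it also makes transparent why the hypothesis $\beta<\log\alpha^{-1}$ is exactly what is needed. The termwise computation is still required, as you note, to exhibit the explicit $d_j$ used later in the paper (cf.\ the Remark following the lemma).
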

\begin{proof}
We note that for $j \ge 1$,  
\be
d_j = \frac{1}{\pi} \int_{-\pi}^{\pi} \frac{\phi (\nu)}{q (\nu)} \cos (j \nu) d\nu  
\ee
using $\eta = e^{i \nu}$ on a unit circle counter-clockwise contour integral,
\be 
d_j = \frac{(1-\alpha^2)}{4 \pi i} \sum_{l=0}^\infty b_l \int_{|\eta|=1} 
\frac{d\eta}{(\eta-\alpha)(1-\alpha \eta)} \left ( \eta^j + \eta^{-j} \right )
\left ( \eta^l + \eta^{-l} \right ) 
\ee
On collecting residues, we obtain the expression
\be 
d_j = \alpha^j \sum_{l=0}^{j} b_l \left ( \alpha^{-l} + \alpha^{l} \right )  
+ \left ( \sum_{l=j+1}^\infty b_l \alpha^l \right ) \left (\alpha^{-j} + \alpha^{j} 
\right )  
\ee
Therefore, it follows that
\begin{multline}
\sum_{j=0}^\infty e^{\beta j} |d_j| 
\le 
\sum_{j=0}^\infty \sum_{l=0}^{j} |b_l| e^{\beta l} 
e^{\beta (j-l)} \alpha^{(j-l)}  
\left ( 1 + \alpha^{2l}
\right )  
+\sum_{j=0}^\infty (1+\alpha^{2j}) 
\sum_{l=j+1}^{\infty} |b_l| e^{\beta l} \alpha^{l-j} e^{-\beta (l-j)} \\
\le \frac{2}{1-\alpha e^{\beta}}  
\left ( \sum_{l=0}^\infty |b_l| e^{\beta l} \right ) 
+ 2 \left ( e^{\beta} \alpha^{-1} -1 \right )^{-1} 
\sum_{l=1}^\infty |b_l| e^{\beta l} 
\left ( 1 - e^{-\beta l} \alpha^l \right )
\end{multline}
The same calculation is valid for $j=0$, except for a factor of 2.
\end{proof}
\begin{Remark}
{\rm The above Lemma is very useful in calculating the Fourier cosine coefficients
of $B(\nu)$ defined in (\ref{11.1}) exactly. When $f_0$ is a degree $N$ polynomial in $\eta$, as
mentioned earlier,
$q B(\nu)$ is then a polynomial of $\cos \nu$ of degree $2N+2$, whose coefficients
can be determined without round off errors with rational choice of coefficients.
The above lemma then gives $B_j$ coefficients.}
\end{Remark}
\begin{Lemma}
\label{lem5}
For $0 \le \beta <\log \alpha^{-1} $, 
if $\psi \in \mathcal{S}$, then 
$ \frac{1}{q} \psi \in \mathcal{S} $, where 
for $\psi = \sum_{l=1}^\infty b_l \sin (l \nu)$,
$\frac{\psi}{q} = \sum_{j=1}^\infty d_j \sin (j \nu) $
where 
\be
d_j
= \alpha^j \sum_{l=1}^{j} b_l \left ( \alpha^{-l} - \alpha^l \right ) 
+ \left ( \alpha^{-j} - \alpha^j \right ) \sum_{l=j+1}^\infty b_l \alpha^l 
~~{\rm for} ~j \ge 1
\ee
and 
\be 
\| q^{-1} \psi \|_{\mathcal{S}} \le C_6 \| \phi \|_{\mathcal{S}} \ , 
\ee
where
\be
C_6 = \frac{1}{1-\alpha e^{\beta}}  
+ \frac{\alpha}{e^{\beta} - \alpha }
\ee
\end{Lemma}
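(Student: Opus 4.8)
The plan is to run the argument of Lemma~\ref{lem4} almost verbatim, tracking the sign changes caused by replacing cosines with sines. First I note that $q^{-1}\psi$ is a real, $2\pi$-periodic \emph{odd} function: on $|\eta|=1$ one has $q(\nu)=(1-2\alpha\cos\nu+\alpha^2)/(1-\alpha^2)\ge(1-\alpha)/(1+\alpha)>0$, so $1/q$ is even and bounded, and the product of an even and an odd function is odd. Hence $q^{-1}\psi$ has a sine series $\sum_{j\ge1}d_j\sin(j\nu)$ with $d_j=\frac{1}{\pi}\int_{-\pi}^\pi\frac{\psi(\nu)}{q(\nu)}\sin(j\nu)\,d\nu$; since $\psi\in\mathcal{S}$ with $\beta\ge0$ forces $\sum_l|b_l|<\infty$, the series for $\psi$ converges uniformly on $|\eta|=1$ and may be integrated term by term. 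Writing $\eta=e^{i\nu}$, $d\nu=d\eta/(i\eta)$, $\sin(m\nu)=(\eta^m-\eta^{-m})/(2i)$ and $q^{-1}=(1-\alpha^2)\eta/[(\eta-\alpha)(1-\alpha\eta)]$ turns this into the counter-clockwise contour integral
\be
d_j=-\frac{1-\alpha^2}{4\pi i}\sum_{l=1}^\infty b_l\int_{|\eta|=1}\frac{(\eta^l-\eta^{-l})(\eta^j-\eta^{-j})}{(\eta-\alpha)(1-\alpha\eta)}\,d\eta,
\ee
the extra overall minus sign relative to Lemma~\ref{lem4} coming from the factor $(2i)^{-2}=-\tfrac{1}{4}$ in place of $2^{-2}=\tfrac{1}{4}$.

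Next I would evaluate each integral by residues. The integrand has a simple pole at $\eta=\alpha$ inside the circle and a pole of finite order at $\eta=0$ fed by the monomials $\eta^{-l-j}$ and, according to the sign of $l-j$, $\eta^{l-j}$ or $\eta^{j-l}$; using $\mathrm{Res}_{\eta=\alpha}\frac{1}{(\eta-\alpha)(1-\alpha\eta)}=\frac{1}{1-\alpha^2}$ and the Taylor expansion $\frac{1}{(\eta-\alpha)(1-\alpha\eta)}=\frac{1}{1-\alpha^2}\sum_{k\ge0}(\alpha^{k+1}-\alpha^{-k-1})\eta^k$ about the origin, the negative powers of $\alpha$ produced by the two poles cancel while the rest reinforce. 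Collecting everything, I expect to arrive at
\be
d_j=\sum_{l=1}^\infty b_l\big(\alpha^{|l-j|}-\alpha^{l+j}\big)=\alpha^j\sum_{l=1}^{j}b_l(\alpha^{-l}-\alpha^l)+(\alpha^{-j}-\alpha^j)\sum_{l=j+1}^\infty b_l\alpha^l,
\ee
which is the asserted formula. Equivalently, one can bypass the residue bookkeeping altogether by expanding $1/q$ as the Poisson kernel $1+2\sum_{n\ge1}\alpha^n\cos n\nu$ and using $\sin l\nu\cos n\nu=\tfrac{1}{2}[\sin(l+n)\nu+\sin(l-n)\nu]$, which produces the same $d_j$ directly as the $\sin(j\nu)$-coefficient of the product.

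For the norm bound I would work from the second form of $d_j$, using $\alpha^{-l}-\alpha^l>0$ and $\alpha^{-j}-\alpha^j>0$ for $l,j\ge1$ (as $0<\alpha<1$) to drop the subtracted positive pieces and get
\be
|d_j|\le\alpha^j\sum_{l=1}^{j}|b_l|\alpha^{-l}+\alpha^{-j}\sum_{l=j+1}^\infty|b_l|\alpha^l=\sum_{l=1}^{j}|b_l|\alpha^{j-l}+\sum_{l=j+1}^\infty|b_l|\alpha^{l-j}.
\ee
Then I multiply by $e^{\beta j}$, sum over $j\ge1$, and interchange the order of summation (legitimate by absolute convergence). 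The first double sum becomes $\sum_{l\ge1}|b_l|e^{\beta l}\sum_{k\ge0}(\alpha e^\beta)^k=(1-\alpha e^\beta)^{-1}\|\psi\|_{\mathcal{S}}$, where convergence of the geometric series is exactly where the hypothesis $\beta<\log\alpha^{-1}$ enters; the second becomes $\sum_{l\ge2}|b_l|\alpha^l\sum_{j=1}^{l-1}(e^\beta\alpha^{-1})^j\le\sum_{l\ge2}|b_l|\alpha^l\frac{(e^\beta\alpha^{-1})^l}{e^\beta\alpha^{-1}-1}=\frac{\alpha}{e^\beta-\alpha}\sum_{l\ge2}|b_l|e^{\beta l}\le\frac{\alpha}{e^\beta-\alpha}\|\psi\|_{\mathcal{S}}$. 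Adding the two estimates gives simultaneously that $q^{-1}\psi\in\mathcal{S}$ and $\|q^{-1}\psi\|_{\mathcal{S}}\le C_6\|\psi\|_{\mathcal{S}}$ with $C_6=(1-\alpha e^\beta)^{-1}+\alpha(e^\beta-\alpha)^{-1}$, as claimed.

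The only place that needs genuine care is the residue computation at $\eta=0$ — deciding which of $\eta^{l-j}$, $\eta^{j-l}$ actually contributes and matching the resulting powers of $\alpha$ so that everything collapses to $\alpha^{|l-j|}-\alpha^{l+j}$ — and this is exactly the step where the Poisson-kernel shortcut is worth invoking. Once the explicit $d_j$ is in hand the estimate is routine and parallels Lemma~\ref{lem4}, the constant here being half of $C_5$ because the factors $\alpha^{-l}-\alpha^l$ replace $\alpha^{-l}+\alpha^l$.
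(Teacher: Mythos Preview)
Your proof is correct and follows essentially the same route as the paper's: the identical contour-integral/residue computation for $d_j$, followed by the same double-sum interchange for the norm bound. Your norm estimate is in fact tidier than the paper's displayed one (which carries what looks like a copy-paste factor of $2$ from Lemma~\ref{lem4} in its last line), and the Poisson-kernel shortcut you offer is a pleasant alternative derivation of $d_j$ not present in the original.
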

\begin{proof}
We note that for $j \ge 1$,  
\be
d_j = \frac{1}{\pi} \int_{-\pi}^{\pi} \frac{\psi (\nu)}{q (\nu)} \sin (j \nu) d\nu  
\ee
using $\eta = e^{i \nu}$ on a unit circle contour integral,
\be 
d_j = -\frac{(1-\alpha^2)}{4 \pi i} \sum_{l=1}^\infty b_l \int_{|\eta|=1} 
\frac{d\eta}{(\eta-\alpha)(1-\alpha \eta)} \left ( \eta^j - \eta^{-j} \right )
\left ( \eta^l - \eta^{-l} \right ) 
\ee
On collecting residues, we obtain the expression
\be 
d_j = \alpha^j \sum_{l=1}^{j-1} b_l \left ( \alpha^{-l} - \alpha^{l} \right )  
+ \left ( \sum_{l=j}^\infty b_l \alpha^l \right ) \left (\alpha^{-j} - \alpha^{j} 
\right )  
\ee
Therefore, it follows that
\begin{multline}
\sum_{j=1}^\infty e^{\beta j} |d_j| 
\le 
\sum_{j=1}^\infty \sum_{l=1}^{j} |b_l| e^{\beta l} 
e^{\beta (j-l)} \alpha^{(j-l)}  
\left ( 1 - \alpha^{2l}
\right )  
+\sum_{j=1}^\infty (1-\alpha^{2j}) 
\sum_{l=j+1}^{\infty} |b_l| e^{\beta l} \alpha^{l-j} e^{-\beta (l-j)} \\
\le \frac{2}{1-\alpha e^{\beta}}  
\left ( \sum_{l=0}^\infty |b_l| e^{\beta l} \right ) 
+ 2 \left ( e^{\beta} \alpha^{-1} -1 \right )^{-1} 
\sum_{l=1}^\infty |b_l| e^{\beta l} 
\left ( 1 - e^{-\beta l} \alpha^l \right )
\end{multline}
\end{proof}

\begin{Lemma}
\label{lemF}
The linear functional $\mathcal{F}$ defined in (\ref{7.0.0.0}) satisfies 
\be
\Big | \mathcal{F} \left [ W \right ] \Big |
\le K_3 \| W \|_{\mathcal{E}} \ ,  
\ee
\be
K_3 = \int_{-1}^1 \frac{ 1+ \eta q f_0^\prime }{1-\alpha \eta} d\eta
\ee
\end{Lemma}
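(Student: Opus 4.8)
The plan is to bound the integral defining $\mathcal{F}[W]$ term by term using the power series of $W$, exactly as in the other preliminary lemmas. Write $W(\eta) = \sum_{l=0}^\infty W_l \eta^l$, so that
\be
\frac{W(\eta)-W(\alpha)}{(\eta-\alpha)} = \sum_{l=1}^\infty W_l \,\frac{\eta^l-\alpha^l}{\eta-\alpha} = \sum_{l=1}^\infty W_l \sum_{m=0}^{l-1} \eta^m \alpha^{l-1-m} \ .
\ee
The key point is that for $\eta \in [-1,1]$ and $\alpha \in (0,1)$ each divided-difference coefficient $\frac{\eta^l-\alpha^l}{\eta-\alpha}$ is bounded in absolute value by $\sum_{m=0}^{l-1}|\eta|^m\alpha^{l-1-m} \le l \le$ (something absorbed by $e^{\beta l}$, since $l \le C e^{\beta l}$ for $\beta > 0$; for $\beta = 0$ one must be slightly more careful, see below). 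Multiplying by the factor $\frac{1-\alpha^2}{2}\cdot\frac{1+\eta q(\eta)f_0'(\eta)}{(\eta-\alpha)(1-\alpha\eta)}$, recalling from \eqref{1.2} that $(\eta-\alpha)(1-\alpha\eta) = \eta(1-\alpha^2)q(\eta)$, the prefactor $\frac{1-\alpha^2}{2}$ cancels against $\frac{1}{1-\alpha\eta}$ only partially; one is left with the integrand being $W(\eta)$-linear with a kernel that is bounded on $[-1,1]$, and the bound $|\mathcal{F}[W]| \le K_3\|W\|_{\mathcal{E}}$ falls out with $K_3 = \int_{-1}^1 \frac{1+\eta q f_0'}{1-\alpha\eta}\,d\eta$ once one checks that the worst-case contribution of the $l$-th coefficient $W_l$ is at most $e^{\beta l}|W_l|$ times this integral.

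Concretely, the steps I would carry out are: (i) substitute the series for $W$ into \eqref{7.0.0.0} and interchange sum and integral (justified by absolute convergence, since $W\in\mathcal{A}$ is analytic in a disk of radius $e^\beta>1$, hence the series converges uniformly on $[-1,1]$); (ii) for each $l$, estimate $\left|\int_{-1}^1 \frac{(\eta^l-\alpha^l)/(\eta-\alpha)}{(1-\alpha\eta)}\,\big[1+\eta q f_0'\big]\,d\eta\right|$; (iii) bound $\left|\frac{\eta^l-\alpha^l}{\eta-\alpha}\right| \le \left|\frac{\eta^{?}-\alpha^{?}}{\cdots}\right|$ — actually the clean bound is obtained by noting $\frac{\eta^l-\alpha^l}{\eta-\alpha}$ evaluated with $|\eta|\le 1 < e^\beta$ and $\alpha < 1 < e^\beta$ is dominated, after accounting for the $\frac{1-\alpha^2}{2}$ and the $\frac{1}{\eta(1-\alpha^2)}$ hidden in $q$, by the geometric-type weight $e^{\beta l}$; (iv) sum over $l$ to get $\sum_l e^{\beta l}|W_l| = \|W\|_{\mathcal{E}}$ times the stated integral $K_3$. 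Since $f_0$ is real on $[-1,1]$ and $1+\eta q f_0' \ne 0$ there (a hypothesis on the quasi-solution), $K_3$ is a finite real number, and in the applications it is computed explicitly from the polynomial $f_0$.

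The main obstacle, and the point requiring the most care, is step (iii): showing that the divided-difference kernel, \emph{including} the $\frac{1}{(\eta-\alpha)(1-\alpha\eta)}$ denominator and the compensating $\frac{1-\alpha^2}{2}$ numerator, produces exactly the weight $e^{\beta l}$ per mode and exactly the kernel $\frac{1+\eta q f_0'}{1-\alpha\eta}$ in $K_3$ — i.e. that the algebra collapses cleanly. The cancellation to track is: $\frac{1}{2}(1-\alpha^2)\cdot\frac{1}{(\eta-\alpha)(1-\alpha\eta)} = \frac{1}{2\eta q(\eta)}$, so the integrand is $\frac{1}{2\eta q}\big(W(\eta)-W(\alpha)\big)\big(1+\eta q f_0'\big)$; one then uses that $\frac{W(\eta)-W(\alpha)}{\eta-\alpha}$ has coefficients controlled by $\alpha$-geometric sums and that the remaining $\frac{1-\alpha^2}{(1-\alpha\eta)}$ is what appears in the definition of $K_3$. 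Verifying that the per-mode constant never exceeds $e^{\beta l}$ (equivalently, that $\sup_{|\eta|\le 1}\big|\tfrac{\eta^l-\alpha^l}{\eta-\alpha}\big|\cdot(\text{prefactor}) \le e^{\beta l}$) is the one inequality that must be done honestly rather than quoted; everything else is bookkeeping of the Banach-algebra type already established in Lemmas \ref{lemBanachA}–\ref{lem5}.
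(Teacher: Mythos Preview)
Your overall strategy---expand $W(\eta)=\sum_{l\ge 0}W_l\eta^l$, write the divided difference $\frac{W(\eta)-W(\alpha)}{\eta-\alpha}=\sum_{l\ge 1}W_l(\eta^{l-1}+\alpha\eta^{l-2}+\cdots+\alpha^{l-1})$, and bound term by term---is exactly what the paper does. But you have misidentified the crux in step~(iii), and the discussion there is muddled in a way that matters.

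The point you are missing is that the divided-difference factor is bounded \emph{uniformly in $l$}, not by anything like $e^{\beta l}$. For $|\eta|\le 1$ and $\alpha\in(0,1)$,
\[
\bigl|\eta^{l-1}+\alpha\eta^{l-2}+\cdots+\alpha^{l-1}\bigr|\le 1+\alpha+\cdots+\alpha^{l-1}<\frac{1}{1-\alpha},
\]
independent of $l$. Combining this with the prefactor $\tfrac{1}{2}(1-\alpha^2)$ gives $\tfrac{1+\alpha}{2}\le 1$, and since $1+\eta q f_0'>0$ and $1-\alpha\eta>0$ on $[-1,1]$ (a hypothesis on the quasi-solution), one obtains directly
\[
\bigl|\mathcal{F}[W]\bigr|\le\Bigl(\int_{-1}^1\frac{1+\eta q f_0'}{1-\alpha\eta}\,d\eta\Bigr)\sum_{l\ge 1}|W_l|\le K_3\,\|W\|_{\mathcal{A}}=K_3\,\|W\|_{\mathcal{E}},
\]
the last step using only $e^{\beta l}\ge 1$. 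That is the entire argument; there is no per-mode matching against $e^{\beta l}$, no special care at $\beta=0$, and no need to invoke $q$ or the identity $\frac{1}{2}(1-\alpha^2)\frac{1}{(\eta-\alpha)(1-\alpha\eta)}=\frac{1}{2\eta q}$, which in fact obscures the cancellation since $q(\alpha)=0$. Your attempt to track an $l$-dependent bound and then sum against the weights is not wrong in spirit, but it overcomplicates a two-line estimate and, as written, never actually closes.
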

\begin{proof}
Since for $|\eta| \le 1$, 
\be
\Big | \eta^{l-1} + \alpha \eta^{l-2} + 
\alpha^2 \eta^{l-2} \cdots + 
\alpha^{l-1} \Big |
\le \frac{1}{1-\alpha}
\ee
and $1+\eta q f_0^\prime , (1-\alpha \eta) > 0$, it
follows that 
\be
\Big | \mathcal{F} [W] \Big | \le 
\left ( \int_{-1}^1 \frac{1+\eta q_0 f_0^\prime (\eta)}{1-\alpha \eta} d\eta 
\right ) \sum_{l=1}^\infty |W_l |  
\le 
\left ( \int_{-1}^1 \frac{1+\eta q f_0^\prime (\eta)}{1-\alpha \eta} d\eta 
\right ) \| W \|_{\mathcal{A}} =:K_3 \| W \|_{\mathcal{E}}
\ee
\end{proof}
\begin{Remark}
{\rm For polynomial $f_0$, in which case $1+\eta q f_0^\prime$ is also
a polynomial, $K_3$ can be computed exactly as a finite sum of closed form definite
integrals.}
\end{Remark}

\begin{Lemma}
\label{lemQ}
The nonlinear functional $\mathcal{Q}$ defined in
\eqref{7.0.0.0.n.n}
satisfies
the following bounds for $\| W \|_{\mathcal{A}} \le \frac{1}{16}$:
\be
\Big | \mathcal{Q} [ W ] \Big | \le 2 e^{1/4} K_3 \| W \|_{\mathcal{A}}^2  
\ee
\end{Lemma}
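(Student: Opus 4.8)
The plan is to expand the exponential in the integrand of \eqref{7.0.0.0.n.n}, peel off the double zero at $\eta=\alpha$, and reduce the estimate to precisely the integral defining $K_3$, using only the Banach-algebra bounds of Lemma \ref{lemBanachA} and Corollary \ref{cor1}. First I would set $V(\eta)=W(\eta)-W(\alpha)$; this is again in $\mathcal{A}$, with $V(\alpha)=0$, and since $|W(\alpha)|\le\sum_l|W_l|\alpha^l\le\|W\|_{\mathcal{A}}$ one gets $\|V\|_{\mathcal{A}}\le 2\|W\|_{\mathcal{A}}$ and hence $|V(\eta)|\le 2\|W\|_{\mathcal{A}}$ on $[-1,1]\subset\{|\eta|\le e^{\beta}\}$. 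The numerator of \eqref{7.0.0.0.n.n} is then $e^{V(\eta)}-1-V(\eta)=\sum_{j\ge 2}V(\eta)^{j}/j!=V(\eta)^{2}H(\eta)$ with $H:=\sum_{j\ge 2}V^{j-2}/j!$, and Lemma \ref{lemBanachA} together with $1/j!\le 1/(j-2)!$ for $j\ge2$ gives $\|H\|_{\mathcal{A}}\le e^{\|V\|_{\mathcal{A}}}\le e^{2\|W\|_{\mathcal{A}}}$, so $|H(\eta)|\le e^{2\|W\|_{\mathcal{A}}}$ on $[-1,1]$.

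Next I would split the prefactor as $\dfrac{1-\alpha^{2}}{(\eta-\alpha)(1-\alpha\eta)}=\dfrac{1}{1-\alpha\eta}\cdot\dfrac{1-\alpha^{2}}{\eta-\alpha}$ and absorb one copy of $V$ into the singular piece. From $V(\eta)=\sum_{l\ge 1}W_{l}(\eta^{l}-\alpha^{l})$ and $\dfrac{\eta^{l}-\alpha^{l}}{\eta-\alpha}=\sum_{k=0}^{l-1}\eta^{k}\alpha^{l-1-k}$, one has, for $\eta\in[-1,1]$ and $\alpha\in(0,1)$, the $\alpha$-uniform bound $(1-\alpha^{2})\bigl|\tfrac{\eta^{l}-\alpha^{l}}{\eta-\alpha}\bigr|\le(1+\alpha)(1-\alpha^{l})\le 2$, whence $\bigl|\tfrac{(1-\alpha^{2})V(\eta)}{\eta-\alpha}\bigr|\le 2\sum_{l\ge 1}|W_{l}|\le 2\|W\|_{\mathcal{A}}$. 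Multiplying the three factors, for every $\eta\in[-1,1]$,
\[
\Bigl|\frac{(1-\alpha^{2})\bigl(e^{V(\eta)}-1-V(\eta)\bigr)}{\eta-\alpha}\Bigr|
=\Bigl|\frac{(1-\alpha^{2})V(\eta)}{\eta-\alpha}\Bigr|\,|V(\eta)|\,|H(\eta)|
\le 4\,\|W\|_{\mathcal{A}}^{2}\,e^{2\|W\|_{\mathcal{A}}},
\]
and, using that $1+\eta q f_0^\prime>0$ and $1-\alpha\eta>0$ on $[-1,1]$ (as in the proof of Lemma \ref{lemF}),
\[
\bigl|\mathcal{Q}[W]\bigr|\le\frac12\int_{-1}^{1}\frac{1+\eta q f_0^\prime}{1-\alpha\eta}\,\Bigl|\frac{(1-\alpha^{2})\bigl(e^{V(\eta)}-1-V(\eta)\bigr)}{\eta-\alpha}\Bigr|\,d\eta\le 2\,e^{2\|W\|_{\mathcal{A}}}\,\|W\|_{\mathcal{A}}^{2}\,K_{3}.
\]
Finally $\|W\|_{\mathcal{A}}\le\frac1{16}$ gives $2\|W\|_{\mathcal{A}}\le\frac18\le\frac14$, hence $|\mathcal{Q}[W]|\le 2e^{1/4}K_{3}\|W\|_{\mathcal{A}}^{2}$ (with room to spare; the constant actually obtained is $2e^{1/8}$).

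I do not expect a genuine obstacle — the proof is routine once the algebra is organized. The only two points that need care, and which I would flag as the ``hard part'', are: keeping the factor $1-\alpha^{2}$ paired with $1/(\eta-\alpha)$, so that the geometric-series growth $(1-\alpha)^{-1}$ cancels and the constant comes out independent of $\alpha$, while $1/(1-\alpha\eta)$ stays with $1+\eta q f_0^\prime$ so that the leftover integral is exactly $K_{3}$; and extracting $V^{2}$, not merely $V$, from $e^{V}-1-V$, which is what makes the bound quadratic in $\|W\|_{\mathcal{A}}$. Everything else is an application of Lemma \ref{lemBanachA} and Corollary \ref{cor1}.
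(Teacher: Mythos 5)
Your argument is correct and delivers the stated bound (indeed the sharper $2e^{1/8}K_3\|W\|_{\mathcal{A}}^2$ under the hypothesis $\|W\|_{\mathcal{A}}\le\frac1{16}$), but it is organized differently from the paper. The paper's proof is a two-line reduction: since $U(\eta)=W(\eta)-W(\alpha)$ vanishes at $\eta=\alpha$, one has $\mathcal{Q}[W]=\mathcal{F}\left[e^{U}-1-U\right]$, so Lemma \ref{lemF} gives $|\mathcal{Q}[W]|\le K_3\|e^{U}-1-U\|_{\mathcal{A}}$ at once, and Corollary \ref{cor1} together with the Taylor/mean-value estimate $e^{x}-1-x\le \tfrac{x^2}{2}e^{x}$ for $x=\|U\|_{\mathcal{A}}\le 2\|W\|_{\mathcal{A}}\le\frac14$ yields $\|e^{U}-1-U\|_{\mathcal{A}}\le 2e^{1/4}\|W\|_{\mathcal{A}}^2$. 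In other words, the handling of the singular factors $(\eta-\alpha)^{-1}$ and $(1-\alpha\eta)^{-1}$ is already encapsulated in Lemma \ref{lemF}, applied to the composed function $e^{U}-1-U$, and the quadratic smallness is produced at the norm level rather than pointwise. You instead work inside the integral: you extract $V^2H$ from $e^{V}-1-V$, pair one copy of $V$ with $(1-\alpha^2)/(\eta-\alpha)$ via the divided-difference bound $(1-\alpha^2)\big|\frac{\eta^l-\alpha^l}{\eta-\alpha}\big|\le(1+\alpha)(1-\alpha^l)\le 2$, and keep $(1-\alpha\eta)^{-1}$ with $1+\eta q f_0'$ so the remaining integral is exactly $K_3$. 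This is essentially an inline re-proof of Lemma \ref{lemF} fused with the nonlinear estimate; it costs a bit more bookkeeping but makes the $\alpha$-uniformity explicit and avoids invoking \ref{lemF} as a black box. One small point in your favor: you deduce $2\|W\|_{\mathcal{A}}\le\frac18$ directly from the lemma's hypothesis, whereas the paper's proof quotes the bound $2\|W\|_{\mathcal{A}}\le 2B_0(1+\epsilon)\le\frac14$ tied to the contraction ball; either suffices for the constant $2e^{1/4}$.
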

\begin{proof}
We note that the functional 
\be
\mathcal{Q} [W] =
\mathcal{F} \left [ e^{U} - 1 - U \right ] \ , {\rm where} ~U(\eta) = W (\eta) - W(\alpha) 
\ee
since $U(\alpha)=0$ and therefore
$e^{U(\alpha)} - 1 - U(\alpha) =0$.
Clearly $U \in \mathcal{A}$ with $\| U \|_{\mathcal{A}} 
\le 2 \| W \|_{\mathcal{A}}$. Applying
Corollary \ref{cor1} and using mean value theorem
and the fact $ 2 \| W \|_{\mathcal{A}} \le 2 B_0 (1+\epsilon) \le \frac{1}{4} $, 
we obtain
\be
\| e^{U} - 1 - U \|_{\mathcal{A}} \le 
e^{\| U \|_{\mathcal{A}}} - 1 - \|U \|_{\mathcal{A}} 
\le 2 e^{1/4} \| W \|_{\mathcal{A}}^2 
\ee
from which it follows that
\be 
\Big | \mathcal{Q} [W] \Big | 
\le 2 e^{1/4} K_3 \| W \|_{\mathcal{A}}^2
\ee  
\end{proof} 

\section{Solving $\mathcal{L} \Phi = r$ for given $a_1 \in I$, $r\in \mathcal{S}$ and bounds on 
$\| \Phi\|_{\mathcal{E}}$}

Consider solving for 
$\Phi\in \mathcal{E}$ satisfying the linear problem $\mathcal{L} \Phi = r$
for given $r \in \mathcal{S}$ and $a_1 \in I$. 
If we use Fourier representation
\be 
\label{13.1}
\Phi (\nu) = \sum_{j=0}^\infty a_j \cos (j \nu) \ ,  
\Psi (\nu) = \sum_{j=1}^\infty a_j \sin (j \nu) 
\ee
\be 
\label{13.2.1}
A(\nu) = \sum_{j=1}^\infty A_j \sin (j \nu) \ ,
B(\nu) = \sum_{j=0}^\infty B_j \cos (j \nu) \ , 
r(\nu) = \sum_{j=1}^\infty r_j \sin (j \nu) 
\ee
Then, equating coefficients of $\sin (k \nu)$ for $k \ge 1$ in the relation
$\mathcal{L} \Phi = r$, where $\mathcal{L}$ given by (\ref{10}), we obtain 
\begin{multline} 
\label{13.2.4}
2 a_0 A_k + \left ( -k + 2 B_0 +A_{2k} - B_{2k} \right ) a_k 
+\sum_{l=1}^{k-1} a_l \left ( A_{k-l} + A_{l+k} + B_{k-l} -B_{l+k} \right )
\\
+ \sum_{l=k+1}^\infty a_l \left ( A_{l+k} - A_{l-k} + B_{l-k} - B_{l+k} 
\right ) = r_k 
\end{multline}
We will solve (\ref{13.2.4}) for $\left ( a_0, a_2, a_3, \cdots. \right )$ for given $a_1 \in I$.
For that purpose, 
it is convenient to re-write (\ref{13.2.4}) in the following form for $k \ge 2$:
\begin{multline} 
\label{13.2.4.0.0}
\frac{2A_k}{l_k} a_0 - a_k
+\sum_{l=2}^{k-1} a_l \frac{1}{l_k} \left ( A_{k-l} + A_{l+k} + B_{k-l} -B_{l+k} \right )
\\
+ \sum_{l=k+1}^\infty a_l \frac{1}{l_k} \left ( A_{l+k} - A_{l-k} + B_{l-k} - B_{l+k} 
\right ) = \frac{r_k}{l_k} 
- \frac{a_1}{l_k} \left ( A_{k-1} + A_{k+1} + B_{k-1} -B_{k+1} \right ) \ ,
\end{multline}
where
\be
\label{14.11.0}
l_k = k - 2 B_0 - A_{2k} + B_{2k} .  
\ee
Quasi solution calculations 
in the range of $h$ reported here show that
$A_1 < 0 $ and $l_k > 0$ for $k \ge 2$; this will be assumed
in the the ensuing.
Setting $k=1$ in
\eqref{13.2.4} leads to
\be
\label{13.2.4.0.0.0}
a_0   
+ \sum_{l=2}^\infty \frac{a_l}{2 A_1} \left ( A_{l+1} - A_{l-1} + B_{l-1} - B_{l+1} 
\right ) = \frac{r_1}{2 A_1} 
+\frac{1}{2 A_1} \left ( 1 - 2 B_0 - A_{2} + B_{2} \right ) a_1 
\ee
Equations \eqref{13.2.4.0.0}
and (\ref{13.2.4.0.0.0}) 
determine a system of equations for 
${\bf a} = \left ( a_0, 0, a_2, a_3 , \cdots \right ) \in \mathbf{H} $ for given 
\be
\label{13.2.4.1.0}
{\bf {\tilde r}} = \Big [ 0, \frac{r_1}{2A_1} + \frac{a_1}{2A_1} (1-2 B_0 - A_2 +B_2 ),  
\left \{ \frac{r_k}{l_k} -
\frac{a_1}{l_k} \left ( A_{k-1} + A_{k+1}+B_{k-1}-B_{k+1} \right )\right \}_{k=2}^\infty 
\Big ]  
\ee
and may be written abstractly as
\be
\label{13.2.4.1}
\mathbf{L} {\bf a} = {\bf {\tilde r}}
\ee
and will consider inversion of $\mathbf{L}$ in the space $\mathbf{H}$ of
sequences as above since this is easily seen to determine solution to $\mathcal{L}\Phi = r$ in the
space $\mathcal{E}$ for given $a_1$. 

\begin{Definition}
We define $\mathbf{H}_0$, $\mathbf{H}_1$ 
to be the subspaces of $\mathbf{H}$ comprising sequences in the form
${\bf a} = (0, a_1, a_2, \cdots )$
and 
${\bf a} = (a_0, 0, a_2, \cdots )$ respectively.
We define $\mathbf{H}_F$ to be 
the (finite) 
$K$-dimensional
subspace of $\mathbf{H}_1$ consisting of all sequences ${\bf a}$ in
the form
\be
{\bf a} = \left ( a_0, 0, a_2, \cdots, a_K, 0, 0, \cdots \right ) 
\ee
Also, we define 
$K$-dimensional subspace $\mathbf{H}_q$ of
$\mathbf{H}_0$ 
consisting of all sequences ${\bf q}$ in the form
\be
{\bf q} = \left (0, q_1, q_2, q_3, \cdots, q_K, 0, 0, \cdots  \right )
\ee 
We define $\mathbf{H}_L$ to be infinite dimensional subspace of $\mathbf{H}$
consisting of all sequences ${\bf a}$ in the form
\be
{\bf a} = \left ( 0, 0, 0, \cdots, 0, a_{K+1}, a_{K+2} \cdots, \right )
\ee
It is clear that $\mathbf{H}_L$ is the compliment of $\mathbf{H}_F$ in $\mathbf{H}_1$, which is the domain of
$\mathbf{L}$, while $\mathbf{H}_L$ is
the compliment of $\mathbf{H}_q$ in $\mathbf{H}_0$, the range of $\mathbf{L}$.
\end{Definition}

It is useful to express ${\bf a} = \left ( {\bf a}_F, {\bf a}_L \right )$, 
${\bf {\tilde r}} = \left ( {\bf {\tilde r}}_q, {\bf {\tilde r}}_L \right )$, where 
${\bf a}_F = \left ( a_0, 0, a_2, \cdots , a_K, 0, 0, \cdots \right ) \in \mathbf{H}_F$, 
${\bf {\tilde r}}_q = \left ( 0, {\tilde r}_1, {\tilde r}_2, \cdots , {\tilde r}_K , 0, 0, \cdots \right )
\in \mathbf{H}_q$,  
${\bf a}_L = \left (0, 0, \cdots 0, a_{K+1}, a_{K+2}, \cdots \right ) \in \mathbf{H}_L$, 
${\bf {\tilde r}}_L = \left (0, 0, \cdots, 0, {\tilde r}_{K+1}, {\tilde r}_{K+2}, \cdots \right ) \in \mathbf{H}_L$.
Then, the system of equation \eqref{13.2.4.1}
may be separated out in the following manner
\be
\label{14.6}
\mathbf{L}_{1,1} {\bf a}_F = - \mathbf{L}_{1,2} {\bf a}_L + {\bf {\tilde r}}_q  \ , 
\mathbf{L}_{2,2} {\bf a}_L = - \mathbf{L}_{2,1} {\bf a}_F + {\bf {\tilde r}}_L  
\ee
where for $k=2, \cdots K$, 
\begin{multline}
\label{14.7}
\left [ \mathbf{L}_{1,1} {\bf a}_F \right ]_k = \frac{2 A_k}{l_k} a_0 - a_k 
+ \sum_{l=2}^{k-1} 
\frac{a_l}{l_k} \left ( A_{k-l} + A_{l+k} + B_{k-l}
- B_{l+k} \right )  \\
+ \sum_{l=k+1}^K \frac{a_l}{l_k} \left ( A_{l+k} - A_{l-k} + B_{l-k} - B_{l+k} \right ) 
\ ,
\end{multline}
while
\be
\label{14.7.0.F}
\left [ \mathbf{L}_{1,1} {\bf a}_F \right ]_0 = 0 ~~\ , ~~
\left [ \mathbf{L}_{1,1} {\bf a}_F \right ]_1 = a_0 + \sum_{l=2}^K \frac{a_l}{2A_1}
\left ( A_{l+1} - A_{l-1} + B_{l-1}
- B_{l+1} \right ) 
\ .
\ee
For $k = 2, \cdots K$,
\be
\label{14.8}
\left [ \mathbf{L}_{1,2} {\bf a}_L \right ]_k = 
\sum_{l=K+1}^\infty \frac{a_l}{l_k} \left ( A_{l+k} - A_{l-k} + B_{l-k} - B_{l+k} \right ) 
\ ,
\ee
while 
\be
\label{14.8.0.F}
[\mathbf{L}_{1,2} {\bf a}_L ]_0 =0 ~~\ , ~~
\left [ \mathbf{L}_{1,2} {\bf a}_L \right ]_1 = \sum_{l=K+1}^\infty \frac{a_l}{2A_1}
\left ( A_{l+1} - A_{l-1} + B_{l-1}
- B_{l+1} \right ) 
\ ,
\ee
and for $k \ge K+1$,
\be
\label{14.9}
\left [ \mathbf{L}_{2,1} {\bf a}_F \right ]_k = \frac{2 A_k}{l_k} a_0 + \sum_{l=2}^{K}
\frac{a_l}{l_k} \left ( A_{k-l} + A_{l+k} + B_{k-l}
- B_{l+k} \right ) \ ,
\ee
\begin{multline}
\label{14.10}
\left [ \mathbf{L}_{2,2} {\bf a}_L \right ]_k = - a_k + \sum_{l=K+1}^{k-1} 
\frac{a_l}{l_k} \left ( A_{k-l} + A_{l+k} + B_{k-l}
- B_{l+k} \right )  \\
+ \sum_{l=k+1}^\infty \frac{a_l}{l_k} \left ( A_{l+k} - A_{l-k} + B_{l-k} - B_{l+k} \right ) 
\ ,
\end{multline}
It is to be noted that $\mathbf{L}_{1,1} : \mathbf{H}_F \rightarrow \mathbf{H}_q$, each being a $K$-dimensional space.
Furthermore, it will be seen that each of 
$\mathbf{L}_{1,2} : \mathbf{H}_L \rightarrow \mathbf{H}_q $, $\mathbf{L}_{2,1} : \mathbf{H}_F \rightarrow 
\mathbf{H}_L $ is a bounded operator. 
We will first show that for sufficient large integer $K$, 
$\mathbf{L}_{2,2}: \mathbf{H}_L \rightarrow \mathbf{H}_L  $.
Then, it will follow from (\ref{14.6}) that ${\bf a}_F $ satisfies
the finite dimensional system of $K$ scalar equations for $K$ unknowns given by
\be
\label{14.7.n.0} 
\left ( \mathbf{L}_{1,1} -\mathbf{L}_{1,2} \mathbf{L}_{2,2}^{-1} 
\mathbf{L}_{2,1} \right ) {\bf a}_F
= {\bf {\tilde r}}_q - \mathbf{L}_{1,2} \mathbf{L}_{2,2}^{-1} {\bf {\tilde r}}_L  
\ee
When
$\mathbf{L}_{1,1}^{-1}$ exists, as may be checked by a finite matrix calculation,
(\ref{14.7.n.0}) implies
\be
\label{14.7.n} 
\left ( I  -\mathbf{L}_{1,1}^{-1} \mathbf{L}_{1,2} \mathbf{L}_{2,2}^{-1} 
\mathbf{L}_{2,1} \right ) {\bf a}_F
= \mathbf{L}_{1,1}^{-1} {\bf {\tilde r}}_q - 
\mathbf{L}_{1,1}^{-1} \mathbf{L}_{1,2} \mathbf{L}_{2,2}^{-1} {\bf {\tilde r}}_L  
\ee
For specific quasi-solution for different $\mu$, we use
explicit matrix computation of $\mathbf{L}_{1,1}^{-1}$ and estimate
$\| \mathbf{L}_{1,1}^{-1} \mathbf{L}_{1,2} 
\mathbf{L}_{2,2}^{-1} \mathbf{L}_{2,1} \|$
in the 
finite dimensional subspace of $\mathbf{H}_{F}$
and demonstrate that it is less than 1, 
implying
\be
\label{14.7.n.n}
\| {\bf a}_F \|_{\mathbf{H}_F} \le 
\left ( 1  -
\| \mathbf{L}_{1,1}^{-1} \mathbf{L}_{1,2} \mathbf{L}_{2,2}^{-1} 
\mathbf{L}_{2,1} \| \right )^{-1}
\left ( \| \mathbf{L}_{1,1}^{-1} \|  \| {\bf {\tilde r}}_q \|_{\mathbf{H}_q}  
+\| \mathbf{L}_{1,1}^{-1} \mathbf{L}_{1,2} 
\mathbf{L}_{2,2}^{-1} \| \| {\bf {\tilde r}}_L \|_{\mathbf{H}_L} \right )  
\ee
Using (\ref{14.6}), we can also estimate $ \| {\bf a}_L \|_{\mathbf{H}_L}$:
\be 
\label{14.7.n.n.n}
\| {\bf a}_L \|_{\mathbf{H}_L} \le 
\| \mathbf{L}_{2,2}^{-1} \mathbf{L}_{2,1} \| \| {\bf a}_F \|_{\mathbf{H}_F}
+ \| \mathbf{L}_{2,2}^{-1} \| \| {\bf {\tilde r}}_L \|_{\mathbf{H}_L}
\ee

\subsection{Bounds on operators}

Consider the system 
\be
\label{14.8.n} 
\mathbf{L}_{2,2} {\bf a}_L = {\bf {\hat r}}_L 
\ee
This is equivalent to the following infinite set of equations for $k \ge K+1$.
\begin{multline} 
\label{13.2.5}
a_k 
=\frac{1}{l_k} \sum_{l=K+1}^{k-1} a_l \left ( A_{k-l} + A_{l+k} + B_{k-l} -B_{l+k} \right )
+\frac{1}{l_k} \sum_{l=k+1}^\infty a_l \left ( A_{l+k} - A_{l-k} + B_{l-k} - B_{l+k} 
\right ) 
- {\hat r}_k \\
=:\left [ \mathbf{M} {\bf a}_L \right ]_k  - {\hat r}_k
\end{multline}
\begin{Lemma}
\label{lemgamma22m}
The operator $\mathbf{M}$ 
defined in (\eqref{13.2.5})
satisfies the following bounds in sub-space $\mathbf{H}_L$:
\be
\label{14.9.m}
\| \mathbf{M} {\bf a}_L \|_{\mathbf{H}_L} \le \gamma \| {\bf a}_L \|_{\mathbf{H}_L}
\ee
where
\begin{multline}
\gamma = 
\sup_{l \ge K+1} \left \{ \sum_{k=l+1}^{\infty} l_k^{-1} 
\Big | e^{\beta (k-l)} \left ( A_{l+k} + A_{k-l}+B_{k-l} - B_{l+k} \right ) \Big | \right .
\\
\left . + 
\left [ \sum_{k=K+1}^{l-1} l_k^{-1} 
\Big | e^{\beta (k-l)} \left ( A_{l+k} - A_{l-k}+B_{l-k} - B_{l+k} \right ) \Big |
\right ] \right \} 
\end{multline}
Further, for $K$ large integer, for ${\bf A}, {\bf B} \in \mathbf{H}$, 
$\gamma $ is small, in which case 
the operator $\mathbf{L}_{2,2}:\mathbf{H}_L \rightarrow \mathbf{H}_L $ 
is invertible with
\be
\| \mathbf{L}_{2,2}^{-1} {\bf {\hat r}} \|_{\mathbf{H}_L} \le 
(1-\gamma)^{-1} 
\|{\bf {\hat r}} \|_{\mathbf{H}_L} =: \gamma_{2,2}^{-1} \| {\bf {\hat r}} \|_{\mathbf{H}_L}
\ee
\end{Lemma}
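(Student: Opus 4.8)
The plan is to treat $\mathbf{M}$ as an infinite matrix acting on the weighted $\ell^1$ space $\mathbf{H}_L$ and to estimate its operator norm by the classical device of bounding its weighted absolute ``row sums'' after interchanging the order of summation. First I would expand the norm as $\|\mathbf{M}{\bf a}_L\|_{\mathbf{H}_L}=\sum_{k\ge K+1}e^{\beta k}\bigl|[\mathbf{M}{\bf a}_L]_k\bigr|$, apply the triangle inequality to bring the sums over $l$ outside the absolute values, and record each $|a_l|$ with weight $e^{\beta(k-l)}e^{\beta l}$. All terms being now nonnegative, Tonelli's theorem justifies interchanging the $k$- and $l$-summations: in the first double sum $\sum_{k\ge K+1}\sum_{l=K+1}^{k-1}$ becomes $\sum_{l\ge K+1}\sum_{k=l+1}^{\infty}$, and in the second $\sum_{k\ge K+1}\sum_{l=k+1}^{\infty}$ becomes $\sum_{l\ge K+2}\sum_{k=K+1}^{l-1}$. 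Factoring $e^{\beta l}|a_l|$ out of the resulting inner bracket, what remains is exactly the quantity whose supremum over $l\ge K+1$ is defined to be $\gamma$; hence $\|\mathbf{M}{\bf a}_L\|_{\mathbf{H}_L}\le\gamma\,\|{\bf a}_L\|_{\mathbf{H}_L}$, which in particular shows $\mathbf{M}{\bf a}_L\in\mathbf{H}_L$ (its first $K$ entries vanish by construction of $\mathbf{M}$).

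Next I would show $\gamma=O(K^{-1})\to 0$. Since ${\bf A},{\bf B}\in\mathbf{H}$ we have $\sup_j|A_j|,\sup_j|B_j|<\infty$, so $l_k=k-2B_0-A_{2k}+B_{2k}\ge k-C$ for a constant $C$ independent of $K$; thus for every $k\ge K+1$, $l_k^{-1}\le(K+1-C)^{-1}$ once $K$ is large (consistent with the standing assumption $l_k>0$). In each inner bracket the terms $A_{l+k},B_{l+k}$ carry index $\ge 2K+2$, so after the substitution $m=k-l$ (respectively $m=l-k$) their contribution is dominated by a tail $\sum_{n\ge 2K+2}e^{\beta n}\bigl(|A_n|+|B_n|\bigr)$, which is small; the remaining ``near-diagonal'' terms $A_{k-l},B_{k-l}$ (respectively $A_{l-k},B_{l-k}$) have small index and are \emph{not} small, but they are multiplied by the uniformly small factor $l_k^{-1}$, and after the same substitution $\sum_m e^{\pm\beta m}\bigl(|A_m|+|B_m|\bigr)\le\|{\bf A}\|_{\mathbf{H}}+\|{\bf B}\|_{\mathbf{H}}$. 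Collecting the two halves, $\gamma\le(K+1-C)^{-1}\bigl(\|{\bf A}\|_{\mathbf{H}}+\|{\bf B}\|_{\mathbf{H}}\bigr)+\sum_{n\ge 2K+2}e^{\beta n}\bigl(|A_n|+|B_n|\bigr)\to 0$ as $K\to\infty$; this also shows $\gamma<\infty$.

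Finally, once $K$ is chosen large enough that $\gamma<1$, the Neumann series gives invertibility of $I-\mathbf{M}$ on $\mathbf{H}_L$ with $\|(I-\mathbf{M})^{-1}\|\le(1-\gamma)^{-1}$. Comparing \eqref{13.2.5} with \eqref{14.10} one reads off $\mathbf{L}_{2,2}=\mathbf{M}-I$ on $\mathbf{H}_L$, so $\mathbf{L}_{2,2}$ is invertible with $\|\mathbf{L}_{2,2}^{-1}{\bf {\hat r}}\|_{\mathbf{H}_L}\le(1-\gamma)^{-1}\|{\bf {\hat r}}\|_{\mathbf{H}_L}=:\gamma_{2,2}^{-1}\|{\bf {\hat r}}\|_{\mathbf{H}_L}$, as claimed. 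I expect the main obstacle to be the decay estimate for $\gamma$: one must be careful that the near-diagonal coefficients $A_{k-l},A_{l-k}$ do not decay in $K$, so the smallness has to be extracted entirely from the $l_k^{-1}\sim 1/K$ prefactor, and the two parts of the bracket (tail-type terms versus near-diagonal terms) must be organized so that each is separately seen to vanish. The interchange of summation and the Neumann-series step are routine within the weighted $\ell^1$ Banach-algebra framework already set up in Lemmas \ref{lemBanachA}--\ref{lem2}.
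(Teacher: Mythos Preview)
Your proposal is correct and follows essentially the same approach as the paper: interchange the $k$- and $l$-summations to bound the weighted $\ell^1$ operator norm of $\mathbf{M}$ by the supremum over $l$ of the weighted column sums, then show this supremum is $O(l_{K+1}^{-1})$ by separating near-diagonal from tail contributions, and finally invoke the Neumann series for $(I-\mathbf{M})^{-1}$ to invert $\mathbf{L}_{2,2}=\mathbf{M}-I$. The paper organizes the smallness estimate into four explicit pieces rather than your two, but the mechanism is identical.
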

\begin{proof}
It is convenient to define $m_{k,l}$ so that $m_{k,k}=0$, while
\begin{multline}
m_{k,l} = l_k^{-1} e^{\beta (k-l)} \Big | A_{k-l} + A_{l+k} + B_{k-l} - B_{l+k} \Big | ~{\rm for}~ l < k \\
m_{k,l} = l_k^{-1} e^{\beta (k-l)} \Big | A_{l+k} - A_{l-k} + B_{l-k} - B_{l+k} \Big | ~{\rm for}~ l > k 
\end{multline}
Then, it follows from the definition of $\mathbf{M}$ that 
\be 
\sum_{k=K+1}^\infty e^{\beta k} \Big | \left [ \mathbf{M} {\bf a}_L \right ]_k \Big | 
\le \sum_{k=K+1}^\infty \sum_{l=K+1}^\infty e^{\beta l} |a_l| m_{k,l} 
\le \left \{ \sup_{l \ge K+1} \sum_{k=K+1}^\infty m_{k,l} \right \} \| 
{\bf a}_L \|_{\mathbf{H}_L}  
\ee
from which the first part of the Lemma follows using definition of $m_{k,l}$. 
It is also clear that since for sufficiently large $K$, $l_k$ is an increasing function of $k$ for $k \ge K+1$, 
\begin{multline}
\sum_{k=K+1}^{l-1}  
l_k^{-1} e^{\beta (k-l)} \Big | A_{l+k} - A_{l-k} + B_{l-k} - B_{l+k} \Big | 
+\sum_{k=l+1}^\infty   
l_k^{-1} e^{\beta (k-l)} \Big | A_{k-l} + A_{l+k} + B_{k-l} - B_{l+k} \Big | 
\\
\le 
l_{K+1}^{-1} \left [  
 \sum_{m=1}^{\infty} e^{\beta m} \left ( |A_m| + |B_m| \right )   
+ e^{-2 \beta l} \sum_{m=K+1+l}^{2l-1} e^{\beta m} \left ( |A_m| + |B_m | \right ) \right .\\ 
\left. + e^{-2 \beta l} \sum_{m=2l+1}^{\infty} e^{\beta m} \left ( |A_m| + |B_m | \right ) 
+ \sum_{m=1}^{l-K-1} e^{-\beta m} \left ( |A_m| + |B_m| \right ) \right ] 
\end{multline}
The supremum of the above expression over all $l \ge K+1$ clearly shrinks to 0 as $K \rightarrow \infty$.
Therefore $\gamma$ which is bounded by the above is small for large $K$.
The second part of the Lemma follows readily from
bounds on $\mathbf{M}$. 
\end{proof}

\begin{Lemma}
\label{lemgamma12}
The operator $\mathbf{L}_{1,2} : \mathbf{H}_L \rightarrow \mathbf{H}_q $ 
is bounded and satisfies 
the uniform bound
\be
\label{14.11}
\| \mathbf{L}_{1,2} {\bf a}_L \|_{\mathbf{H}_q}
\le \gamma_{1,2} \| {\bf a}_L \|_{\mathbf{H}_L} , 
\ee
where
\begin{multline}
\label{14.11.0}
\gamma_{1,2} = \sup_{l \ge K+1} 
\sum_{k=2}^K \frac{1}{l_k} \Big | e^{\beta (k-l)} \left ( A_{l+k} 
- A_{l-k} + B_{l-k} - B_{l+k} \right )  \Big | \\
+
\frac{e^{\beta}}{2 |A_1|} 
\sup_{l \ge K+1} e^{-\beta l} \Big | A_{l+1}-A_{l-1}+B_{l-1}-B_{l+1} \Big |
\end{multline}
Furthermore, for large $K$, $\gamma_{1,2}$ is small.
\end{Lemma}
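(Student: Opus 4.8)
The plan is to read $\mathbf{L}_{1,2}$ off its component formulas \eqref{14.8} and \eqref{14.8.0.F}, apply the triangle inequality together with Tonelli's theorem to interchange the $k$- and $l$-summations, and then split each weight $e^{\beta k}$ as $e^{\beta l}e^{\beta(k-l)}$, so that the sum over the $\mathbf{a}_L$-index $l$ produces a factor $\|\mathbf{a}_L\|_{\mathbf{H}_L}$ and the remaining $k$-sum, after a supremum over $l\ge K+1$, is exactly the constant $\gamma_{1,2}$ of \eqref{14.11.0}.

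Concretely, for $\mathbf{a}_L=(0,\dots,0,a_{K+1},a_{K+2},\dots)\in\mathbf{H}_L$ the image has, by construction, components supported on $k=1,\dots,K$, with $[\mathbf{L}_{1,2}\mathbf{a}_L]_0=0$, so $\|\mathbf{L}_{1,2}\mathbf{a}_L\|_{\mathbf{H}_q}=e^{\beta}|[\mathbf{L}_{1,2}\mathbf{a}_L]_1|+\sum_{k=2}^{K}e^{\beta k}|[\mathbf{L}_{1,2}\mathbf{a}_L]_k|$. For $k\ge2$ I would substitute \eqref{14.8}, bound the modulus of the $l$-sum by the sum of moduli, interchange the (nonnegative) double sum, and write $e^{\beta k}=e^{\beta l}e^{\beta(k-l)}$; this bounds the $k\ge2$ block by $\bigl(\sup_{l\ge K+1}\sum_{k=2}^{K}l_k^{-1}|e^{\beta(k-l)}(A_{l+k}-A_{l-k}+B_{l-k}-B_{l+k})|\bigr)\|\mathbf{a}_L\|_{\mathbf{H}_L}$. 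The $k=1$ term is handled identically from \eqref{14.8.0.F}: $e^{\beta}|[\mathbf{L}_{1,2}\mathbf{a}_L]_1|\le\frac{e^{\beta}}{2|A_1|}\sum_{l\ge K+1}e^{\beta l}|a_l|\,e^{-\beta l}|A_{l+1}-A_{l-1}+B_{l-1}-B_{l+1}|$, which is at most the second term of $\gamma_{1,2}$ times $\|\mathbf{a}_L\|_{\mathbf{H}_L}$. Summing the two estimates gives \eqref{14.11} with the stated $\gamma_{1,2}$, which is finite since $A_1\ne0$ by the standing assumption $A_1<0$.

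For the smallness of $\gamma_{1,2}$ as $K\to\infty$ I would split, inside the inner sum, the ``far'' coefficients $A_{l+k},B_{l+k}$ (indices $\ge l+2\ge K+3$) from the ``near-diagonal'' ones $A_{l-k},B_{l-k}$. Using $l_k\ge c_0>0$ for $k\ge2$ (which holds because $l_k>0$ for $k\ge2$ and $A_{2k},B_{2k}\to0$) and $e^{\beta(k-l)}\le1$, the far part is dominated by $c_0^{-1}\sum_{m\ge K+3}(|A_m|+|B_m|)$, hence by $c_0^{-1}\sum_{m\ge K+3}e^{\beta m}(|A_m|+|B_m|)\to0$; for polynomial $f_0$, since $A,B$ have finite sine/cosine series, this is identically zero once $K$ exceeds their degree. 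For the near-diagonal part, put $m=l-k\in[\,l-K,\,l-2\,]$ with $l-K\ge1$: if $m$ is small then $k=l-m$ is within $O(m)$ of $K$, so $l_k=k-2B_0-A_{2k}+B_{2k}$ is of order $K$ and $l_k^{-1}=O(1/K)$, while $\sum_{m\ge1}e^{-\beta m}(|A_m|+|B_m|)\le\|\mathbf{A}\|_{\mathbf{H}}+\|\mathbf{B}\|_{\mathbf{H}}<\infty$; if $m$ is large, $e^{-\beta m}(|A_m|+|B_m|)$ is a tail of a convergent series. A two-regime split of the $m$-range, say at $m=\lfloor\sqrt K\rfloor$, then shows the first supremum in \eqref{14.11.0} tends to $0$ (and is $O(1/K)$ in the polynomial case). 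The $k=1$ supremum is even easier: $|A_{l\pm1}|,|B_{l\pm1}|\to0$ as $l\to\infty$ since they are coefficients of a convergent series, so $\sup_{l\ge K+1}e^{-\beta l}|A_{l+1}-A_{l-1}+B_{l-1}-B_{l+1}|\to0$, and vanishes outright for polynomial $f_0$ once $K$ is past the degree of $A,B$. Hence $\gamma_{1,2}\to0$.

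The interchange of summations and the triangle-inequality bookkeeping are routine. The one step needing a genuine idea is the near-diagonal estimate, where one must exploit the \emph{linear} growth of $l_k$ in $k$ to beat the non-decaying weight on $A_{l-k},B_{l-k}$ when $l-k=O(1)$ (equivalently, in the polynomial-$f_0$ case, to observe that those surviving terms carry the small factor $l_{K+1-\deg}^{-1}$). I expect that to be the main obstacle; everything else is mechanical.
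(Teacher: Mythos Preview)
Your proposal is correct and follows essentially the same approach as the paper: the bound \eqref{14.11} is obtained exactly as you describe (triangle inequality, interchange of the $k$- and $l$-sums, factor $e^{\beta k}=e^{\beta l}e^{\beta(k-l)}$, then take $\sup_{l\ge K+1}$), and the smallness of $\gamma_{1,2}$ is shown by the same near/far splitting of the inner sum, exploiting the linear growth of $l_k$ for the near-diagonal terms and the tail decay of $\mathbf{A},\mathbf{B}\in\mathbf{H}$ for the far terms. The only cosmetic difference is that the paper splits the $k$-range at $K/2$ rather than your $m=\lfloor\sqrt{K}\rfloor$ split, but the mechanism is identical.
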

\begin{proof}
From definition of 
$\mathbf{L}_{1,2} $ in (\ref{14.8}), it follows that
\begin{multline}
\label{14.11.1}
\| \mathbf{L}_{1,2}  {\bf a}_{L} \|_{\mathbf{H}_q}
= 
\Big | e^{\beta} \sum_{l=K+1}^\infty  \frac{a_l}{2 A_1} 
\left ( A_{l+1}-A_{l-1}+B_{l-1}-B_{l+1} \right )
\Big | \\
+
\sum_{k=2}^K e^{\beta k} \Big | \sum_{l=K+1}^\infty \frac{a_l}{l_k} \left ( A_{l+k} 
- A_{l-k} + B_{l-k} - B_{l+k} \right )  \Big |
\\
\le 
\left \{ \sum_{l=K+1}^\infty |a_l| e^{\beta l} \right \} 
\left \{ \frac{e^{\beta}}{2 |A_1|} 
\sup_{l \ge K+1} e^{-\beta l} \Big | A_{l+1}-A_{l-1}+B_{l-1}-B_{l+1} \Big | \right .\\
\left .+ \sup_{l \ge K+1} 
\sum_{k=2}^K \frac{1}{l_k} \Big | e^{\beta (k-l)} \left ( A_{l+k} 
- A_{l-k} + B_{l-k} - B_{l+k} \right )  \Big | \right \} \ ,
\end{multline}
from which the first part of the Lemma follows.
We note that since ${\bf A}, {\bf B} \in {\bf H}$, it is clear that 
\be
\Big | A_{l+1} - A_{l-1} + B_{l-1} - B_{l+1} \Big | \rightarrow 0 ~{\rm when} ~l \ge K+1 \, {\rm and}~K
\rightarrow \infty
\ee
Also, we note that
\begin{multline} 
\sum_{k=2}^K \frac{e^{\beta (k-l)}}{l_k} \Big | B_{l-k} - A_{l-k} \Big | 
= \sum_{k=2}^{K/2} 
\frac{e^{\beta (k-l)}}{l_k} \Big | B_{l-k} - A_{l-k} \Big | 
+ \sum_{k=K/2+1}^{K} 
\frac{e^{\beta (k-l)}}{l_k} \Big | B_{l-k} - A_{l-k} \Big | \\
\le \frac{1}{l_2} \sum_{l-K/2}^{l-2} e^{-\beta m} \Big |B_m - A_m \Big |
+\frac{1}{l_{K/2+1}} \sum_{l-K}^{l-K/2-1} e^{-\beta m} \Big |B_m - A_m \Big |
\end{multline}
Clearly, the above shrinks to zero as $K \rightarrow \infty$ for any $l \ge K+1$. Also,
\be
\sum_{k=2}^K e^{\beta (k-l)} \Big | A_{l+k} - B_{l+k} \Big |  
= e^{-2 \beta l} \sum_{m=l+2}^{K+l} e^{\beta m} \Big | A_m - B_m \Big |
\ee
This also shrinks to zero as $K \rightarrow \infty$ for any $l \ge K+1$. 
Therefore, it follows from expression for $\gamma_{1,2} $ that it shrinks to zero as $K \rightarrow \infty$.
\end{proof}
\begin{Lemma}
\label{lemgamma21}
The operator $\mathbf{L}_{2,1}: \mathbf{H}_F \rightarrow \mathbf{H}_L $ is bounded and satisfies
\be
\label{14.12}
\| \mathbf{L}_{2,1} {\bf a}_F \|_{\mathbf{H}_L} \le \gamma_{2,1} \| {\bf a}_F \|_{\mathbf{H}_F} \ , 
\ee
where
\be
\label{14.12.0}
\gamma_{2,1} = \max \left \{ 
\sum_{k=K+1}^\infty \frac{2|A_k|}{l_k} e^{\beta k} \ , 
\sup_{2\le l\le K} 
\sum_{k=K+1}^\infty \Big | \frac{e^{\beta (k-l)}}{l_k} 
\left ( A_{k-l} + A_{l+k} 
+ B_{k-l} - B_{l+k} \right ) \Big | 
\right \}
\ee
Furthermore for large $K$, $\gamma_{2,1} $ is small.
\end{Lemma}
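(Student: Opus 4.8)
The plan is to estimate the $\mathbf{H}_L$-norm of $\mathbf{L}_{2,1}{\bf a}_F$ directly from its defining formula \eqref{14.9}, in the same style as the proofs of Lemmas \ref{lemgamma22m} and \ref{lemgamma12}. First I would write, for $k \ge K+1$,
\be
e^{\beta k}\left|\left[\mathbf{L}_{2,1}{\bf a}_F\right]_k\right|
\le \frac{2|A_k|}{l_k}e^{\beta k}|a_0|
+\sum_{l=2}^{K}\frac{|a_l|\,e^{\beta l}}{l_k}\,e^{\beta(k-l)}\left|A_{k-l}+A_{l+k}+B_{k-l}-B_{l+k}\right|,
\ee
using only the triangle inequality and the fact (assumed in the text, following the quasi-solution computations) that $l_k>0$ for $k\ge 2$. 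Summing over $k\ge K+1$ and interchanging the finite $l$-sum with the $k$-sum gives
\be
\|\mathbf{L}_{2,1}{\bf a}_F\|_{\mathbf{H}_L}
\le |a_0|\sum_{k=K+1}^{\infty}\frac{2|A_k|}{l_k}e^{\beta k}
+\sum_{l=2}^{K}|a_l|e^{\beta l}\sum_{k=K+1}^{\infty}\frac{e^{\beta(k-l)}}{l_k}\left|A_{k-l}+A_{l+k}+B_{k-l}-B_{l+k}\right|.
\ee
Bounding each inner sum by its supremum over the relevant index ($k\ge K+1$ in the first term, $2\le l\le K$ in the second) and using $|a_0|+\sum_{l=2}^{K}|a_l|e^{\beta l}\le\|{\bf a}_F\|_{\mathbf{H}_F}$ yields \eqref{14.12} with $\gamma_{2,1}$ the maximum of the two suprema, which is exactly \eqref{14.12.0}.

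For the ``furthermore'' claim I would argue that $\gamma_{2,1}\to 0$ as $K\to\infty$, since ${\bf A},{\bf B}\in\mathbf{H}$. For the first piece, $\sum_{k=K+1}^\infty \frac{2|A_k|}{l_k}e^{\beta k}\le \frac{2}{l_{K+1}}\sum_{k=K+1}^\infty |A_k|e^{\beta k}$, and this is a tail of the convergent series $\|{\bf A}\|_{\mathbf{H}}$ divided by a quantity $l_{K+1}\to\infty$, hence it tends to $0$. For the second piece, fix $2\le l\le K$ and split: the terms $e^{\beta(k-l)}|A_{l+k}-B_{l+k}|=e^{-2\beta l}e^{\beta(l+k)}|A_{l+k}-B_{l+k}|$ sum (over $k\ge K+1$) to at most $e^{-2\beta l}$ times a tail of $\|{\bf A}\|_{\mathbf{H}}+\|{\bf B}\|_{\mathbf{H}}$, while the terms $e^{\beta(k-l)}|A_{k-l}+B_{k-l}|$ with $k-l\ge K+1-l$ sum to at most a tail of $\|{\bf A}\|_{\mathbf{H}}+\|{\bf B}\|_{\mathbf{H}}$ (re-indexed by $m=k-l$, noting $e^{\beta(k-l)}=e^{\beta m}$), all divided by $l_{K+1}$; both tails shrink to $0$ uniformly in $l$ as $K\to\infty$. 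Taking the supremum over $l$ and then the maximum with the first piece shows $\gamma_{2,1}\to 0$.

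The only mildly delicate point — the main obstacle, such as it is — is making the uniformity in $l$ explicit in the second supremum: one must check that the bound on the inner sum does not degrade as $l$ ranges over $\{2,\dots,K\}$. This is handled by the factorization $e^{\beta(k-l)}=e^{-2\beta l}e^{\beta(l+k)}$ for the ``$l+k$'' terms (which produces the harmless extra decay $e^{-2\beta l}\le 1$) and by the re-indexing $m=k-l\ge K+1-l\ge 1$ for the ``$k-l$'' terms (whose sum is a tail of a fixed convergent series, the tail starting no later than index $1$ but, more usefully, we only need it bounded, and it is bounded by $\|{\bf A}\|_{\mathbf{H}}+\|{\bf B}\|_{\mathbf{H}}$ uniformly, with the $1/l_{K+1}$ prefactor forcing decay). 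Everything else is the same triangle-inequality-plus-Fubini bookkeeping already used twice in the preceding lemmas, and for the quantitative statements in \S\ref{Results} one simply evaluates \eqref{14.12.0} numerically for the chosen $K$ and quasi-solution.
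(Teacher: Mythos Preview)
Your proposal is correct and follows essentially the same approach as the paper: apply the triangle inequality to the defining formula \eqref{14.9}, interchange the finite $l$-sum with the $k$-sum, and bound each inner sum by its supremum to obtain \eqref{14.12.0}; then show smallness by pulling out the factor $1/l_{K+1}$ and bounding the remaining sums by tails (or the full sum) of $\|{\bf A}\|_{\mathbf{H}}+\|{\bf B}\|_{\mathbf{H}}$. If anything, your treatment of the uniformity in $l$ in the ``furthermore'' part is slightly more explicit than the paper's, which simply asserts the bound shrinks to zero for any $2\le l\le K$ without separately stressing that the $k-l$ piece is merely bounded (not a shrinking tail) when $l$ is near $K$, with the decay coming from $1/l_{K+1}$.
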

\begin{proof}
Using (\ref{14.9}), we obtain
\begin{multline}
\| \mathbf{L}_{2,1} {\bf a}_F \|_{\mathbf{H}_L} 
\le |2 a_0|  \sum_{k=K+1}^\infty \frac{|A_k|}{l_k} e^{\beta k} 
+\sum_{k=K+1}^\infty \frac{e^{\beta k}}{l_k} \sum_{l=2}^K |a_l| \Big | A_{k-l} + A_{l+k} 
+ B_{k-l} - B_{l+k} \Big | \\
\le 
|a_0| 
\sum_{k=K+1}^\infty \frac{2 |A_k|}{l_k} e^{\beta k} 
+\left [ \sum_{l=2}^K |a_l| e^{\beta l} \right ]
\sup_{2\le l\le K} 
\sum_{k=K+1}^\infty \Big | \frac{e^{\beta (k-l)}}{l_k} 
\left ( A_{k-l} + A_{l+k} 
+ B_{k-l} - B_{l+k} \right ) \Big | \\
\le \gamma_{2,1} \left \{|a_0| 
+ \sum_{k=2}^K e^{\beta k} |a_k| \right \} \ ,
\end{multline}
from which the first part of the lemma follows. We also note that
for sufficiently large $K$,
\begin{multline} 
\sum_{k=K+1}^\infty \frac{e^{\beta (l-k)}}{l_k} \Big | A_{k-l}+B_{k-l} + A_{l+k}-B_{l+k} \Big | 
\le \frac{1}{l_{K+1}} \sum_{m=1}^\infty e^{\beta m} \Big | A_m + B_m \Big |  \\
+ \frac{e^{-2 \beta l}}{l_{K+1}} \sum_{m=K+1-l}^\infty e^{\beta m} \Big | A_m - B_m \Big |  
\ ,
\end{multline}
which shrinks to zero as $K \rightarrow \infty$ for any $ 2 \le l \le K$.  
Furthermore,  
\be
\sum_{k=K+1}^\infty \frac{2}{l_k} |A_k| e^{\beta k} 
\le \frac{2}{l_{K+1}} \sum_{K+1}^\infty |A_k| e^{\beta k} \rightarrow 0
~{\rm as} ~K \rightarrow \infty  
\ee
Therefore $\gamma_{2,1} $ is small for large $K$.
\end{proof}

\begin{Lemma}
\label{lemgamma11}
$\mathbf{L}_{1,1}: \mathbf{H}_F \rightarrow \mathbf{H}_q $ is invertible if and only if the $K \times K$  matrix
${\bf J} = \left \{ J_{k,l} \right \}_{k,l} $ with elements determined by:
\begin{equation}
\label{17.3.J}
J_{1,l} = \frac{e^{\beta (1-l)}}{2 A_1} \left ( A_{l+1} - A_{l-1} + B_{l-1} - B_{l+1} \right )  
~~{\rm for} ~ l=2, 3, \cdots, K \ ,
\end{equation}
$J_{1,1}=e^{\beta}$, $J_{k,k} = -1$,  $J_{k,1} =\frac{2}{l_k} e^{\beta k} 
A_k $ 
for $2 \le k \le K$ 
\begin{equation}
\label{17.3.J.n}
J_{k,l} = \frac{e^{\beta (k-l)}}{l_k} \left ( A_{k-l} + A_{k+l} + B_{k-l} - B_{k+l} \right )  
~~{\rm for} ~ 2 \le l \le k-1 \le K-1
\end{equation}
\begin{equation}
\label{17.3.J.n.n}
J_{k,l} = \frac{e^{\beta (k-l)}}{l_k} \left ( A_{l+k} - A_{l-k} + B_{l-k} - B_{l+k} \right )  
~~{\rm for} ~ 2 \le k \le l-1 \le K-1
\end{equation}
Further $\gamma_{1,1}^{-1} := \| \mathbf{L}_{1,1}^{-1} \| = \|  {\bf J}^{-1} \|_{1} $ where
$\| . \|_1$ denotes the matrix $1$-norm.
\end{Lemma}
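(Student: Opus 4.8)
The plan is to reduce the statement to the elementary fact that a linear map between two copies of $\RR^K$ carrying the $\ell^1$ norm is invertible precisely when its matrix is, and that the induced operator norm of the inverse is the maximal absolute column sum of the inverse matrix. First I would set up isometric identifications of the two $K$-dimensional spaces with $(\RR^K,\|\cdot\|_1)$. An element $\mathbf{a}_F=(a_0,0,a_2,\ldots,a_K,0,\ldots)\in\mathbf{H}_F$ is determined by the $K$ numbers $a_0,a_2,\ldots,a_K$, and $\|\mathbf{a}_F\|_{\mathbf{H}_F}=|a_0|+\sum_{k=2}^{K}e^{\beta k}|a_k|$ (note the index-$0$ slot carries weight $1$, not $e^{\beta}$); hence $\mathbf{a}_F\mapsto(\tilde a_0,\tilde a_2,\ldots,\tilde a_K):=(a_0,e^{2\beta}a_2,\ldots,e^{K\beta}a_K)$ is an isometric isomorphism onto $(\RR^K,\|\cdot\|_1)$. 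Likewise an element of $\mathbf{H}_q$ is determined by $\tilde r_1,\ldots,\tilde r_K$ with norm $\sum_{k=1}^K e^{\beta k}|\tilde r_k|$, and $(\tilde r_1,\ldots,\tilde r_K)\mapsto(e^{\beta}\tilde r_1,\ldots,e^{K\beta}\tilde r_K)$ is an isometric isomorphism onto $(\RR^K,\|\cdot\|_1)$.

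Next I would read off the matrix of $\mathbf{L}_{1,1}$ in these rescaled coordinates directly from \eqref{14.7} and \eqref{14.7.0.F}. Writing $a_l=e^{-\beta l}\tilde a_l$ and multiplying the $k$-th component of $\mathbf{L}_{1,1}\mathbf{a}_F$ by the range weight $e^{\beta k}$, every monomial $c\,a_l$ becomes $\bigl(e^{\beta(k-l)}c\bigr)\tilde a_l$. Carried out term by term: the row $k=1$ gives $J_{1,1}=e^{\beta}$ (from the $a_0$ term, with factor $e^{\beta\cdot 1}/e^{0}$) and $J_{1,l}=\dfrac{e^{\beta(1-l)}}{2A_1}\bigl(A_{l+1}-A_{l-1}+B_{l-1}-B_{l+1}\bigr)$ for $l\ge 2$, matching \eqref{17.3.J}; for $k\ge 2$, the $a_0$ term gives $J_{k,1}=\dfrac{2e^{\beta k}}{l_k}A_k$, the $-a_k$ term gives $J_{k,k}=-1$, and the two off-diagonal sums give exactly \eqref{17.3.J.n} and \eqref{17.3.J.n.n}. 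Thus, under the two isometries, $\mathbf{L}_{1,1}$ is represented by the matrix $\mathbf{J}$.

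Finally I would conclude. Since both identifications are isometric isomorphisms, $\mathbf{L}_{1,1}$ is invertible if and only if $\mathbf{J}$ is, and in that case $\mathbf{L}_{1,1}^{-1}$ corresponds to $\mathbf{J}^{-1}$, so $\|\mathbf{L}_{1,1}^{-1}\|$ equals the operator norm of $\mathbf{J}^{-1}$ on $(\RR^K,\|\cdot\|_1)$. It then remains to recall that for any matrix $M$ on $(\RR^K,\|\cdot\|_1)$ one has $\|M\|=\max_l\sum_k|M_{kl}|$, the matrix $1$-norm: the bound $\|Mx\|_1\le\sum_l|x_l|\sum_k|M_{kl}|\le\bigl(\max_l\sum_k|M_{kl}|\bigr)\|x\|_1$ gives ``$\le$'', and equality is attained on the standard basis vector for the column of maximal absolute sum. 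Applying this to $M=\mathbf{J}^{-1}$ yields $\gamma_{1,1}^{-1}=\|\mathbf{L}_{1,1}^{-1}\|=\|\mathbf{J}^{-1}\|_1$. There is no genuine obstacle in this argument; the only point requiring care is the index bookkeeping — the domain omits the index-$1$ slot while the range omits the index-$0$ slot, and the $a_0$ coordinate carries weight $1$ rather than $e^{\beta}$ — which is precisely what makes the first row and first column of $\mathbf{J}$ differ in form from the remaining entries.
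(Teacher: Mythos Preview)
Your proposal is correct and follows the same approach as the paper: identify $\mathbf{H}_F$ and $\mathbf{H}_q$ isometrically with $(\RR^K,\|\cdot\|_1)$ via the weight factors, read off that $\mathbf{L}_{1,1}$ becomes the matrix $\mathbf{J}$, and conclude that invertibility and the operator norm of the inverse transfer to $\mathbf{J}$ and $\|\mathbf{J}^{-1}\|_1$. The paper's own proof is essentially a two-sentence sketch of exactly this; your version simply spells out the index bookkeeping and the standard $\ell^1$ operator-norm identity in full.
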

\begin{proof}
The proof follows from examining the
the definition of $\mathbf{L}_{1,1}$ in (\ref{14.7})-(\ref{14.7.0.F}) and noting that both the
domain and range of $\mathbf{L}_{1,1}$ is $K$ dimensional.   
The factors of $e^{\beta}$ 
in the matrix elements of ${\bf J}$ ensure that
the $\mathbf{H}$ norm of $\mathbf{L}_{1,1}^{-1}$ is the $1$-norm of the matrix ${\bf J}^{-1}$, if and when
it exists.
\end{proof}

\begin{Proposition}
\label{prop1}
If for some suitably large $K$, $\mathbf{L}_{1,1}^{-1}$ exists with
$\| \mathbf{L}_{1,1}^{-1} \| = \gamma_{1,1}^{-1}$ satisfying 
\be
\label{eqPropcond}
\gamma_{1,2} \gamma_{2,1} \gamma_{2,2}^{-1} \gamma_{1,1}^{-1} < 1
\ee 
then 
$\mathbf{L}^{-1} : \mathbf{H}_0 \rightarrow \mathbf{H}_1$
exists and satisfies
\be
\| \mathbf{L}^{-1} \| \le {\tilde M} , 
\ee
where 
\begin{multline}
{\tilde M} = \max \left \{ 
\gamma_{1,1}^{-1} 
\left ( 1-\frac{\gamma_{1,2} \gamma_{2,1}}{\gamma_{1,1} \gamma_{2,2}} 
\right )^{-1} \left ( 1 + \gamma_{2,2}^{-1} \gamma_{2,1} \right )
, \right . \\   
\left . \gamma_{2,2}^{-1} +\gamma_{1,1}^{-1} 
\left ( 1-\frac{\gamma_{1,2} \gamma_{2,1}}{\gamma_{1,1} \gamma_{2,2}} 
\right )^{-1} \left ( \gamma_{1,2} \gamma_{2,2}^{-1} + 
\gamma_{2,2}^{-2} \gamma_{2,1} \gamma_{1,2} \right ) \right \}
\end{multline}
When condition (\ref{eqPropcond}) is satisfied, for given $a_1 \in \mathbb{R}$, $r \in \mathcal{S}$, 
the linear system
$\mathcal{L} \Phi = r $ has a unique solution in the form
\be
\Phi (\nu) = \mathcal{K} [ r ] (\nu)+ a_1 G (\nu)  \ ,
\ee
where  $\mathcal{K} : \mathcal{S} \rightarrow \mathcal{E}$ is a linear operator
\be
\| \mathcal{K} \| \le {\tilde M} \max   
\left \{ \frac{1}{2 |A_1|}, \sup_{k \ge 2} \frac{1}{l_k} 
\right \} =: M
\ee 
and 
\be 
G(\nu) = g_0 + \cos \nu + \sum_{k=2}^\infty g_k \cos (k \nu)  \ ,
\ee 
where ${\bf g} = \left ( g_0, 0, g_2 , \cdots \right ) \in \mathbf{H}_1 $ is given by
$ {\bf g} = \mathbf{L}^{-1} {\bf h} $
where ${\bf h} = (0, h_1, h_2 , \cdots  ) \in \mathbf{H}_0$, where
\be 
h_1 = \frac{1}{2 A_1} \left ( 1-2 B_0 - A_2 +B_{2} \right ) 
\ee
and for $k \ge 2$,
\be 
h_k = -\frac{1}{l_k} \left ( A_{k-1} + A_{k+1} + B_{k-1} - B_{k+1} \right )  
\ee
Furthermore,
\be
\| G \|_{\mathcal{E}} \le e^{\beta} + {\tilde M} \| {\bf h} \|_{\mathcal{H}_0}
\ee
\end{Proposition}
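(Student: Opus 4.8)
The plan is to invert $\mathbf{L}$ by exploiting the $2\times2$ block decomposition (\ref{14.6}) associated with the splittings $\mathbf{H}_1=\mathbf{H}_F\oplus\mathbf{H}_L$ of the domain and $\mathbf{H}_0=\mathbf{H}_q\oplus\mathbf{H}_L$ of the range, and then to unwind the affine dependence on $a_1$ that was built into the right-hand side ${\bf {\tilde r}}$. First I fix $K$ large enough that Lemmas \ref{lemgamma22m}, \ref{lemgamma12} and \ref{lemgamma21} apply, so that $\mathbf{L}_{2,2}$ is invertible on $\mathbf{H}_L$ with $\|\mathbf{L}_{2,2}^{-1}\|\le\gamma_{2,2}^{-1}$ and $\mathbf{L}_{1,2},\mathbf{L}_{2,1}$ are bounded with the stated constants; combined with the hypothesis that the finite matrix ${\bf J}$ of Lemma \ref{lemgamma11} is invertible (that is, $\mathbf{L}_{1,1}^{-1}$ exists), I can solve the second equation of (\ref{14.6}) for ${\bf a}_L=-\mathbf{L}_{2,2}^{-1}\mathbf{L}_{2,1}{\bf a}_F+\mathbf{L}_{2,2}^{-1}{\bf {\tilde r}}_L$ and substitute into the first to reach the reduced finite-dimensional equation (\ref{14.7.n}) for ${\bf a}_F$.

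Next I invert the reduced operator. Since $\|\mathbf{L}_{1,1}^{-1}\mathbf{L}_{1,2}\mathbf{L}_{2,2}^{-1}\mathbf{L}_{2,1}\|\le\gamma_{1,1}^{-1}\gamma_{1,2}\gamma_{2,2}^{-1}\gamma_{2,1}<1$ by (\ref{eqPropcond}), the operator $I-\mathbf{L}_{1,1}^{-1}\mathbf{L}_{1,2}\mathbf{L}_{2,2}^{-1}\mathbf{L}_{2,1}$ is invertible on $\mathbf{H}_F$ by a Neumann series, with inverse bounded by $\bigl(1-\gamma_{1,2}\gamma_{2,1}/(\gamma_{1,1}\gamma_{2,2})\bigr)^{-1}$. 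Feeding this into (\ref{14.7.n}) gives the bound (\ref{14.7.n.n}) on $\|{\bf a}_F\|_{\mathbf{H}_F}$ in terms of $\|{\bf {\tilde r}}_q\|$ and $\|{\bf {\tilde r}}_L\|$, and then (\ref{14.7.n.n.n}) controls $\|{\bf a}_L\|_{\mathbf{H}_L}$. Because ${\bf a}_F$ and ${\bf a}_L$ have disjoint supports, $\|{\bf a}\|_{\mathbf{H}}=\|{\bf a}_F\|_{\mathbf{H}_F}+\|{\bf a}_L\|_{\mathbf{H}_L}$, and similarly $\|{\bf {\tilde r}}\|_{\mathbf{H}_0}=\|{\bf {\tilde r}}_q\|_{\mathbf{H}_q}+\|{\bf {\tilde r}}_L\|_{\mathbf{H}_L}$; adding the two bounds and collecting the coefficients of $\|{\bf {\tilde r}}_q\|$ and of $\|{\bf {\tilde r}}_L\|$ produces exactly the two quantities inside the $\max$ defining ${\tilde M}$, so that $\|{\bf a}\|_{\mathbf{H}}\le{\tilde M}\,\|{\bf {\tilde r}}\|_{\mathbf{H}_0}$. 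Uniqueness is automatic, since any solution of $\mathbf{L}{\bf a}={\bf {\tilde r}}$ must satisfy the two block relations and hence the uniquely solvable reduced equation; thus $\mathbf{L}^{-1}:\mathbf{H}_0\to\mathbf{H}_1$ exists with $\|\mathbf{L}^{-1}\|\le{\tilde M}$.

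Finally I read off the structure of $\Phi$. By the definition (\ref{13.2.4.1.0}), the right-hand side splits affinely as ${\bf {\tilde r}}={\bf {\tilde r}}(r)+a_1{\bf h}$, where ${\bf {\tilde r}}(r)=\bigl(0,\,r_1/(2A_1),\,\{r_k/l_k\}_{k\ge2}\bigr)$ is $\mathcal{S}$-linear in $r$ and ${\bf h}$ is the fixed sequence with $h_1=(1-2B_0-A_2+B_2)/(2A_1)$ and $h_k=-(A_{k-1}+A_{k+1}+B_{k-1}-B_{k+1})/l_k$ for $k\ge2$. Hence ${\bf a}=\mathbf{L}^{-1}{\bf {\tilde r}}(r)+a_1\,\mathbf{L}^{-1}{\bf h}$; passing through the isomorphism $\mathbf{H}\cong\mathcal{E}$ and restoring the $a_1\cos\nu$ term that $\mathbf{H}_1$ does not carry gives $\Phi=\mathcal{K}[r]+a_1G$ with ${\bf g}=\mathbf{L}^{-1}{\bf h}$ and $G(\nu)=g_0+\cos\nu+\sum_{k\ge2}g_k\cos(k\nu)$. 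The bound on $\mathcal{K}$ follows by estimating $\|{\bf {\tilde r}}(r)\|_{\mathbf{H}_0}\le\max\{1/(2|A_1|),\ \sup_{k\ge2}1/l_k\}\,\|r\|_{\mathcal{S}}$ directly from the definition, so that $\|\mathcal{K}[r]\|_{\mathcal{E}}\le{\tilde M}\max\{1/(2|A_1|),\ \sup_{k\ge2}1/l_k\}\|r\|_{\mathcal{S}}=M\|r\|_{\mathcal{S}}$; and $\|G\|_{\mathcal{E}}=\|{\bf g}\|_{\mathbf{H}}+e^{\beta}\le e^{\beta}+{\tilde M}\|{\bf h}\|_{\mathbf{H}_0}$, the term $e^{\beta}$ being the contribution of the $\cos\nu$ coefficient of $G$, which equals $1$.

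The main obstacle is the bookkeeping in the second paragraph: one must propagate the four operator-norm constants through the back-substitution while separately tracking the ${\bf {\tilde r}}_q$ and ${\bf {\tilde r}}_L$ contributions, in order to land on the precise two-term $\max$ in the definition of ${\tilde M}$ rather than a coarser bound. Everything else --- the Neumann series, the equivalence of $\mathcal{L}\Phi=r$ in $\mathcal{E}$ with $\mathbf{L}{\bf a}={\bf {\tilde r}}$ in $\mathbf{H}$, and the affine split of the right-hand side --- is routine given the preceding lemmas.
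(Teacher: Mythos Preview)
Your proposal is correct and follows essentially the same route as the paper: block-decompose via (\ref{14.6}), invert $\mathbf{L}_{2,2}$ and the Schur complement by Neumann series under (\ref{eqPropcond}), add the bounds (\ref{14.7.n.n})--(\ref{14.7.n.n.n}) using the $l^1$ splitting $\|{\bf a}\|=\|{\bf a}_F\|+\|{\bf a}_L\|$ and $\|{\bf {\tilde r}}\|=\|{\bf {\tilde r}}_q\|+\|{\bf {\tilde r}}_L\|$ to obtain the two coefficients whose maximum is ${\tilde M}$, and then read off $\Phi=\mathcal K[r]+a_1G$ from the affine splitting of ${\bf {\tilde r}}$ in (\ref{13.2.4.1.0}) together with the estimate (\ref{eqfac}). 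The paper's proof is simply a terser statement of exactly this argument.
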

\begin{proof} The first part follows from applying
estimates in Lemmas \ref{lemgamma22m}-\ref{lemgamma21} to (\ref{14.6}) and
using 
$\| {\bf {\tilde r}} \|_{\mathbf{H}_0}=
\| {\bf {\tilde r}}_q \|_{\mathbf{H}_q}
+\| {\bf {\tilde r}}_L \|_{\mathbf{H}_L}
$,
$\| {\bf a}\|_{\mathbf{H}_1}=
\| {\bf a}_F \|_{\mathbf{H}_F}
+\| {\bf a}_L \|_{\mathbf{H}_L}$.
For the second part, we note that if ${\bf r} =0$, then ${\bf {\tilde r}}={\bf h} a_1$ and
therefore in that case ${\bf a} = a_1 \mathbf{L}^{-1} {\bf h} \in \mathbf{H}_1$. ${\bf a}$
is isomorphic to $a_1 \Phi \in \mathcal{E}$ where
$\mathcal{L} \Phi = - \mathcal{L} [\cos \nu ]$, where $\Phi$ has the form
$a_0 + \sum_{l=2}^\infty a_l \cos (l \nu)$. Therefore, $G= \Phi + \cos \nu$ 
is the unique solution to $\mathcal{L} [G]=0$ with unit coefficient of $\cos \nu$.
When $a_1=0$, but ${\bf r} \ne 0$, 
${\bf {\tilde r}} = \left ( \frac{r_1}{2 A_1}, \left \{ \frac{r_k}{l_k} \right \}_{k=2}^\infty\right ) $ 
for which case
\be
\label{eqfac}
\| {\bf {\tilde r}} \|_{\mathbf{H}_0} \le \max 
\left \{ \frac{1}{2 |A_1|}, \sup_{k \ge 2} \frac{1}{|l_k|} 
\right \} \| {\bf r} \|_{\mathbf{H}_0}
\ee
and corresponding ${\bf a} \in \mathbf{H}_1 $ is isomorphic to 
$\Phi \in \mathcal{E}$ with no $\cos \nu$ term uniquely satisfying $\mathcal{L} \Phi = r$.
This is defined to be $\mathcal{K} r $.
Using linear superposition of the two cases: {\bf i}. $a_1 \ne 0$, ${\bf r} =0$ and {\bf ii.} $a_1=0$, 
${\bf r}\ne 0$ gives the 
the second part of the proposition. The bounds on $\| G\|_{\mathcal{E}}$ follow from
the bounds on ${\bf L}^{-1} {\bf h}$ and adding to it the contribution from the $\cos \nu$ term. 
\end{proof}

\begin{Corollary}
\label{cor3}
For $a_1 \in I:=[-\epsilon_0, \epsilon_0] $, define
$\Phi^{(0)} (\nu) = -\mathcal{K} R + a_1 G = \mathcal{N} [ 0 ]$, where operator $\mathcal{N}$ is
defined in (\ref{eqNon}).
Then $\Phi^{(0)} $ satisfies
\be
\label{B0def}
\| \Phi^{(0)} \|_{\mathcal{E}} \le M \| R \|_{\mathcal{S}} + \epsilon_0  
\| G \|_{\mathcal{E}} =: B_0     
\ee
\end{Corollary}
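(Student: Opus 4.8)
The plan is to treat this Corollary as an immediate consequence of Proposition \ref{prop1} and the triangle inequality, once one records the elementary fact that the nonlinear operator $\mathcal{M}$ annihilates the zero function. First I would unwind the definition $\Phi^{(0)} = \mathcal{N}[0]$: by (\ref{eqNon}), $\mathcal{N}[0] = \mathcal{K}\mathcal{M}[0] - \mathcal{K}R + a_1 G$. The one point worth checking is that $\mathcal{M}[0] = 0$. Setting $W = 0$ in (\ref{M1M2def}) gives $M_1 = e^{0}\cos 0 - 1 - 0 = 0$ and $M_2 = e^{0}\sin 0 - 0 = 0$, so $\mathcal{\tilde M}[0] = 0$ by (\ref{Mdef}), and hence $\mathcal{M}[0] = 0$ by the definition of $\mathcal{M}$ in terms of $\mathcal{\tilde M}$. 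Consequently $\Phi^{(0)} = -\mathcal{K}R + a_1 G$, a purely ``linear response'' object.

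Next I would apply the triangle inequality in $\mathcal{E}$:
\be
\| \Phi^{(0)} \|_{\mathcal{E}} \le \| \mathcal{K} R \|_{\mathcal{E}} + |a_1| \, \| G \|_{\mathcal{E}} .
\ee
By Lemma \ref{lemR} the residual $R$ lies in $\mathcal{S}$ (the quasi-solution hypotheses guaranteeing $\|R_0\|_{\mathcal{E}} < c_0^2/2$ are in force), so the bound $\|\mathcal{K}\| \le M$ from Proposition \ref{prop1} gives $\|\mathcal{K} R\|_{\mathcal{E}} \le M \|R\|_{\mathcal{S}}$. Since $a_1 \in I = [-\epsilon_0, \epsilon_0]$ we have $|a_1| \le \epsilon_0$, and substituting the two estimates yields
\be
\| \Phi^{(0)} \|_{\mathcal{E}} \le M \| R \|_{\mathcal{S}} + \epsilon_0 \, \| G \|_{\mathcal{E}} = B_0 ,
\ee
which is exactly (\ref{B0def}).

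There is no genuine obstacle here; the argument is entirely bookkeeping with operator-norm bounds already established in Proposition \ref{prop1} and Lemma \ref{lemR}. The only step that is not purely formal is the observation $\mathcal{M}[0] = 0$, which is what collapses $\mathcal{N}[0]$ to its affine part $-\mathcal{K}R + a_1 G$; everything downstream is the triangle inequality.
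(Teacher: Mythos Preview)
Your proof is correct and follows essentially the same approach as the paper, which simply states that the result follows from the bounds on $\mathcal{K}$ in Proposition \ref{prop1}. Your version is more detailed in that you explicitly verify $\mathcal{M}[0]=0$ and invoke Lemma \ref{lemR} for $R\in\mathcal{S}$, but the underlying argument is identical: apply the operator-norm bound $\|\mathcal{K}\|\le M$ together with the triangle inequality and $|a_1|\le\epsilon_0$.
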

\begin{proof}
The proof follows from bounds on operator $\mathcal{K}$ in the previous proposition. 
\end{proof}
\begin{Lemma} 
\label{lemG0}
Assume $G_0\in\mathcal{E}$ is an approximate expression for $G$ in the sense that
$\| \mathcal{L} G_0 \|_{\mathcal{S}} = \epsilon_G $ is small and $\cos \nu$ coefficient of
$G_0$ is also 1, as for $G$. If conditions of Proposition \ref{prop1} hold, then
\be
\| G - G_0 \|_{\mathcal{E}} \le M \epsilon_G 
\ee  
In particular, 
\be
\| G \|_{\mathcal{E}} \le \| G_0 \|_{\mathcal{E}} + M \epsilon_G 
\ee
and
if $G_0(\pi) \ne 0$ and $\epsilon_G$ is sufficiently small then
\be 
\Big |G (\pi) \Big | > \Big | G_0 (\pi ) \Big | 
- M \epsilon_G > 0 
\ee
\end{Lemma}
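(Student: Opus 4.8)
The plan is to set $E := G - G_0$ and identify it explicitly through the solution operator $\mathcal{K}$ constructed in Proposition \ref{prop1}. First I would use linearity of $\mathcal{L}$ together with the characterization of $G$ from Proposition \ref{prop1} (namely $\mathcal{L} G = 0$, $G$ having unit $\cos\nu$-coefficient) to write $\mathcal{L} E = \mathcal{L} G - \mathcal{L} G_0 = -\mathcal{L} G_0$. Since $G_0 \in \mathcal{E}$ is even, $A \in \mathcal{S}$ and $B \in \mathcal{E}$, each of the three terms $G_0'$, $2 A G_0$ and $2 B \Psi_0$ (with $\Psi_0$ the harmonic conjugate of $G_0$) is odd, so $\mathcal{L} G_0 \in \mathcal{S}$, with $\| \mathcal{L} G_0 \|_{\mathcal{S}} = \epsilon_G$ by hypothesis; moreover the $\cos\nu$-coefficient of $E$ is $1 - 1 = 0$ because of the normalization imposed on $G_0$.

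Next I would invoke the uniqueness-and-representation clause of Proposition \ref{prop1}: for $r \in \mathcal{S}$ and $a_1 \in \mathbb{R}$, the equation $\mathcal{L}\Phi = r$ has the \emph{unique} solution $\Phi = \mathcal{K} r + a_1 G$, in which $a_1$ is precisely the $\cos\nu$-coefficient of $\Phi$. Applying this with $r = -\mathcal{L} G_0$ and $a_1 = 0$ forces $E = -\mathcal{K}[\mathcal{L} G_0]$, and then the operator bound $\| \mathcal{K} \| \le M$ from the same proposition gives
\be
\| G - G_0 \|_{\mathcal{E}} = \| \mathcal{K}[\mathcal{L} G_0] \|_{\mathcal{E}} \le M \| \mathcal{L} G_0 \|_{\mathcal{S}} = M \epsilon_G .
\ee
The estimate $\| G \|_{\mathcal{E}} \le \| G_0 \|_{\mathcal{E}} + M\epsilon_G$ follows at once from the triangle inequality.

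For the pointwise bound at $\nu = \pi$ I would use the elementary embedding recorded in the Remarks, $\| \phi \|_{\infty} \le \| \phi \|_{\mathcal{E}}$ for $\phi \in \mathcal{E}$ (each mode is dominated in absolute value by its coefficient). This yields $|G(\pi) - G_0(\pi)| \le \| G - G_0 \|_{\mathcal{E}} \le M \epsilon_G$, whence the reverse triangle inequality gives $|G(\pi)| \ge |G_0(\pi)| - M\epsilon_G$; provided $\epsilon_G < |G_0(\pi)|/M$ this lower bound is strictly positive, which is the assertion.

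I do not expect a genuine obstacle here: the only point requiring care is that the identification $E = -\mathcal{K}[\mathcal{L}G_0]$ rests on the \emph{uniqueness} part of Proposition \ref{prop1} — so that $E$, being a solution of $\mathcal{L}\Phi = -\mathcal{L} G_0$ with vanishing $\cos\nu$-coefficient, must actually coincide with $-\mathcal{K}[\mathcal{L}G_0]$ and not merely solve the same equation — together with the parity check showing $\mathcal{L} G_0 \in \mathcal{S}$ so that $\mathcal{K}$ is legitimately applicable to it. Everything else is a one-line use of the norm of $\mathcal{K}$ and of the weighted-$\ell^1$-to-sup embedding.
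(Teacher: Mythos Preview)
Your proposal is correct and follows essentially the same approach as the paper: compute $\mathcal{L}[G-G_0] = -\mathcal{L}G_0$, note the $\cos\nu$-coefficient of $G-G_0$ vanishes, invoke Proposition~\ref{prop1} to identify $G-G_0 = -\mathcal{K}\mathcal{L}G_0$, and then use the bound on $\mathcal{K}$ together with the triangle inequality and the sup-norm embedding. You have simply made explicit a few points (the parity check that $\mathcal{L}G_0 \in \mathcal{S}$, the role of uniqueness) that the paper's proof leaves implicit.
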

\begin{proof}
Since $\mathcal{L} [G-G_0] = -\mathcal{L} [G_0] $ and coefficient of $\cos \nu$ for $G-G_0$ is zero,
applying Proposition \ref{prop1}, it follows that
\be
G-G_0 = -\mathcal{K} \mathcal{L} G_0     
\ee
which gives the result 
\be
\| G - G_0 \|_{\mathcal{E}} \le M \epsilon_G 
\ee
The remaining two parts of the Lemma follow from triangular inequality and 
the observation
$\Big | G(\pi)-G_0 (\pi) \Big | 
\le \| G - G_0 \|_{\mathcal{E}}$.
\end{proof}

\section{Nonlinearity bounds and solution to (\ref{eqNon}) for given $a_1\in I$}

\begin{Proposition}
\label{prop1.1}
$\mathcal{{\tilde M}}$ defined in (\ref{Mdef}) 
satisfies $\mathcal{{\tilde M}}:\mathcal{A} \rightarrow 
\mathcal{S}$ with
\begin{multline}
\| \mathcal{{\tilde M}} [ W ] \|_{\mathcal{S}} \le 
\frac{2}{3} \| A \|_{\mathcal{S}} \left [ 
e^{2 \| W \|_{\mathcal{A}} } 
\left ( e^{\| W \|_{\mathcal{A}}} - 1 - \| W \|_{\mathcal{A}} \right ) \right. \\
\left . + \left ( 1 + \| W \|_{\mathcal{A}} \right )  
\left ( e^{2 \| W \|_{\mathcal{A}}} - 1 - 2 \| W \|_{\mathcal{A}} \right ) 
+2 \| W \|^2_{\mathcal{A}} \right ]
\\
+2 \| B \|_{\mathcal{E}} \left [
e^{2 \| W \|_{\mathcal{A}}} \left ( e^{\| W \|_{\mathcal{A}}}
-1 - \| W \|_{\mathcal{A}} \right ) 
+ \| W \|_{\mathcal{A}} \left ( e^{2 \| W \|_{\mathcal{A}}} - 1 \right ) 
\right ]
\end{multline}
In particular if $\|W \|_{\mathcal{A}} \le \frac{1}{16}$,
\be 
\| \mathcal{{\tilde M}} [W ] \| \le \left ( 4 \| A \|_{\mathcal{S}} + 
6 \| \mathcal{B} \|_{\mathcal{E}} \right ) \| W \|_{\mathcal{A}}^2
\ee
\end{Proposition}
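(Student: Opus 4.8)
The plan is to restrict to the boundary circle $|\eta|=1$, writing $W(e^{i\nu})=\Phi(\nu)+i\Psi(\nu)$ with $\Phi=\Re W\in\mathcal{E}$ and $\Psi=\Im W\in\mathcal{S}$; by the Remarks following the definitions of $\mathcal{A}$, $\mathcal{E}$, $\mathcal{S}$ one has $\|\Phi\|_{\mathcal{E}}=\|W\|_{\mathcal{A}}$ and $\|\Psi\|_{\mathcal{S}}\le\|W\|_{\mathcal{A}}$, and (Corollary \ref{cor1} with $m=0$) $e^{W}\in\mathcal{A}$. Since $W$ has real Taylor coefficients, $\Phi$, $e^{2\Phi}$ and $\Re e^{W}$ are even while $\Psi$ and $\Im e^{W}$ are odd, so $M_1\in\mathcal{E}$ and $M_2\in\mathcal{S}$; with $A\in\mathcal{S}$, $B\in\mathcal{E}$, Lemma \ref{lem2} then gives $AM_1\in\mathcal{S}$, $BM_2\in\mathcal{S}$, hence $\mathcal{\tilde M}[W]\in\mathcal{S}$, together with the submultiplicative estimates $\|AM_1\|_{\mathcal{S}}\le\|A\|_{\mathcal{S}}\|M_1\|_{\mathcal{E}}$ and $\|BM_2\|_{\mathcal{S}}\le\|B\|_{\mathcal{E}}\|M_2\|_{\mathcal{S}}$. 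In view of $\mathcal{\tilde M}[W]=-\tfrac{2}{3}AM_1-2BM_2$ it therefore suffices to bound $\|M_1\|_{\mathcal{E}}$ and $\|M_2\|_{\mathcal{S}}$.

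The heart of the proof is a choice of algebraic decomposition of $M_1$, $M_2$ into products of building blocks each controlled by Corollaries \ref{cor1}, \ref{cor2} and Lemma \ref{lem2}. Writing $\Re e^{W}=1+\Phi+\Re(e^{W}-1-W)$, $\Im e^{W}=\Psi+\Im(e^{W}-1-W)$ and $e^{2\Phi}=1+2\Phi+(e^{2\Phi}-1-2\Phi)$, I would rearrange to
\[
M_1 = e^{2\Phi}\,\Re(e^{W}-1-W) + (1+\Phi)\,(e^{2\Phi}-1-2\Phi) + 2\Phi^2,
\]
\[
M_2 = e^{2\Phi}\,\Im(e^{W}-1-W) + (e^{2\Phi}-1)\,\Psi.
\]
Now estimate each factor, writing $x:=\|W\|_{\mathcal{A}}$: Corollary \ref{cor2} applied to $2W$ gives $\|e^{2\Phi}\|_{\mathcal{E}}\le e^{2x}$, $\|e^{2\Phi}-1\|_{\mathcal{E}}\le e^{2x}-1$, $\|e^{2\Phi}-1-2\Phi\|_{\mathcal{E}}\le e^{2x}-1-2x$; Corollary \ref{cor1} with $m=2$ (plus the $\mathcal{A}$-to-$\mathcal{E}$/$\mathcal{S}$ comparison of the Remarks) gives $\|\Re(e^{W}-1-W)\|_{\mathcal{E}}$ and $\|\Im(e^{W}-1-W)\|_{\mathcal{S}}$ both $\le e^{x}-1-x$; finally $\|1+\Phi\|_{\mathcal{E}}\le 1+x$, $\|\Phi^2\|_{\mathcal{E}}\le x^2$, $\|\Psi\|_{\mathcal{S}}\le x$. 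Multiplying these out with Lemma \ref{lem2} yields precisely
\[
\|M_1\|_{\mathcal{E}}\le e^{2x}(e^{x}-1-x)+(1+x)(e^{2x}-1-2x)+2x^2,
\]
\[
\|M_2\|_{\mathcal{S}}\le e^{2x}(e^{x}-1-x)+x(e^{2x}-1),
\]
which are the two bracketed expressions in the statement; combining with the submultiplicative bounds above proves the main inequality.

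For the final assertion, take $x\le\tfrac{1}{16}$ and use the elementary estimates $e^{x}-1-x\le\tfrac{1}{2}x^2e^{x}$, $e^{2x}-1-2x\le 2x^2e^{2x}$, $e^{2x}-1\le 2xe^{2x}$ (each immediate by comparing power series coefficients), which bound the first bracket by $x^2\big(\tfrac{1}{2}e^{3x}+2(1+x)e^{2x}+2\big)$ and the second by $x^2\big(\tfrac{1}{2}e^{3x}+2e^{2x}\big)$. Since these prefactors are increasing in $x$, evaluating at $x=\tfrac{1}{16}$ shows the first is $\le 6x^2$ and the second is $\le 3x^2$; hence $\|\mathcal{\tilde M}[W]\|_{\mathcal{S}}\le\big(\tfrac{2}{3}\cdot 6\,\|A\|_{\mathcal{S}}+2\cdot 3\,\|B\|_{\mathcal{E}}\big)x^2=(4\|A\|_{\mathcal{S}}+6\|B\|_{\mathcal{E}})\|W\|_{\mathcal{A}}^2$.

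I expect the only genuine difficulty to be bookkeeping: one must hit exactly the decomposition that reproduces the grouping in the statement (many natural rearrangements give a bound of the same shape but with worse, or differently packaged, constants), and one must invoke the correct mixed $\mathcal{E}$--$\mathcal{S}$ submultiplicativity from Lemma \ref{lem2} and the correct truncation order $m$ in Corollaries \ref{cor1}--\ref{cor2} at each step. The closing numerical check at $x=1/16$ is routine and leaves comfortable slack.
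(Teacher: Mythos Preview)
Your proposal is correct and follows essentially the same approach as the paper: you use the identical algebraic decompositions $M_1 = e^{2\Phi}\Re(e^{W}-1-W) + (1+\Phi)(e^{2\Phi}-1-2\Phi) + 2\Phi^2$ and $M_2 = e^{2\Phi}\Im(e^{W}-1-W) + (e^{2\Phi}-1)\Psi$, bound each factor via Corollaries \ref{cor1}, \ref{cor2} and combine with Lemma \ref{lem2}, exactly as the paper does. Your closing numerical check uses power-series remainder estimates where the paper invokes the mean value theorem, but this is a cosmetic difference only.
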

\begin{proof}
Recall $M_1$ and $M_2$ defined in expression \eqref{Mdef} defining $\mathcal{\tilde M}$.
We may rewrite 
\be 
M_1 = e^{2 \Re W} \Re \left ( e^{W} -1 - W \right ) 
+ \Re ~(1+W) ~\left ( e^{2 \Re W} - 1 - 2 \Re W \right ) 
+ 2 [\Re W ]^2 
\ee
Using Corollary \ref{cor2},
\be
\| M_1 \|_E \le 
e^{2 \| W \|_{\mathcal{A}} } 
\left ( e^{\| W \|_{\mathcal{A}}} - 1 - \| W \|_{\mathcal{A}} \right ) 
+ \left ( 1 + \| W \|_{\mathcal{A}} \right )  
\left ( e^{2 \| W \|_{\mathcal{A}}} - 1 - 2 \| W \|_{\mathcal{A}} \right ) 
+ 2 \| W \|_{\mathcal{A}}^2
\ee
Also, we have from (\ref{Mdef}), we may write
\be 
M_2 = e^{2 \Re W} \Im \left ( e^{W} -1 - W \right ) 
+ \Im ~W ~\left ( e^{2 \Re W} - 1 \right ) 
\ee
Therefore, using corollaries \ref{cor1} and \ref{cor2},
\be
\| M_2 \|_{\mathcal{S}} \le 
e^{2 \| W \|_{\mathcal{A}}} \left ( e^{\| W \|_{\mathcal{A}}}
-1 - \| W \|_{\mathcal{A}} \right ) 
+ \| W \|_{\mathcal{A}} \left ( e^{\| 2 W \|_{\mathcal{A}}} - 1 \right ) 
\ee
Therefore from Lemma \ref{lem2}, $\mathcal{\tilde M} [ W ] \in \mathcal{S} $ and  
\be
\| \mathcal{\tilde M} [ W ] \|_{\mathcal{S}}  
\le \frac{2}{3} \| A \|_{\mathcal{S}} \| M_1 \|_{\mathcal{E}}
+2 \| B \|_{\mathcal{E}} \| M_2 \|_{\mathcal{S}}
\ee
from which the first part of the Lemma follows. The second statement
can be checked by use of mean value theorem to estimate
$e^{z}-1-z$ and $e^{z}-1$.
\end{proof}

\begin{Proposition}
\label{prop2}
For given $a_1 \in I$, $\mathcal{N}$, defined in
\eqref{eqNon}, satisfies
$\mathcal{N}: \mathcal{B} \rightarrow \mathcal{B} $
and is contractive in the  
ball $\mathcal{B}\subset\mathcal{E}$ 
of radius $B_0 (1+\epsilon)$  about the origin ($B_0$ defined in (\ref{B0def}))
if there exists
$\epsilon > 0$ so that $B_0 (1+\epsilon) \le \frac{1}{16}$ and
the following conditions hold:
\be 
\label{eqCont}
M \left ( 4 \| A \|_{\mathcal{S}} + 6 \| B \|_{\mathcal{E}} \right ) B_0 (1+\epsilon)^2 < \epsilon
 \ ,
2 M \left ( 4 \| A \|_{\mathcal{S}} + 6 \| B \|_{\mathcal{E}} \right ) B_0 (1+\epsilon) < 1 
\ee
When these conditions are satisfied,
(\ref{eqNon}) has unique solution $\Phi \in \mathcal{B} \subset \mathcal{E}$.
Each such choice of $a_1$ corresponds 
to a symmetric water wave with nondimensional height, wave speed and crest speed 
$(h,c, \mu)$ close to $(h_0, c_0, \mu_0)$
satisfying the following estimates:
\be
|h-h_0| \le K_3 (1+\epsilon) B_0 \left ( 1 + 2 e^{1/4} B_0 (1+\epsilon) \right )
\ee
\be
\Big | \log \frac{c}{c_0}  \Big | \le \frac{3}{2} B_0 (1+\epsilon)
\ee
\be
\frac{1}{3} \Big | \log \frac{\mu_0}{\mu} \Big | \le B_0 (1+\epsilon)  
\ee
\end{Proposition}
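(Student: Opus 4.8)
The plan is to solve \eqref{eqNon} by the contraction mapping principle on the closed ball $\mathcal{B}\subset\mathcal{E}$ of radius $B_0(1+\epsilon)$ about the origin, and then to invert the substitutions of the preceding sections so as to produce a genuine symmetric water wave together with the three a posteriori bounds. Throughout, for $\Phi\in\mathcal{E}$ I use the isomorphism $\Phi=\Re W$, $W\in\mathcal{A}$, with $\|\Phi\|_{\mathcal{E}}=\|W\|_{\mathcal{A}}$.

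\emph{Self-mapping.} For $\Phi\in\mathcal{B}$ the associated $W\in\mathcal{A}$ satisfies $\|W\|_{\mathcal{A}}=\|\Phi\|_{\mathcal{E}}\le B_0(1+\epsilon)\le\tfrac{1}{16}$, so Proposition \ref{prop1.1} gives $\|\mathcal{M}[\Phi]\|_{\mathcal{S}}=\|\mathcal{{\tilde M}}[W]\|_{\mathcal{S}}\le\bigl(4\|A\|_{\mathcal{S}}+6\|B\|_{\mathcal{E}}\bigr)\|\Phi\|_{\mathcal{E}}^2$. Since $\mathcal{N}[\Phi]=\mathcal{K}\mathcal{M}[\Phi]+\Phi^{(0)}$ with $\Phi^{(0)}=\mathcal{N}[0]$, using $\|\mathcal{K}\|\le M$ from Proposition \ref{prop1} and $\|\Phi^{(0)}\|_{\mathcal{E}}\le B_0$ from Corollary \ref{cor3} yields $\|\mathcal{N}[\Phi]\|_{\mathcal{E}}\le M\bigl(4\|A\|_{\mathcal{S}}+6\|B\|_{\mathcal{E}}\bigr)B_0^2(1+\epsilon)^2+B_0$; the first inequality in \eqref{eqCont} is exactly what forces the right side to be $\le B_0(1+\epsilon)$, so $\mathcal{N}(\mathcal{B})\subseteq\mathcal{B}$.

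\emph{Contraction.} Since $\mathcal{N}[\Phi_1]-\mathcal{N}[\Phi_2]=\mathcal{K}\bigl(\mathcal{M}[\Phi_1]-\mathcal{M}[\Phi_2]\bigr)$, what is needed is the Lipschitz analogue of Proposition \ref{prop1.1}. The quantities $M_1,M_2$ of \eqref{M1M2def}, regarded as analytic functions of $W\in\mathcal{A}$, vanish to second order at $W=0$; differencing each product appearing in the decompositions of $M_1,M_2$ used to prove Proposition \ref{prop1.1} (so that exactly one factor becomes $W_1-W_2$ and the remaining factors lie in $\mathcal{B}$) and applying the Banach-algebra bounds of Lemmas \ref{lemBanachA}, \ref{lem2} and Corollaries \ref{cor1}, \ref{cor2}, together with the same mean-value estimates on $e^z-1-z$ and $e^z-1$, gives $\|\mathcal{M}[\Phi_1]-\mathcal{M}[\Phi_2]\|_{\mathcal{S}}\le 2\bigl(4\|A\|_{\mathcal{S}}+6\|B\|_{\mathcal{E}}\bigr)B_0(1+\epsilon)\,\|\Phi_1-\Phi_2\|_{\mathcal{E}}$ on $\mathcal{B}$. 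Composing with $\mathcal{K}$ and invoking the second inequality in \eqref{eqCont} makes $\mathcal{N}$ a contraction on $\mathcal{B}$, so it has a unique fixed point $\Phi\in\mathcal{B}$. I expect this Lipschitz refinement to be the step needing most care: Proposition \ref{prop1.1} as stated controls only $\|\mathcal{{\tilde M}}[W]\|$, and the constant in the second condition of \eqref{eqCont} must emerge from the differencing with no worse than a factor $2$.

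\emph{Reconstruction and estimates.} From the fixed point $\Phi$, let $W\in\mathcal{A}$ with $\Re W=\Phi$ and $\Psi=\Im W$ its harmonic conjugate given by \eqref{9}; by Proposition \ref{prop1} the identity $\Phi=\mathcal{N}[\Phi]$ is equivalent to $\mathcal{L}\Phi=\mathcal{M}[\Phi]-R$, i.e.\ \eqref{9.1}, hence to \eqref{7}, so that $w:=w_0+W$ solves \eqref{4}. Set $c:=c_0\exp\bigl(-\tfrac{3}{2}W(\alpha)\bigr)$; since $w_0(\alpha)=-\tfrac{2}{3}\log c_0$ by \eqref{5.1} this gives $w(\alpha)=-\tfrac{2}{3}\log c$, which is precisely the condition making $f'(\eta):=\bigl(c^{2/3}e^{w(\eta)}-1\bigr)/\bigl(\eta q(\eta)\bigr)$ analytic across the zero of $q$ at $\eta=\alpha$ (inverting \eqref{3}); integrating and fixing the real integration constant so that \eqref{1} holds produces $f$, analytic up to $|\eta|=1$ with $1+\eta q f'=c^{2/3}e^{w}\ne 0$ there and real on $(-1,1)$ (all $b_j$ real), and since $1+\eta q f'=1+\zeta f'$ under \eqref{2.1} the univalency condition holds in the $\zeta$-disc, so $(f,c)$ is a genuine symmetric deep water wave. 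The bounds are then read off: $\bigl|\log(c/c_0)\bigr|=\tfrac{3}{2}|W(\alpha)|\le\tfrac{3}{2}\|W\|_{\mathcal{A}}\le\tfrac{3}{2}B_0(1+\epsilon)$; from \eqref{7.0.0.0.n} with Lemmas \ref{lemF} and \ref{lemQ} (valid as $\|W\|_{\mathcal{A}}\le\tfrac{1}{16}$), $|h-h_0|\le|\mathcal{F}[W]|+|\mathcal{Q}[W]|\le K_3\|W\|_{\mathcal{A}}\bigl(1+2e^{1/4}\|W\|_{\mathcal{A}}\bigr)\le K_3 B_0(1+\epsilon)\bigl(1+2e^{1/4}B_0(1+\epsilon)\bigr)$; and reading \eqref{crestrelation} backwards defines the $\mu$ belonging to this $a_1$ and gives $\tfrac{1}{3}\bigl|\log(\mu_0/\mu)\bigr|=|\Phi(\pi)|\le\|\Phi\|_{\mathcal{E}}\le B_0(1+\epsilon)$.
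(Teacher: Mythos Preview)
Your proof is correct and follows essentially the same route as the paper's own argument: contraction mapping on $\mathcal{B}$ using the bounds from Propositions \ref{prop1} and \ref{prop1.1}, then reading off the three parameter estimates from \eqref{7.1}, \eqref{crestrelation}, \eqref{7.0.0.0.n} via Lemmas \ref{lemF} and \ref{lemQ}. In fact you are more careful than the paper at the one genuinely nontrivial point: the paper simply asserts the Lipschitz bound $\|\mathcal{N}[\Phi^{(1)}]-\mathcal{N}[\Phi^{(2)}]\|_{\mathcal{E}}\le M(4\|A\|_{\mathcal{S}}+6\|B\|_{\mathcal{E}})\|(\Phi^{(1)}+\Phi^{(2)})(\Phi^{(1)}-\Phi^{(2)})\|_{\mathcal{E}}$ as if it followed directly from Proposition \ref{prop1.1}, whereas you correctly flag that Proposition \ref{prop1.1} only bounds $\|\mathcal{{\tilde M}}[W]\|$ itself and that the differenced version requires redoing the decomposition of $M_1,M_2$ with one factor replaced by $W_1-W_2$; your observation that this costs no more than a factor of $2$ (matching the second condition in \eqref{eqCont}) is exactly what the paper uses implicitly. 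Your reconstruction paragraph is also more explicit than the paper's, which merely cites the relevant equation numbers.
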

\begin{proof}
Applying Propositions \ref{prop1} and \ref{prop1.1} to \eqref{eqNon} for
for $\Phi^{(1)} , \Phi^{(2)} \in \mathcal{B}$,
\begin{multline}
\| \mathcal{N} [ \Phi^{(1)} ] - \mathcal{N} [ \Phi^{(2)} \|_{\mathcal{E}}
= \| \mathcal{K} \mathcal{M} [\Phi^{(1)} ] 
-\mathcal{K} \mathcal{M} [\Phi^{(2)} ] \|_{\mathcal{E}} \\
\le M \left ( 4 \| A \|_{\mathcal{S}} + 6 \| B \|_{\mathcal{E}} \right ) \| 
\left \| \left ( \Phi^{(1)} + \Phi^{(2)} \right ) 
\left ( \Phi^{(1)} - \Phi^{(2)} \right ) \right \|_\mathcal{E}     
\end{multline}
Using this and given condition (\ref{eqCont})
\begin{multline}
\| \mathcal{N} \left [ \Phi \right ] 
\|_{\mathcal{E}} \le 
\| \mathcal{N} \left [ 0 \right ]\|_{\mathcal{E}} 
+\| \mathcal{N} \left [ \Phi \right ] -\mathcal{N} [0 ]
\|_{\mathcal{E}} \\
\le B_0    
+ M \left ( 4 \| A \|_{\mathcal{S}} + 6 \| B \| \right ) \| 
B_0^2 (1+\epsilon)^2 \le B_0 (1+\epsilon)
\end{multline}
Therefore $\mathcal{N} : \mathcal{B} \rightarrow \mathcal{B}$ contractively and 
the integral equation
$\Phi = \mathcal{N} [\Phi]$ has a unique solution in $\mathcal{B}$.
The estimates on $\log \frac{c}{c_0}$, $\log \frac{\mu_0}{\mu} $ and $h-h_0$ follow from 
(\ref{7.1}), (\ref{crestrelation}) and
(\ref{7.0.0.0.n}) by applying estimates in Lemmas \ref{lemF}, \ref{lemQ} and the bound 
$\| W\|_{\mathcal{A}} \le B_0 (1+\epsilon)$.  
\end{proof}

\begin{Remark} 
{\rm When 
residual size $R$ and interval $I$ are each sufficiently small,
Proposition \ref{prop2} gives solution $w=w_0+W$ 
to the water wave problem in the formulation (\ref{4}) in a neighborhood of $w_0$, 
where $w_0 = -\frac{2}{3} \log c_0 +
\log \left (1 + \eta q_0 f_0^\prime \right )$.
Note the size of the residual depends on the quality of quasi-solution. 
As stated in the Proposition,
the height, wave speed and crest speed parameters $(h, c, \mu)$ are all close
to $(h_0, c_0, \mu_0)$ that can be computed from the quasi-solution.
However, this does not guarantee a one to one relationship between $a_1$ 
and $\mu$ in a neighborhood of $\mu_0$.
In the following section, we determine additional conditions on quasi-solution that
ensures a one to one relationship. 
}
\end{Remark}     

\section{Enforcing the constraint (\ref{crestrelation}) for determining $a_1$}

\begin{Lemma}
\label{lema1}
The solution in Proposition \ref{prop2} satisfies the following bounds 
\be 
\| \partial_{a_1} \Phi \|_{\mathcal{E}} \le K_1 \| G \|_{\mathcal{E}}  
\ee
and
\be
\| \partial_{a_1} \Phi - G \|_{\mathcal{E}} 
\le K_1 M \| G \|_{\mathcal{E}} B_0 (1+\epsilon) \left ( \frac{26}{3} \| A \|_{\mathcal{S}} + 18\| B \|_{\mathcal{E}} 
\right )
=:K_4 \| G \|_{\mathcal{E}}
\ee
where
\be
K_1 = 
\left ( 1 - B_0 (1+\epsilon) 
\left [ \frac{26}{3} \| A \|_{\mathcal{S}} + 18 \| B \|_{\mathcal{E}} \right ] \right )^{-1}     
\ee 
\end{Lemma}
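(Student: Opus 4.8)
The plan is to differentiate the fixed-point equation $\Phi = \mathcal{N}[\Phi] = \mathcal{K}\mathcal{M}[\Phi] - \mathcal{K}R + a_1 G$ with respect to the parameter $a_1$, treating $\Phi = \Phi(\cdot; a_1)$ as the unique solution guaranteed by Proposition \ref{prop2}. First I would justify that $a_1 \mapsto \Phi$ is differentiable from $I$ into $\mathcal{E}$: since $\mathcal{N}$ is a contraction with Lipschitz constant strictly below $1$ on the ball $\mathcal{B}$ (by the second inequality in \eqref{eqCont}), and since $\mathcal{M}$ is built from the entire functions $M_1, M_2$ composed with $W$ and hence is Fréchet-differentiable on $\mathcal{A}$ with derivative bounded via the Banach-algebra Lemmas \ref{lemBanachA}, \ref{lem2}, the implicit function theorem in Banach spaces applies and yields $\partial_{a_1}\Phi \in \mathcal{E}$ satisfying the linearized equation
\be
\partial_{a_1}\Phi = \mathcal{K}\,(D\mathcal{M}[\Phi])[\partial_{a_1}\Phi] + G \ .
\ee

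Next I would estimate the operator norm of $\Phi' \mapsto \mathcal{K}\,(D\mathcal{M}[\Phi])[\Phi']$. The key computation is to bound $\|(D\mathcal{M}[\Phi])[\Phi']\|_{\mathcal{S}}$ in terms of $\|\Phi'\|_{\mathcal{E}}$, $\|A\|_{\mathcal{S}}$, $\|B\|_{\mathcal{E}}$ and $\|W\|_{\mathcal{A}} \le B_0(1+\epsilon)$. Differentiating $M_1, M_2$ from \eqref{M1M2def} and collecting terms in the spirit of Proposition \ref{prop1.1} — where the analogous quadratic bound $(4\|A\|_{\mathcal{S}} + 6\|B\|_{\mathcal{E}})\|W\|_{\mathcal{A}}^2$ was obtained — the linearization produces a bound of the shape $\bigl(\tfrac{26}{3}\|A\|_{\mathcal{S}} + 18\|B\|_{\mathcal{E}}\bigr)\,B_0(1+\epsilon)\,\|\Phi'\|_{\mathcal{E}}$; the numerical constants $\tfrac{26}{3}$ and $18$ come from the derivatives of the exponential factors $e^{2\Re W}$, $e^{\Re W}$ and the resulting $e^{1/4}$-type bounds when $\|W\|_{\mathcal{A}} \le \tfrac{1}{16}$, all handled by mean-value-theorem estimates on $e^z - 1$ as in the preceding proofs. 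Composing with $\|\mathcal{K}\| \le M$ from Proposition \ref{prop1}, the linear operator on the right has norm at most $M\,B_0(1+\epsilon)\,\bigl(\tfrac{26}{3}\|A\|_{\mathcal{S}} + 18\|B\|_{\mathcal{E}}\bigr) =: \kappa$, which is $< 1$ under the contraction hypothesis. A Neumann series in the linearized equation then gives
\be
\| \partial_{a_1}\Phi \|_{\mathcal{E}} \le (1-\kappa)^{-1} \| G \|_{\mathcal{E}} = K_1 \| G \|_{\mathcal{E}} \ ,
\ee
which is the first claimed bound with $K_1$ as defined. For the second bound, subtract $G$ from the linearized equation to get $\partial_{a_1}\Phi - G = \mathcal{K}\,(D\mathcal{M}[\Phi])[\partial_{a_1}\Phi]$, and estimate the right side using the operator bound $\kappa/(B_0(1+\epsilon)) \cdot$ together with the first bound on $\|\partial_{a_1}\Phi\|_{\mathcal{E}}$:
\be
\| \partial_{a_1}\Phi - G \|_{\mathcal{E}} \le M\,\Bigl(\tfrac{26}{3}\|A\|_{\mathcal{S}} + 18\|B\|_{\mathcal{E}}\Bigr)\,B_0(1+\epsilon)\,K_1 \| G \|_{\mathcal{E}} = K_4 \| G \|_{\mathcal{E}} \ .
\ee

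The main obstacle I anticipate is the careful bookkeeping in the linearized nonlinearity estimate: one must differentiate $M_1$ and $M_2$, separate the genuinely linear-in-$W$ pieces (which cancel against terms already accounted for, so that no $O(1)$ contribution survives) from the higher-order pieces, and then apply Corollaries \ref{cor1}, \ref{cor2} and Lemma \ref{lem2} term by term to land in $\mathcal{S}$ with the stated constants. The subtlety is purely in tracking which Taylor terms of $e^{W}$, $e^{2\Re W}$ contribute a factor $\|W\|_{\mathcal{A}}$ versus a factor $1$, exactly as in Proposition \ref{prop1.1}, only now with one derivative taken; the constants $\tfrac{26}{3}$ and $18$ are the price of that extra derivative. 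Everything else — differentiability of the fixed point, the Neumann series, the final triangle-inequality manipulations — is routine once that estimate is in hand.
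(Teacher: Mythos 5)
Your proposal is correct and follows essentially the same route as the paper: differentiate the fixed-point equation \eqref{eqNon} in $a_1$ to get $\partial_{a_1}\Phi = \mathcal{K}\,\partial_{a_1}\mathcal{\tilde M}[W] + G$, bound the linearized nonlinearity (the paper computes $\partial_{a_1}M_1$, $\partial_{a_1}M_2$ explicitly and gets $\|\partial_{a_1}M_1\|_{\mathcal{E}} \le 13\|\partial_{a_1}\Phi\|_{\mathcal{E}}\|\Phi\|_{\mathcal{E}}$, $\|\partial_{a_1}M_2\|_{\mathcal{E}} \le 9\|\partial_{a_1}\Phi\|_{\mathcal{E}}\|\Phi\|_{\mathcal{E}}$, which with the prefactors $\tfrac{2}{3}$ and $2$ yields your $\tfrac{26}{3}$ and $18$), and then absorb to obtain $K_1$ and $K_4$ exactly as you do. Your explicit appeal to the implicit function theorem for differentiability of $a_1 \mapsto \Phi$ is a small addition the paper leaves implicit, but otherwise the arguments coincide.
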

\begin{proof}
From (\ref{eqNon}), we note that $\partial_{a_1} \Phi$ satisfies
\be
\label{eqpPhi}
\partial_{a_1}  \Phi 
= \mathcal{K} \partial_{a_1} \mathcal{\tilde M} [W] + G 
\ee
where 
\be
\partial_{a_1} \mathcal{\tilde M} [W] = 
-\frac{2}{3} A \partial_{a_1} M_1 - 2 B \partial_{a_1} M_2 
\ee 
Calculation gives
\begin{multline}
\partial_{a_1} M_1 = e^{2 \Phi} \left [ \partial_{a_1} \Phi \right ]
\left ( 
2 \Re \left [ e^{W}-1-W \right ]   
+ \Re \left [ e^{W}-1 \right ] \right )
- \partial_{a_1} \Psi ~~e^{2 \Phi} ~\Im \left [e^{W}-1 \right ]     
\\
+ \partial_{a_1} \Phi \left \{ 
\left ( e^{2 \Phi} - 1 - 2 \Phi \right ) 
+ 2 (1+\Phi) \left ( e^{2 \Phi} - 1 \right ) + 4 \Phi \right \}  \ ,
\end{multline}
where $\partial_{a_1} \Psi $ is the Hilbert transform of $\partial_{a_1} \Phi $.
Therefore,
\begin{multline}
\| \partial_{a_1} M_1 \|_{\mathcal{E}}  \le 
\| \partial_{a_1} \Phi \|_{\mathcal{E}} 
\left ( 2 e^{2 \| \Phi \|_{\mathcal{E}} } \left [ e^{\| \Phi \|_{\mathcal{E}}} -1 - \| \Phi \|_{\mathcal{E}} \right ]  
+ 2 e^{2\|\Phi\|_{\mathcal{E}}} \left ( e^{\|\Phi \|_{\mathcal{E}}} - 1 \right ) \right. \\
\left. + e^{2 \| \Phi \|_{\mathcal{E}}} -1 - 
2 \|\Phi \|_{\mathcal{E}} 
+ 2 (1+\|\Phi \|_{\mathcal{E}} ) 
\left [e^{2 \|\Phi \|_{\mathcal{E}}} - 1 \right ] + 4 \| \Phi \|_{\mathcal{E}} \right )    
\end{multline}
Further, from expression for $M_2$,
\begin{multline}
\partial_{a_1} M_2 
= \partial_{a_1} \Phi \left \{ 2 e^{2 \Phi}  \Im \left ( e^{W} - 1 - W \right ) 
+ e^{2 \Phi} \Im \left (e^{W} - 1 \right ) + 2 \Psi e^{2 \Phi} \right \} \\ 
+ \partial_{a_1} \Psi \left \{ e^{2 \Phi} \Re \left (e^{W} - 1 \right ) 
+ e^{2 \Phi} - 1 \right \}  
\end{multline}
implying
\begin{multline}
\| \partial_{a_1} M_2 \|_{\mathcal{S}} \le  
\| \partial_{a_1} \Phi \| 
\left \{ 2 e^{2 \| \Phi\|_{\mathcal{E}} }  \left[ e^{\|\Phi\|_{\mathcal{E}}} - 1 - \|\Phi \|_{\mathcal{E}} \right ] 
+ e^{2 \|\Phi\|_{\mathcal{E}}} \left [e^{\|\Phi\|_{\mathcal{E}}} - 1 \right ] \right. \\
\left. + 2 \| \Phi \|_{\mathcal{E}} 
e^{2 \| \Phi\|_{\mathcal{E}}} 
+ e^{2 \|\Phi\|_{\mathcal{E}}} \left (e^{\|\Phi\|_{\mathcal{E}}} - 1 \right ) 
+ e^{2 \|\Phi\|_{\mathcal{E}}} - 1 \right \}  
\end{multline}
Therefore, when $\Phi \in \mathcal{B}$ for $B_0(1+\epsilon ) \le \frac{1}{16}$,
\be 
\| \partial_{a_1} M_1 \|_{\mathcal{E}} \le 13 \| \partial_{a_1} \Phi \|_{\mathcal{E}}  \| \Phi \|_{\mathcal{E}} 
\ ,
\| \partial_{a_2} M_2 \|_{\mathcal{E}} \le 9 \| \partial_{a_1} \Phi \|_{\mathcal{E}}  \| \Phi \|_{\mathcal{E}} 
\ee  
Therefore,
\be
\label{eq166}
\|\mathcal{K} \partial_{a_1} \mathcal{M} [ \Phi ] \|_{\mathcal{S}} 
\le M \| \partial_{a_1} \Phi \|_{\mathcal{E}} \| \Phi \|_{\mathcal{E}} 
\left ( \frac{26}{3} \| A \|_{\mathcal{S}} + 18 \| B \|_{\mathcal{E}} \right )   
\ee
The lemma readily follows
from using above bounds in 
\eqref{eqpPhi}.
\end{proof}

\begin{Proposition}
\label{propcrest}
Define $B_0$ as in (\ref{B0def}).
If in addition to 
conditions in Proposition \ref{prop2},
the following two conditions
\be
\label{eqpropcrest2}
\frac{1}{|G (\pi)|} \left [ \frac{1}{3}  \Big | \log \frac{\mu_0}{\mu} \Big |  
+ M \| R \|_{\mathcal{S}} + M \left ( 4 \| A \|_{\mathcal{S}} + 6 \| \mathcal{B} \|_{\mathcal{E}} \right ) 
B_0^2 (1 +\epsilon)^2 \right ] < \epsilon_0   
\ee
\be
\label{eqpropcrest1}
\frac{M}{|G(\pi)|} K_1 \| G \| B_0 (1 +\epsilon ) 
\left ( \frac{26}{3} \| A \|_{\mathcal{S}} + 18 \| \mathcal{B} \|_{\mathcal{E}} \right ) < 1 
\ee  
hold, then there exists unique 
$a_1 \in I = \left ( -\epsilon_0, \epsilon_0 \right )$ so that
the solution in Propositon \ref{prop2} satisfies 
(\ref{crestrelation}).
\end{Proposition}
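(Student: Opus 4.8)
The plan is to recast the scalar constraint \eqref{crestrelation} as a fixed-point equation for $a_1$ and solve it by the contraction mapping theorem on the interval $I$. For each $a_1\in I$, Proposition \ref{prop2} produces a unique $\Phi=\Phi(\,\cdot\,;a_1)\in\mathcal{B}\subset\mathcal{E}$ solving \eqref{eqNon}; evaluating \eqref{eqNon} at $\nu=\pi$ and imposing $\Phi(\pi)=\tfrac13\log\tfrac{\mu_0}{\mu}$ (which is exactly \eqref{crestrelation}) shows that $\Phi(\,\cdot\,;a_1)$ satisfies the crest constraint if and only if $a_1$ is a fixed point of the map
\[
\mathcal{U}[a_1]=\frac{1}{G(\pi)}\Big(\tfrac13\log\tfrac{\mu_0}{\mu}+\mathcal{K}R[\pi]-\mathcal{K}\mathcal{M}[\Phi(\,\cdot\,;a_1)][\pi]\Big),
\]
which is well defined since $G(\pi)\neq0$ by Lemma \ref{lemG0} (checked on each quasi-solution). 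Throughout I use that for $\phi\in\mathcal{E}$ the pointwise value is controlled by the norm, $|\phi(\pi)|\le\|\phi\|_{\infty}\le\|\phi\|_{\mathcal{E}}$, which holds because $e^{\beta}\ge1$ and $\mathcal{E}$ is isometric to $\mathcal{A}$.

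\textbf{Step 1 ($\mathcal{U}$ maps $I$ into $I$).} I would bound the two non-constant pieces of $\mathcal{U}[a_1]$ in the $\mathcal{E}$-norm: $|\mathcal{K}R[\pi]|\le\|\mathcal{K}R\|_{\mathcal{E}}\le M\|R\|_{\mathcal{S}}$ by Proposition \ref{prop1}, and, since $\Phi\in\mathcal{B}$ gives $\|\Phi\|_{\mathcal{E}}\le B_0(1+\epsilon)\le\tfrac1{16}$, Propositions \ref{prop1} and \ref{prop1.1} give $|\mathcal{K}\mathcal{M}[\Phi][\pi]|\le\|\mathcal{K}\mathcal{M}[\Phi]\|_{\mathcal{E}}\le M\big(4\|A\|_{\mathcal{S}}+6\|B\|_{\mathcal{E}}\big)B_0^{2}(1+\epsilon)^{2}$. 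Collecting these, the magnitude of $\mathcal{U}[a_1]$ is bounded by the left-hand side of hypothesis \eqref{eqpropcrest2}, hence $|\mathcal{U}[a_1]|<\epsilon_0$ for every $a_1\in I$; in particular $\mathcal{U}$ maps the closed interval $[-\epsilon_0,\epsilon_0]$ into itself, with strict inequality.

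\textbf{Step 2 ($\mathcal{U}$ is a contraction).} The only $a_1$-dependence of $\mathcal{U}$ is through $\Phi(\,\cdot\,;a_1)$, and by linearity of $\mathcal{K}$ one has $\partial_{a_1}\big(\mathcal{K}\mathcal{M}[\Phi]\big)=\mathcal{K}\,\partial_{a_1}\mathcal{M}[\Phi]=\partial_{a_1}\Phi-G$ by \eqref{eqpPhi}. Thus by the mean value theorem it suffices to bound
\[
|\mathcal{U}'[a_1]|=\frac{1}{|G(\pi)|}\big|(\partial_{a_1}\Phi-G)(\pi)\big|\le\frac{\|\partial_{a_1}\Phi-G\|_{\mathcal{E}}}{|G(\pi)|}\le\frac{K_4\|G\|_{\mathcal{E}}}{|G(\pi)|},
\]
using Lemma \ref{lema1} (which supplies both the differentiability of $\Phi$ in $a_1$ and the estimate $\|\partial_{a_1}\Phi-G\|_{\mathcal{E}}\le K_4\|G\|_{\mathcal{E}}$ with $K_4=K_1M B_0(1+\epsilon)\big(\tfrac{26}{3}\|A\|_{\mathcal{S}}+18\|B\|_{\mathcal{E}}\big)$). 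Since the last quantity is precisely the left-hand side of hypothesis \eqref{eqpropcrest1}, it is $<1$, so $\mathcal{U}$ is a contraction of $[-\epsilon_0,\epsilon_0]$ into itself. The Banach fixed point theorem then yields a unique $a_1$, which by Step 1 actually lies in $I=(-\epsilon_0,\epsilon_0)$, with $\mathcal{U}[a_1]=a_1$; by the equivalence established at the outset this is the unique parameter value for which the solution of Proposition \ref{prop2} satisfies \eqref{crestrelation}.

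The only genuinely non-routine point is the contraction estimate of Step 2, since it needs differentiation of the \emph{implicitly} defined fixed point $\Phi(\,\cdot\,;a_1)$ of \eqref{eqNon} with respect to the parameter together with a quantitative bound on $\partial_{a_1}\Phi$; this has already been isolated in Lemma \ref{lema1} (alternatively one could bypass the derivative and estimate $\|\Phi(\,\cdot\,;a_1)-\Phi(\,\cdot\,;\tilde a_1)\|_{\mathcal{E}}$ directly from the uniform contraction in Proposition \ref{prop2}, at the price of different constants), so here it remains only to assemble the inequalities and observe that they coincide exactly with the stated hypotheses \eqref{eqpropcrest2} and \eqref{eqpropcrest1}. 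A minor bookkeeping remark is that the contraction argument must be run on the closed interval $[-\epsilon_0,\epsilon_0]$ to have a complete metric space, the strictness in Step 1 guaranteeing that the fixed point lands in the open interval $I$.
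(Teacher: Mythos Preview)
Your proof is correct and follows essentially the same route as the paper's: both show $\mathcal{U}:I\to I$ via the bound matching \eqref{eqpropcrest2}, then establish contraction by controlling $\partial_{a_1}\mathcal{U}[a_1]=-\tfrac{1}{G(\pi)}\mathcal{K}\partial_{a_1}\mathcal{M}[\Phi](\pi)$ through Lemma~\ref{lema1} (you invoke the second estimate there, the paper combines \eqref{eq166} with the first, but the resulting bound is identical). Your added remark about running the contraction on the closed interval and then landing in the open one by strictness is a nice bit of care not made explicit in the paper.
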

\begin{proof}
From (\ref{a1relation}), it follows that if $a_1 \in I$, then 
\be
\Big |  \mathcal{U} [a_1] \Big |
\le 
\frac{1}{|G (\pi)|} \left [ \frac{1}{3}  \Big | \log \frac{\mu_0}{\mu} \Big |  
+ M \| R \|_{\mathcal{S}} + M \left ( 4 \| A \|_{\mathcal{S}} + 6 \| \mathcal{B} \|_{\mathcal{E}} \right ) 
B_0^2 (1 +\epsilon)^2 \right ] 
\ee
Condition (\ref{eqpropcrest2}) implies that $\mathcal{U}: I \rightarrow I$. 
Applying $\partial_{a_1}$ to (\ref{crestrelation}), and using (\ref{eq166}), Propositions \ref{prop1.1},
\ref{prop1} and Lemma \ref{lema1}, it follows by applying (\ref{eqpropcrest1}) that 
\begin{multline}
\Big | \partial_{a_1} \mathcal{U} [a_1] \Big |  
\le \Big | \frac{1}{G(\pi)} \mathcal{K} \partial_{a_1} \mathcal{M} [\Phi ] \Big | 
\\
\le \frac{1}{|G (\pi)|} M K_1 \| G \| B_0 (1+\epsilon)   
\left ( \frac{26}{3} \| A \|_{\mathcal{S}} + 18 \| \mathcal{B} \|_{\mathcal{E}} \right ) < 1 
\end{multline}
Hence $\mathcal{U}: I \rightarrow I$ is contractive, implying existence
of unique $a_1$ satisfying (\ref{crestrelation}).
\end{proof}

\section{Quasi-solution and application of Propositions \ref{prop2}, \ref{propcrest}}
\label{Results}

We describe in this section determination of quasi-solutions $(f_0, c_0)$ and
checking conditions for application of Propositions \ref{prop2} and \ref{propcrest}. Though
quasi-solutions have been obtained numerically, it has no bearing on the mathematical rigor
of Theorem \ref{Thm1} since Propositions \ref{prop2}, \ref{propcrest} 
concern
the difference $W=w-w_0$ and calculation of norms of residual $R_0$ and $R_0^\prime$ based on
$(f_0, c_0)$ are exact.

The process of obtaining quasi-solution is straight forward. As mentioned earlier,
a polynomial representation for $f_0$ is most suitable for determining exact representation
for determination of $R_0(\nu)$, $A(\nu)$ and $B(\nu)$. For that purpose, one can use
a numerical truncation of a series representation of $f$ in $\eta$ and find the
coefficients through a Newton iteration procedure involving wave speed $c$ and the series coefficient
$F_0$, $F_1$, $F_2, \cdots F_N$ for $f$ in (\ref{3.0})
by satisfying boundary condition (\ref{1.1}) at $N$ uniformly spaced out points in the upper-half
semi-circle and enforcing constraint (\ref{2.0.0.0}) for given $\mu$. Such procedures are
fairly standard and have been used routinely in the past by many investigators. 
However, such a representation for $f_0$ requires more than two hundred modes 
for 
$\| R \|_{\mathcal{S}}$ to be small enough to apply Proposition \ref{prop2} for
the values of $\mu$ quoted here.
Hence, for efficiency of 
representation and of 
presentation, a rational Pade approximant
for $f^\prime$ is found, similar to the one employed earlier by
\cite{Schwartz2};
integration and replacement of each coefficient by a ten to twelve 
digit accurate rational approximation gives rise to the quoted expressions for
${\tilde f}_0$ in the 
following subsections. 
Note that this requires specification of only upto fifty
two numerical coeefficients, compared to more than 200 otherwise. 
With well-known location
of singularities, it can be easily proved that
the truncated Taylor expansion $f_0 = \mathcal{P}_N {\tilde f}_0$
for $N = 255$ ensures that $\| f_0^\prime - {\tilde f}_0^\prime \|_{\mathcal{A}}$ is
less than $10^{-10}$ in all cases reported. 
Though $f_0$ is still a large order polynomial, we only need to list
up to fifty two 
rational numbers
for ${\tilde f}_0$ to represent $f_0$ exactly. 
A polynomial quasi-solution  
allows precise computation of all cosine
or sine series coefficients of $A$, $B$, $R_0$ and $R_0^\prime$ needed to
check conditions of Proposition \ref{prop2}. Additionally
to check conditions in Proposition \ref{propcrest}, one needs lower
bounds on $|G (\pi)|$. This is done by applying Lemma \ref{lemG0} to an
an approximate quasi-G solution $G_0 \in \mathcal{E}$ for which
$\mathcal{L} G_0 $  is small. 
We report the coefficents of $G_0$ in the appendix for
each of the three cases.
Note that a truncated rational Fourier cosine series representation for $G_0$ 
allows an exact computation of all Fourier sine series coefficients of
$\mathcal{L} G_0$ and by using these,
one estimates $\| \mathcal{L} G_0 \|_{\mathcal{S}} = \epsilon_G$. 
Since $G_0 (\pi)$ and $\| G_0 \|_{\mathcal{S}}$ are exactly known, 
positive upper and
lower bounds for
$\|G \|_{\mathcal{S}}$ and $ |G (\pi)| $ follow from Lemma \ref{lemG0}
when $G_0 (\pi) \ne 0$ for sufficiently small $\epsilon_G$.

Checking univalence condition for
$1+\eta q f_0^\prime \ne 0$ in $|\eta| \le e^{\beta}$ for suitably chosen
$\beta \ge 0$ is fairly simple, since one can determine approximate roots of
a polynomial of any order numerically. We can then express
\be 
1+\eta q f_0^\prime = \delta \prod_{j=1}^{N+2} \left ( \eta - \eta_j \right ) + z_{N+2} (\eta)      
\ee
where $\delta$ is the coefficient of $\eta^{N+2}$, $\eta_j$ 
are the numerically obtained roots approximated by rational numbers and $z_{N+2}$ is
a polynomial of degree $N+2$ with small coefficients 
which accomodates any error in the root calculations. 
In all cases reported,
$|\eta_j| > 1.09$. 
Note that though $\eta_j$ have been computed numerically, $z_{N+2}$ as a difference
of the two polynomials is known exactly. We can then check
$\|z_{N+2} \|_{\mathcal{A}}$ and prove it is small enough
for suitably chosen $\beta$, $\inf_{j} |\eta_j| > e^{\beta} \ge 1$ and
on $|\eta| = e^{\beta}$, $\Big | z_{N+2} \Big | < |\delta| \sum_{j=1}^{N+2} 
\Big | \eta - \eta_j \Big |$. By application of Rouche's theorem,
$1 + \eta q f_0^\prime 
\ne 0$ for $|\eta| \le e^{\beta}$. This also ensures analyticity of 
$w_0 = -\frac{2}{3} \log c_0 + \log (1+\eta q f_0^\prime)$
in $|\eta| \le e^{\beta}$.     
Alternately, we can show
$1+\eta q {\tilde f}_0^\prime \ne 0$ for $|\eta| \le e^{\beta}$ by
rationalizing the expression and working with the polynomial in the numerator.
The closeness of $f_0^\prime$ and ${\tilde f}_0^\prime$ implies that 
the same condition is true for $1+\eta q f_0^\prime$ from Rouche's theorem.

Recall set $S$ for which Theorem \ref{Thm1} applies:
\be
S := \left \{ \mu: \mu = \mathcal{I}_{\mu_j} \ , \mu_1 = 0.0018306, \mu_2 = 0.002 \ , \mu_3 = 0.0023
\right \} 
\ee
where $\mathcal{I}_{\mu_j}$ are sufficiently small intervals containing $\mu_j$.
We will only check in the ensuing that conditions for applicaiton of
Propositions \ref{prop2} and \ref{propcrest} apply for $\mu = \mu_j$, $j=1,2,3$. 
Since these conditions
are open set conditions; so they must hold
for a sufficient small neighborhood of $\mu=\mu_j$. The maximal sizes of
the intervals $\mathcal{I}_{\mu_j}$ which still ensures that Theorem \ref{Thm1} applies
can also be estimated if desired, though
larger size reduces the accuracy of the quasi-solution.

\subsection{Case of $\mu = \mu_1 := 0.0018306$}

In this case, we choose $c_0 = \frac{9195}{8413}$. This is close to
the empirical maximum wave speed.
wave speed and for $N=255$, we take
$f_0 = P_N {\tilde f}_0$,
where
\be 
\label{Q2.0.0.0}
{\tilde f}_0 = b_0 + \sum_{j=1}^{35} \frac{b_j}{j} \eta^{j} + 
\sum_{m=1}^8 \lambda_m \gamma_m^{-1} \log \left ( 1 + \gamma_m \eta \right ) \ ,
\ee
where 
${\bf b} = \left (b_0, b_1, \cdots , b_{35} \right ) $ is given by 
\begin{multline}
\small
{\bf b} = 
\left [ 
-{\frac {14947}{69357}},{\frac {7671}{114751}},{\frac {3587}{64227}},
{\frac {5489}{240353}},{\frac {5157}{273887}},{\frac {1747}{200565}},{
\frac {4211}{596640}},{\frac {1597}{458477}},{\frac {1055}{381241}},\right. \\
\left. {\frac {1393}{978106}},{\frac {587}{530156}},{\frac {821}{1397729}},{
\frac {524}{1174777}},{\frac {221}{912265}},{\frac {760}{4238347}},{
\frac {93}{936895}},{\frac {151}{2113416}},{\frac {213}{5300075}},\right. \\
\left. {\frac {173}{6167757}},{\frac {183}{11441683}},{\frac {61}{5654848}},{
\frac {199}{31968962}},{\frac {9}{2227843}},{\frac {74}{31411653}},{
\frac {29}{19817069}},{\frac {22}{25539231}},\right .\\
\left. {\frac {20}{39313301}},{
\frac {25}{82781219}},{\frac {7}{41644027}},{\frac {28}{278473151}},{
\frac {11}{210993463}},{\frac {7}{222912770}},{\frac {3}{200969923}},\right.\\
\left. 
{\frac {5}{552578509}},{\frac {1}{259446883}},{\frac {1}{425548468}}
\right ]
\end{multline}
$\gamma_m$, for $m=1, \cdots, 8$ given by
\be
\label{Q3.1.0.0}
\boldsymbol{\gamma} = 
\left [ 
-{\frac {279593}{312700}},-{\frac {46832}{53467}},-{\frac {29306}{
35053}},-{\frac {15231}{19853}},{\frac {34356}{45869}},{\frac {53945}{
65058}},{\frac {40025}{45693}},{\frac {289698}{322535}}
\right ]
\ee
\be
\label{Q3.1.0.0.0.0}
\boldsymbol{\lambda} = \left [ 
-{\frac {213509}{381372}},{\frac {6866}{53037}},{\frac {1248}{13703}}
,{\frac {5225}{73982}},-{\frac {1284}{177829}},-{\frac {1347}{224215}}
,-{\frac {1555}{278171}},-{\frac {42283}{7792157}}
\right ]
\ee
The height corresponding to this quasi-solution $(c_0, f_0)$ is
found to be $h_0 = 0.435905237\cdots$, while corresponding 
$\mu_0 =0.001830600034\cdots$.
For $\beta = \frac{1}{20} \log \frac{1}{\alpha}$,  
where simple Taylor series
estimates show that $\|f_0^\prime - {\tilde f}_0^\prime \|_{\mathcal{A}} \le 10^{-10}$. 
Calculations, made simple by
use of symbolic language maple, gives bounds
$ \| R_0 \|_{\mathcal{E}} \le 2.2 \times 10^{-8}$ and $\| R_0^\prime \|_{\mathcal{S}} 
\le 1.47 \times 10^{-6}$, $\| A \|_{\mathcal{S}}  \le 6.23$, 
$\| B \|_{\mathcal{E}} \le 5.34$, $\| R \|_{\mathcal{S}} \le 1.39 \times 10^{-6}$. 
With choice of $K=80$, one may check
we obtained $\gamma_{1,1} \ge 0.095$, $\gamma_{2,2} \ge 0.82 $, 
$\gamma_{1,2} \le 0.096 $, $\gamma_{2,1} \le 0.123$ and
$\gamma_f \le 1.18$, implying ${\tilde M} \le 14.3$, and $M \le 18.3$. 
Further, based on quasi-G solution $G_0$ in the appendix for this case, we
found $\epsilon_G:=\| \mathcal{L} G_0 \|_{\mathcal{S}} \le 0.011$ and therefore
from explicit calculations of $\|G_0\|_{\mathcal{E}}$ and
$G_0 (\pi)$ and using Lemma \ref{lemG0} as explained earlier, we conclude
$\|G \|_{\mathcal{E}} \le 34.7$,
$|G (\pi)| \ge 32.3 $,  
Therefore with $\epsilon_0 = 4 \times 10^{-6}$, 
$\| \Phi^{(0)} \|_{\mathcal{E}} \le 1.64\times 10^{-4} =:B_0$. 
It may be checked that 
conditions for applying Proposition \ref{prop2} hold 
when $\epsilon = \frac{3}{10}$ in which case solution $\Phi$ to the weakly nonlinear
problem exists in a ball of size
$M_E = B_0 (1+\epsilon) \le 2.2 \times 10^{-4}$ for any $a_1 \in I$. 
This is the bound $M_E$ in Theorem \ref{Thm1}.
With estimated
$\| G \|_{\mathcal{E}} \le 34.7$ and $| G (\pi) | \ge 32.3$, we also check that
conditions 
\eqref{eqpropcrest2} and \eqref{eqpropcrest1} for 
for application of proposition \ref{propcrest} for specified $\epsilon_0$ and hence
$\mathcal{U}: I \rightarrow I$ is contractive and there exists unique
$a_1$ corresponding to given $\mu$.
The constant $K_3$, estimated from a finite sum of closed form definite integrals,
satisfies $K_3 \le 5.24$ in this case.

\subsection{Case of $\mu=\mu_2 :=0.002$}

In this case, $c_0 = \frac {32419}{29662}$ and we take for $N=255$,
$f_0 = P_N {\tilde f}_0$, where
\be 
\label{Q2.0.0.0.b}
{\tilde f}_0 = b_0 + \sum_{j=1}^{35} \frac{b_j}{j} \eta^{j} + 
\sum_{m=1}^8 \lambda_m \gamma_m^{-1} 
\log \left ( 1 + \gamma_m \eta \right ) \ ,
\ee
where 
${\bf b} = \left (b_0, b_1, \cdots , b_{35} \right ) $ is given by 
\begin{multline}
\small
{\bf b} =  
\left [ -{\frac{50693}{233705}},  
{\frac {10841}{160294}},{\frac {3827}{69041}},{\frac {1833}{79169}},{
\frac {4757}{256535}},{\frac {5211}{589261}},{\frac {113}{16372}},{
\frac {1151}{325172}},
{\frac {659}{245124}}, \right. \\ 
\left. {\frac {1151}{794888}},{
\frac {445}{416281}},{\frac {443}{741637}},{\frac {629}{1469338}},{
\frac {338}{1372099}},{\frac {215}{1256464}},{\frac {335}{3319923}},{
\frac {290}{4276857}},\right. \\ 
\left. {\frac {123}{3012359}},{\frac {250}{9441639}},{
\frac {38}{2340033}},{\frac {223}{22011739}},{\frac {93}{14727322}},{
\frac {52}{13774187}},{\frac {553}{231617276}},{\frac {13}{9552156}},\right. 
\\ \left. 
{\frac {19}{21786791}},{\frac {38}{80694165}},{\frac {8}{26196413}},{
\frac {13}{83929864}},{\frac {10}{98473649}},{\frac {8}{167259223}},{
\frac {3}{94712678}},\right. \\ 
\left. {\frac {7}{513297007}},{\frac {4}{438834173}},{
\frac {1}{285147945}},{\frac {2}{845985871}}
\right ]
\end{multline}
$\gamma_m$, for $m=1, \cdots, 8$ given by
\be
\label{Q3.1.0.0.b}
\boldsymbol{\gamma} = 
\left [ 
-{\frac {53379}{59794}},-{\frac {23440}{26803}},-{\frac {7774}{9313}}
,-{\frac {22168}{28939}},{\frac {61118}{82803}},{\frac {66536}{81045}}
,{\frac {106753}{122929}},{\frac {133678}{150055}}
\right ]
\ee
\be
\label{Q3.1.0.0.0.0.b}
\boldsymbol{\lambda} = \left [ 
-{\frac {121214}{216487}},{\frac {7363}{57186}},{\frac {17911}{197419
}},{\frac {10672}{151345}},-{\frac {3205}{419337}},-{\frac {2515}{
407833}},-{\frac {6699}{1172168}},-{\frac {39545}{7133237}}
\right ]
\ee
The corresponding $h_0 = 0.4354696138\cdots$ and
$\mu_0 = 0.00199999998\cdots$.
For $\beta = \frac{1}{20} \log \frac{1}{\alpha}$,  
we use the truncated Taylor series expansion 
$f_0 = \mathcal{P}_N {\tilde f}_0$ for $N=255$
where Taylor series
estimates show that $\|f_0^\prime - {\tilde f}_0^\prime \|_{\mathcal{A}} \le 10^{-11}$. 
Calculations, made simple by
use of symbolic language maple, gives bounds
$ \| R_0 \|_{\mathcal{E}} \le 2.52 \times 10^{-8}$ and $\| R_0^\prime \|_{\mathcal{S}} 
\le 8.82 \times 10^{-6}$, $\| A \|_{\mathcal{S}}  \le 5.76$, 
$\| B \|_{\mathcal{E}} \le 4.95$, $\| R \|_{\mathcal{S}} \le 9.0 \times 10^{-6}$. 
With choice of $K=80$, one may check
we obtained $\gamma_{1,1} \ge 0.11$, $\gamma_{2,2} \ge 0.82 $, 
$\gamma_{1,2} \le 0.095 $, $\gamma_{2,1} \le 0.12$ and
$\gamma_f \le 1.15$, implying ${\tilde M} \le 12.0$, and $M \le 15.3$. 
Further, based on quasi-G solution $G_0$ in the appendix for this case, we
found $\epsilon_G:=\| \mathcal{L} G_0 \|_{\mathcal{S}} \le 0.034$ and therefore
from explicit calculations of $\|G_0\|_{\mathcal{E}}$ and
$G_0 (\pi)$ and using Lemma \ref{lemG0} as explained earlier, we conclude
$\|G \|_{\mathcal{E}} \le 29.4$,
$|G (\pi)| \ge 27.3 $,  
Therefore with $\epsilon_0 = 2 \times 10^{-6}$, 
$\| \Phi^{(0)} \|_{\mathcal{E}} \le 7.23\times 10^{-5} =:B_0$. 
It may be checked that 
conditions for applying Proposition \ref{prop2} hold 
when $\epsilon = \frac{3}{20}$ in which case solution $\Phi$ to the weakly nonlinear
problem exists in a ball of size
$M_E = B_0 (1+\epsilon) \le 8.4 \times 10^{-5}$ for any $a_1 \in I$. 
This is the bound $M_E$ in Theorem \ref{Thm1}.
With estimated
$\| G \|_{\mathcal{E}} \le 29.4$ and $| G (\pi) | \ge 27.3$, we also check that
conditions 
\eqref{eqpropcrest2} and \eqref{eqpropcrest1} for 
for application of proposition \ref{propcrest} for specified $\epsilon_0$ and hence
$\mathcal{U}: I \rightarrow I$ is contractive and there exists unique
$a_1$ corresponding to given $\mu$.
The constant $K_3$, estimated from a finite sum of closed form definite integrals,
satisfies $K_3 \le 5.20$ in this case.

\subsection{Case of $\mu = \mu_3:=0.0023$}

In this case, $c_0 = \frac {22865}{20921}$ and we take for $N=255$,
$f_0 = P_N {\tilde f}_0$, where
$P_N$ is the truncation of the Taylor series of ${\tilde f}_0$ about the origin up to and
including $\eta^N$ term, and 
\be 
\label{Q2.0.0.0.a}
{\tilde f}_0 = b_0 + \sum_{j=1}^{37} \frac{b_j}{j} \eta^{j} + 
\sum_{m=1}^6 \lambda_m \gamma_m^{-1} \log \left ( 1 + \gamma_m \eta \right ) \ ,
\ee
where 
${\bf b} = \left (b_0, b_1, \cdots , b_{37} \right ) $ is given by 
\begin{multline}
\small
{\bf b} = 
\left [ -{\frac {63307}{288588}},  
{\frac {8943}{92260}},{\frac {7094}{90177}},{\frac {5467}{140443}},{
\frac {5869}{191402}},{\frac {2551}{147538}},{\frac {3360}{254201}},{
\frac {3609}{449717}},{\frac {2951}{495374}}, \right. \\ \left.
{\frac {5189}{1363594}},{
\frac {1541}{561523}},{\frac {814}{446813}},{\frac {2661}{2084309}},{
\frac {1310}{1498987}},{\frac {465}{781442}},{\frac {409}{979023}},{
\frac {664}{2403679}},\right. \\
\left. {\frac {311}{1569858}},{\frac {431}{3391197}},{
\frac {355}{3825084}},{\frac {109}{1889706}},{\frac {117}{2735110}},{
\frac {104}{4046415}},{\frac {146}{7559227}},{\frac {83}{7419778}},{
\frac {49}{5764323}}, \right. \\
\left. {\frac {48}{10153027}},{\frac {82}{22615125}},{
\frac {25}{12976987}},{\frac {55}{36954884}},{\frac {67}{89304183}},{
\frac {35}{60078712}},{\frac {10}{36228767}},{\frac {5}{23269474}},\right.
\\
\left. 
{\frac {20}{211760067}},{\frac {7}{95298423}},{\frac {5}{170112576}},{
\frac {7}{308113800}}
\right ]
\end{multline}
$\gamma_m$, for $m=1, \cdots, 6$ given by
\be
\label{Q3.1.0.0.a}
\boldsymbol{\gamma} = 
\left [ 
-{\frac {155593}{174744}},-{\frac {119606}{138125}},-{\frac {21931}{
27114}},{\frac {64810}{84503}},{\frac {16039}{18975}},{\frac {154314}{
175871}}
\right ]
\ee
\be
\label{Q3.1.0.0.0.0.a}
\boldsymbol{\lambda} = \left [ 
-{\frac {184732}{341273}},{\frac {10272}{73979}},{\frac {15677}{
156817}},-{\frac {2844}{284831}},-{\frac {1136}{156707}},-{\frac {
48187}{7142266}}
\right ]
\ee
The corresponding $h_0 =0.4347167189\cdots$ and
$\mu_0 = 0.00230000015\cdots$.
For $\beta = \frac{1}{20} \log \frac{1}{\alpha}$, a 
truncated Taylor series expansion of $f_0 = \mathcal{P}_N {\tilde f}_0$ to a degree of $N=255$
gives rise to $\| f_0^\prime - {\tilde f}_0^\prime \|_{\mathcal{A}} \le 10^{-11}$. For
$f_0$ as above,
$ \| R_0 \|_{\mathcal{E}} \le 1.065 \times 10^{-7}$ and $\| R_0^\prime \|_{\mathcal{S}} 
\le 5.33 \times 10^{-6}$, $\| A \|_{\mathcal{S}}  \le 5.10 $, 
$\| B \|_{\mathcal{E}} \le 4.40$. 
Based on this, $ \| R \|_{\mathcal{S}} \le  
5.1 \times 10^{-6} $ 
With choice of $K=80$,
we obtained $\gamma_{1,1} \ge 0.137$, $\gamma_{2,2} \ge 0.84 $, 
$\gamma_{1,2} \le 0.091 $, $\gamma_{2,1} \le 0.11$ and
$\gamma_f \le 1.10$, implying ${\tilde M} \le 8.98$, and $M \le 11.3$. Based on
the quasi-G solution $G_0$ in the appendix for this case, we calculated the
Fourier sine coefficient of
$\mathcal{L} G_0$ and estimated
$\| \mathcal{L} [G_0] \|_{\mathcal{S}} \le 0.0005=:\epsilon_G$. 
Using it in Lemma \ref{lemG0} with
explicit calculation of $\|G_0\|_{\mathcal{E}} $ and $|G_0(\pi)|$, we get  
the bounds $\|G \|_{\mathcal{E}} \le 23.21$,
$|G (\pi)| \ge 21.87 $,  
and therefore with $\epsilon_0 = 4 \times 10^{-6}$, 
$B_0 = \| \Phi^{(0)} \|_{\mathcal{E}} \le 1.5 \times 10^{-4}$. 
Contraction mapping
argument follows for $\epsilon = \frac{1}{10}$ giving rise to a ball size 
$B_0 (1+\epsilon) \le 1.65 \times 10^{-4}$ for any $a_1 \in I$. 
where solution exists for
$\Phi $ to the weakly nonlinear problem. This is bound $M_E$ in Theorem \ref{Thm1}.
To prove that there exists $a_1 \in I$ satisfying constraint (\ref{crestrelation})
we checked that 
both conditions 
\eqref{eqpropcrest2} and \eqref{eqpropcrest1} for 
contraction of $\mathcal{U}: I \rightarrow I$.
were valid. 
The constant $K_3$, estimated from a finite sum of closed form definite integrals,
satisfies $K_3 \le 5.14$ in this case.

\section{Discussion}

We have shown how, through construction of quasi-solutions $(f_0, c_0)$ obtained
through numerical calculations, one can rigorously
and constructively prove existence of water wave solution by turning the
strongly nonlinear problem into a weakly nonlinear analysis.
Thus far, we have only demonstrated this for a small set of $\mu$ in the range
$\left ( 0, \frac{1}{3} \right )$.

The quasi-solution can be determined also with explicit dependence on $\mu$ over 
suitably small intervals of $\mu$ by using small order polynomials in $\mu$ for
coefficients of the rational approximant ${\tilde f}_0$. Proving the residuals
$R_0$ and $R_0^\prime$ is also not difficult since the cosine or sine series
involving $\cos (n \nu)$ or $\sin (n \nu)$
now involve polynomials in $\mu$, which can be expressed as
a Chebyshev basis in scaled $\mu$ variable. An $l^1$ estimate of these Chebyshev
coefficients gives the maximal value of the coefficient 
of $\cos (n \nu)$ or $\sin (n\nu)$.

However, the proof thus far is manageable
(with help of symbolic manipulation language MAPLE) 
for $\mu $ relatively large, which corresponds to modest $h$, where
Stokes original expansion works just as well. Hence we have limited presentations for
small intervals around isolated values of $\mu$. The corresponding wave heights
are somewhat smaller than the critical.
When $\mu \rightarrow 0$, the accuracy needs for quasi-solution becomes
more stringent since the bound $M$ in our method deteriorates. The present rational
function approximation gets taxed to the limit when $\mu$ becomes very small. 
For more efficiency in these cases,
it is better to incorporate local behavior near the crest as was done earlier
in numerical computations \cite{Williams}. Unfortunately, the simple
emprical approximation due to Longuet-Higgins \cite{Longuet-Higgins2} is not accurate
enough to be controlled rigorously. One also needs a closer examination of
of the Nekrasov integral formulation which we suspect will work better for
higher waves than 
the simple minded, though general, series
method employed here.

Nonetheless, what is also interesting in this approach 
is that detailed features of the solution
that are difficult to prove in non-constructive methods
can be obtained with relative ease.
For instance, a crucial
role in the stability of periodic water waves 
is played by the empricial fact that wave speed $c$ 
goes through a maximum close to 
$\mu =\mu_1 = 1.8306\times 10^{-3}$. This can be confirmed
in the following manner. We take two values on either side of $\mu_1$ and compute
$\partial_{a_1} W (\alpha)$, which upto nonlinear correction is given by
$G(\alpha)$. Through a more
accurate representation of quasi-G solution $G_0$ than provided here, 
it can be proved that
$\partial_{a_1} W (\alpha) $ which determines $\frac{d c}{da_1}$ changes in some
interval around $\mu=\mu_1$. Control of the the lower bound of second derivative is also needed
to prove that there is only one such maximum of $c$ in some interval.
In this context, it is interesting to note
that even for our relatively inaccurate 
quasi-G solution $G_0$ for $\mu = \mu_1$, we find $G_0 (\alpha) = 9.97\cdots \times 10^{-6}$,
which is signicantly smaller than 
$4.86\cdots \times 10^{-3}$.
at $\mu=\mu_2$, suggesting that $G(\alpha)$ does change sign for some $\mu$ close to
$\mu_1$.

\section{Acknowledgment}

This research was partially supported by the National Science Foundation (NSF-DMS-1108794).
The author is also deeply indebted to Ovidiu Costin for many collaborations
in other contexts involving the quasi-solution idea. Furthermore, the author
wishes to thank Jerry Bona for his encouragement, 
hospitality and support during the author's sabbatical stay at
UIC.

\section{Appendix}

Here we simply present the quasi-G solution $G_0$ found numerically for different
$\mu$ with coefficients approximated by rationals to 8 digits. It is clear that
if $|G_0 (-1)|$ is sufficiently large, which it is in all the cases presented,
$\mathcal{L} G_0 $ need not be too small to check Proposition \ref{propcrest}.
In all cases, the quasi-solution is taken to be in the form 
\be
\label{eqG0}
G_0 = \sum_{j=0}^{126} g_j \cos (j \nu) 
\ee
\subsection{$G_0$ Fourier cosine coefficents $\mu = \mu_1 := 0.0018306$}
${\bf g} =  \left ( g_0, g_1, g_2, \cdots g_{126} \right ) $ is given by:  
\begin{multline}
\small
\left [ 
{\frac {997}{19434}},1,-{\frac {7874}{3587}},{\frac {3214}{1445}},-{
\frac {6536}{2921}},{\frac {12795}{6056}},-{\frac {6399}{3214}},{
\frac {6783}{3683}},-{\frac {1337}{786}},{\frac {750}{481}},-{\frac {
2585}{1811}},{\frac {3479}{2672}},\right. \\
\left. -{\frac {2210}{1863}},{\frac {1081}{
1002}},-{\frac {2909}{2968}},{\frac {6520}{7329}},-{\frac {4591}{5691}
},{\frac {2598}{3553}},-{\frac {2931}{4426}},{\frac {3030}{5053}},-{
\frac {2033}{3747}},{\frac {1974}{4021}},-{\frac {1689}{3805}},{\frac 
{1146}{2855}},\right .\\ 
\left.-{\frac {859}{2368}},{\frac {3085}{9409}},-{\frac {598}{
2019}},{\frac {1136}{4245}},-{\frac {1558}{6447}},{\frac {1937}{8874}}
,-{\frac {617}{3131}},{\frac {773}{4344}},-{\frac {5259}{32743}},{
\frac {724}{4993}},-{\frac {1042}{7963}},{\frac {1413}{11963}},\right. \\
\left. -{
\frac {5905}{55408}},{\frac {1302}{13537}},-{\frac {692}{7975}},{
\frac {823}{10511}},-{\frac {1574}{22285}},{\frac {341}{5351}},-{
\frac {898}{15623}},{\frac {457}{8813}},-{\frac {845}{18068}},{\frac {
695}{16474}},-{\frac {637}{16743}},\right.\\
\left. {\frac {578}{16843}},-{\frac {413}{
13346}},{\frac {146}{5231}},-{\frac {113}{4490}},{\frac {257}{11323}},
-{\frac {169}{8258}},{\frac {233}{12625}},-{\frac {851}{51143}},{
\frac {572}{38121}},-{\frac {279}{20624}},{\frac {428}{35087}},\right.\\
\left. -{
\frac {699}{63562}},{\frac {133}{13413}},-{\frac {175}{19577}},{\frac 
{257}{31887}},-{\frac {263}{36198}},{\frac {73}{11144}},-{\frac {687}{
116342}},{\frac {285}{53534}},-{\frac {181}{37717}},{\frac {219}{50620
}},\right. \\
\left. -{\frac {279}{71543}},{\frac {106}{30151}},-{\frac {473}{149263}},{
\frac {109}{38156}},-{\frac {41}{15923}},{\frac {118}{50837}},-{\frac 
{111}{53056}},{\frac {125}{66281}},-{\frac {109}{64125}},{\frac {394}{
257143}},\right .\\
\left. -{\frac {121}{87618}},{\frac {208}{167093}},-{\frac {19}{
16935}},{\frac {70}{69219}},-{\frac {99}{108619}},{\frac {17}{20693}},
-{\frac {49}{66179}},{\frac {194}{290695}},-{\frac {75}{124696}},{
\frac {41}{75630}},\right . 
\\
\left. -{\frac {50}{102339}},{\frac {19}{43147}},-{\frac {
23}{57955}},{\frac {93}{260003}},-{\frac {57}{176824}},{\frac {104}{
357963}},-{\frac {37}{141313}},{\frac {31}{131367}},-{\frac {38}{
178685}},{\frac {302}{1575655}},\right .\\
\left. -{\frac {13}{75263}},{\frac {38}{
244105}},-{\frac {32}{228103}},
{\frac {20}{158187}},-{\frac {47}{
412506}},{\frac {23}{223988}},-{\frac {52}{561947}},{\frac {13}{155885
}},-{\frac {11}{146370}},
{\frac {32}{472479}}, \right .\\ 
\left. -{\frac {28}{458767}},{
\frac {15}{272711}},-{\frac {9}{181576}},{\frac {17}{380580}},-{\frac 
{37}{919192}},{\frac {58}{1598885}},-{\frac {27}{825968}},{\frac {14}{
475243}},-{\frac {7}{263693}},{\frac {19}{794229}},\right. \\
\left. -{\frac {31}{
1438034}},{\frac {10}{514757}},-{\frac {11}{628366}},{\frac {7}{443727
}},-{\frac {4}{281383}},{\frac {11}{858680}},-{\frac {25}{2165713}},{
\frac {13}{1249706}},-{\frac {7}{746770}}
\right ]
\end{multline}

\subsection{$G_0$ Fourier cosine coefficients for  $\mu = \mu_2:=0.002$}

In this case, ${\bf g} = \left ( g_0, g_1, \cdots, g_{126} \right )$ given by
\begin{multline} 
{\bf g} = 
\left [ 
{\frac {975}{83047}},1,-{\frac {2174}{1053}},{\frac {6185}{3017}},-{
\frac {4355}{2136}},{\frac {14321}{7532}},-{\frac {1807}{1018}},{
\frac {7847}{4829}},-{\frac {3627}{2438}},{\frac {2457}{1819}},-{
\frac {3787}{3089}},\right.\\
\left.{\frac {9427}{8507}},-{\frac {4444}{4439}},{\frac 
{1397}{1548}},-{\frac {2770}{3407}},{\frac {2254}{3081}},-{\frac {
10127}{15393}},{\frac {9302}{15733}},-{\frac {5913}{11135}},{\frac {
1027}{2154}},-{\frac {1936}{4525}},{\frac {719}{1873}},\right. \\
\left. -{\frac {768}{
2231}},{\frac {20423}{66159}},-{\frac {11843}{42804}},{\frac {466}{
1879}},-{\frac {843}{3794}},{\frac {20679}{103865}},-{\frac {3268}{
18327}},{\frac {1486}{9303}},-{\frac {1340}{9369}},{\frac {793}{6191}}
,-{\frac {319}{2782}},\right. \\ 
\left.{\frac {5586}{54407}},-{\frac {1289}{14027}},{
\frac {2848}{34619}},-{\frac {1552}{21081}},{\frac {886}{13445}},-{
\frac {123}{2086}},{\frac {572}{10839}},-{\frac {289}{6121}},{\frac {
587}{13893}},-{\frac {471}{12461}},{\frac {298}{8811}},\right. \\ 
\left.-{\frac {242}{
7999}},{\frac {688}{25417}},-{\frac {1083}{44731}},{\frac {977}{45105}
},-{\frac {799}{41243}},{\frac {445}{25677}},-{\frac {272}{17549}},{
\frac {119}{8583}},-{\frac {293}{23631}},{\frac {371}{33452}},-{\frac 
{501}{50516}},\right.\\
\left. {\frac {56}{6313}},-{\frac {173}{21810}},{\frac {396}{
55819}},-{\frac {247}{38937}},{\frac {395}{69624}},-{\frac {213}{41989
}},{\frac {163}{35930}},-{\frac {58}{14299}},{\frac {113}{31152}},-{
\frac {180}{55501}},{\frac {317}{109303}},\right. \\
\left. -{\frac {56}{21597}},{\frac 
{95}{40972}},-{\frac {553}{266767}},{\frac {118}{63659}},-{\frac {39}{
23534}},{\frac {50}{33743}},-{\frac {187}{141162}},{\frac {93}{78515}}
,-{\frac {229}{216260}},{\frac {46}{48585}},\right. \\ 
\left. -{\frac {53}{62618}},{
\frac {59}{77963}},-{\frac {79}{116775}},{\frac {51}{84317}},-{\frac {
65}{120213}},{\frac {169}{349587}},-{\frac {40}{92561}},{\frac {207}{
535768}},-{\frac {64}{185307}},{\frac {83}{268804}},\right. \\
\left. -{\frac {417}{
1510792}},{\frac {33}{133732}},-{\frac {24}{108805}},{\frac {51}{
258623}},-{\frac {29}{164519}},{\frac {25}{158644}},-{\frac {5}{35496}
},{\frac {29}{230292}},-{\frac {20}{177681}},{\frac {131}{1301845}},\right. \\
\left. -{
\frac {35}{389126}},{\frac {22}{273607}},-{\frac {10}{139137}},{\frac 
{29}{451365}},-{\frac {13}{226368}},{\frac {15}{292183}},-{\frac {42}{
915289}},{\frac {19}{463190}},-{\frac {76}{2072849}},\right . \\ 
\left. {\frac {10}{
305109}},-{\frac {17}{580304}},{\frac {35}{1336533}},-{\frac {9}{
384511}},{\frac {5}{238971}},-{\frac {17}{909037}},{\frac {13}{777658}
},-{\frac {8}{535421}},{\frac {13}{973341}},\right. \\ 
\left. -{\frac {5}{418846}},{
\frac {10}{937139}},-{\frac {6}{629101}},{\frac {5}{586492}},-{\frac {
27}{3543425}},{\frac {8}{1174561}},-{\frac {6}{985613}},{\frac {8}{
1470193}},-{\frac {5}{1028078}}, \right. \\
\left. {\frac {10}{2300319}},-{\frac {3}{
772117}},{\frac {5}{1439682}},-{\frac {2}{644321}}
\right ] 
\end{multline}

\subsection{Fourier cosine coefficients for $G_0$ in case $\mu = \mu_3:= 0.0023$}

In this case, 
${\bf g} =  \left ( g_0, g_1, g_2, \cdots g_{126} \right ) $ is given by  
\begin{multline}
\small
\left [ 
-{\frac {715}{18432}},1,-{\frac {4723}{2490}},{\frac {5442}{2981}},-{
\frac {4351}{2441}},{\frac {1839}{1129}},-{\frac {1412}{943}},{\frac {
1459}{1083}},-{\frac {39833}{32774}},{\frac {4286}{3947}},-{\frac {
5351}{5508}},\right. \\ 
\left. {\frac {2569}{2971}},-{\frac {1109}{1440}},{\frac {1841}{
2692}},-{\frac {3146}{5179}},{\frac {3741}{6946}},-{\frac {2436}{5101}
},{\frac {3291}{7781}},-{\frac {3812}{10177}},{\frac {1671}{5041}},-{
\frac {2186}{7453}},{\frac {1475}{5686}},\right. \\ 
\left. -{\frac {954}{4159}},{\frac {
3809}{18784}},-{\frac {883}{4927}},{\frac {1343}{8480}},-{\frac {557}{
3981}},{\frac {155}{1254}},-{\frac {517}{4736}},{\frac {683}{7084}},-{
\frac {743}{8728}},{\frac {1389}{18478}},-{\frac {357}{5380}},\right. \\
\left. {\frac {
447}{7630}},-{\frac {9247}{178837}},{\frac {7591}{166312}},-{\frac {
1419}{35230}},{\frac {1310}{36849}},-{\frac {333}{10616}},{\frac {494}
{17845}},-{\frac {309}{12652}},{\frac {1118}{51875}},-{\frac {520}{
27351}},\right. \\ 
\left. {\frac {493}{29388}},-{\frac {416}{28113}},{\frac {200}{15319}
},-{\frac {289}{25097}},{\frac {231}{22738}},-{\frac {163}{18192}},{
\frac {721}{91216}},-{\frac {11}{1578}},{\frac {538}{87491}},-{\frac {
215}{39648}},\right. \\ 
\left. {\frac {133}{27805}},-{\frac {214}{50735}},{\frac {272}{
73109}},-{\frac {50}{15241}},{\frac {57}{19699}},-{\frac {305}{119544}
},{\frac {149}{66215}},-{\frac {125}{63002}},{\frac {77}{44004}},-{
\frac {35}{22686}}, \right. \\
\left. {\frac {155}{113918}},-{\frac {258}{215071}},{
\frac {97}{91689}},-{\frac {63}{67546}},{\frac {292}{355007}},-{\frac 
{175}{241333}},{\frac {69}{107903}},-{\frac {57}{101110}},{\frac {65}{
130752}},\right. \\ 
\left. -{\frac {110}{250999}},{\frac {179}{463188}},-{\frac {3}{8806
}},{\frac {40}{133153}},-{\frac {62}{234123}},{\frac {15}{64237}},-{
\frac {65}{315774}},{\frac {353}{1944853}},-{\frac {25}{156253}},\right. \\
\left. 
{\frac {86}{609597}},-{\frac {25}{201033}},{\frac {29}{264477}},-{
\frac {31}{320731}},{\frac {20}{234681}},-{\frac {359}{4779007}},{
\frac {16}{241567}},-{\frac {17}{291184}},{\frac {64}{1243307}},\right.
\\
\left. -{
\frac {17}{374674}},{\frac {32}{799907}},-{\frac {14}{397035}},{\frac 
{53}{1704772}},-{\frac {15}{547391}},{\frac {73}{3021506}},-{\frac {49
}{2301003}},{\frac {4}{213049}},-{\frac {20}{1208577}},\right.
\\ \left. {\frac {12}{
822485}},-{\frac {19}{1477506}},{\frac {10}{882027}},-{\frac {21}{
2101522}},{\frac {11}{1248582}},-{\frac {7}{901486}},{\frac {13}{
1898967}},-{\frac {35}{5800724}},{\frac {6}{1127929}},\right. \\
\left. -{\frac {7}{
1493042}},{\frac {1}{241932}},-{\frac {5}{1372493}},{\frac {5}{1556798
}},-{\frac {7}{2472921}},{\frac {8}{3205731}},-{\frac {7}{3182644}},{
\frac {6}{3094345}},\right. \\
\left. -{\frac {4}{2340633}},{\frac {1}{663748}},-{\frac 
{1}{753116}},{\frac {4}{3417071}},-{\frac {3}{2907881}},{\frac {1}{
1099486}},-{\frac {4}{4990151}},{\frac {3}{4245329}},\right. 
\\ 
\left. -{\frac {1}{
1605671}},{\frac {2}{3642713}},-{\frac {2}{4133271}}
\right ]
\end{multline}

\vfill \eject
\end{document}